\documentclass[aoas]{imsart}

\RequirePackage{amsthm,amsmath,amsfonts,amssymb}
\RequirePackage[authoryear]{natbib}
\usepackage{xcolor}
\RequirePackage{graphicx}

\RequirePackage{algpseudocode}
\RequirePackage{enumerate}
\RequirePackage{url}
\RequirePackage{algorithm}
\RequirePackage{subfigure}
\RequirePackage{tabularx} 
\RequirePackage{threeparttable}
\RequirePackage{microtype}
\RequirePackage{array}
\RequirePackage{mathrsfs}
\RequirePackage{bm}
\RequirePackage{multirow}
\RequirePackage{cite}
\RequirePackage{booktabs}
\RequirePackage{float}
\usepackage{arydshln}

\startlocaldefs
\theoremstyle{plain}

\newtheorem{proposition}{Proposition}
\newtheorem{theorem}{Theorem}[section]
\newtheorem{lemma}[theorem]{Lemma}
\newtheorem{corollary}{Corollary}
\newcommand{\tabincell}[2]{\begin{tabular}{@{}#1@{}}#2\end{tabular}}

\theoremstyle{remark}


\endlocaldefs

\begin{document}

\begin{frontmatter}
	\title{Applied Regression Analysis of Correlations for Correlated Data}
	\runtitle{Regression Analysis of Correlations}
	
	\begin{aug}
		\author[A]{\fnms{Jie} \snm{Hu}\ead[label=e1]{hujie@mail.ustc.edu.cn}},
		\author[A]{\fnms{Yu} \snm{Chen}\ead[label=e2,mark]{cyu@ustc.edu.cn}},
		\author[B]{\fnms{Chenlei} \snm{Leng}\ead[label=e3,mark]{C.Leng@warwick.ac.uk}},
		\and
		\author[C]{\fnms{Cheng Yong} \snm{Tang}\ead[label=e4,mark]{yongtang@temple.edu}}
		\address[A]{International institute of  Finance, School of Management, 
			University of Science and Technology of China,
			\printead{e1,e2}}

		\address[B]{Department of Statistics, 
			University of Warwick,
			\printead{e3}}
		
		\address[C]{Department of Statistics, Operations, and Data Science, 
			Temple University,
			\printead{e4}}
	\end{aug}
	
	\begin{abstract}
		Correlated data are ubiquitous in today's data-driven society. While regression models for analyzing means and variances of responses of interest are relatively well-developed, the development of these models for analyzing the correlations is largely confined to longitudinal data, a special form of  sequentially correlated data. This paper proposes a new method for the analysis of correlations
		to fully exploit the use of covariates for general correlated data. In a renewed analysis of the Classroom data, a highly unbalanced multilevel clustered data with within-class and within-school correlations, our method  reveals  informative insights on these structures not previously known. In another analysis of the malaria immune response data in Benin, a longitudinal study with time-dependent covariates where the exact times of the observations are not available, our approach again provides promising new results. At the heart of our approach is a new generalized z-transformation that converts correlation matrices constrained to be positive definite to vectors with unrestricted support, and is order-invariant. These two properties enable us to develop regression analysis incorporating covariates for the modelling of correlations via the use of maximum likelihood.

	\end{abstract}
	
	\begin{keyword}
		\kwd{Correlogram}
		\kwd{Correlated data analysis}
		\kwd{Correlation matrix}
		\kwd{Generalized z-transformation}
		\kwd{Regression modeling}
		\kwd{Testing correlation structures}
	\end{keyword}
	
\end{frontmatter}

\section{Introduction}
\label{sec:intro}

\subsection{Background} 
\label{sec:background}
Correlated data arise in a variety of forms in many applied fields including epidemiology, social science, biology, public health, psychology, and economics. This paper is motivated by the insufficiency of existing approaches for analyzing two such datasets. The first, as detailed in Section \ref{dataexample}, is a highly unbalanced multilevel clustered classroom dataset with within-class and within-school correlations, together with subject-related covariates such as race, gender, socioeconomic status, and more. The second, with full detail provided in Section S.4  of the Supplementary Material, comes from a longitudinal study where multiple sequential observations are collected on individuals over time, together with a collection of associated covariates. The observations in these two examples are not independent. 
Understanding, accounting for, modelling, and utilizing the correlations in datasets like these are fundamental for valid inference in statistical analysis. Equally importantly, there are numerous occasions where correlations can be of central scientific interest to draw insightful inferential conclusions. Regression type of models that can fully exploit covariate information for the analysis of correlations however are underdeveloped.

For analyzing clustered data, 
one of the most celebrated approaches is the mixed-effects model approach \citep{Laird:Ware:1982} 
that incorporates random effects, typically  additive to the fixed effects. 
As a result, the marginal variance of the response scales quadratically with the random effects; and the between-observations correlations are also implicitly determined.  
This intrinsic tie between the marginal variances and the correlations may restrict the capacity to adequately accommodate a broad range of correlation structures. 
While mixed-effects models are suitable for fitting categorical random effects such as grouping variables, they have limited ability to incorporate heterogeneity in variance caused by continuous random effects. 
In a great many applications, it is more reasonable to allow 
for the investigation of these components -- mean, variance, and correlation -- separately; see our informative comparisons with  the mixed-effects model 
in a real data analysis in Section  \ref{dataexample}.  

While the versatility of the mixed-effects model has been increasingly recognized in many applied fields,  
there are several challenging issues, including specifying and fitting complex correlation models,  identifying the number of parameters involved, determining the degrees of freedom for various statistics, and quantifying their asymptotic distributions under the null hypothesis  when the parameter lies on the boundary of its support \citep[cf.]{Muller:2013,Barr:2013,Bates:2015}.
This sometimes gives rise to confusing, and even contradictory recommendations about what tests to use, and how and when, not only by applied workers but also by professional statisticians \citep{GLMM:FAQ, Luke:2017}.

In the area of analyzing longitudinal data, numerous  approaches have been developed for modeling the covariances; see the monograph \citet{Fitzmaurice2008}, among the abundant literature. 
Related to the effort of this paper for incorporating covariates, there is a line of research on explicitly and parsimoniously modelling the covariances with an unconstrained parametrization via  Cholesky-type decompositions. 
The recipe is to  transform the restricted supports of the covariance matrices to unrestricted ones; then regression modelling approaches are exploited by utilizing the transformed parameters; see   \citet{Pourah:1999, Pourah:2000, Pan:2003, Ye:2006, Leng:2010, Zhang:2012}.
For explicit regression analysis of correlations in longitudinal data,   \citet{Zhang:2015} explored the use of hyperspherical coordinates to parametrize a correlation matrix via angles.  
Since these Cholesky-type decompositions are tied to the known ordering of the observations, they are  unfortunately unsuitable for general correlated data such as clustered data where no ordering information is available. Indeed, as demonstrated in  our real data example in Section S.4,  missing or erroneous ordering  of the longitudinal data may lead to inadequate model fitting
for the approaches of this type.

\subsection{Our study}

Naturally, a statistician's take on modeling the correlations is to apply a regression technique accounting for the potential contributions from covariates. 
However, there are several major interrelated challenges to doing so:  
\begin{itemize}
	\item[1.] The matrix containing  all the pairwise correlations is inherently positive definite and has ones on its main diagonal; thus the parameters lie in a highly constrained space;
	\item[2.]  The number of correlations is related to  that of the observations in each subject/cluster;  as a result, the correlation matrices for all the subjects/clusters have different dimensions when the data are unbalanced;
	\item[3.]  Correlated data  are not necessarily ordered. 
\end{itemize}

To meet the challenges, this paper makes a dedicated effort to propose and establish a novel,  simple,  flexible, unified  inferential tool for applied regression  modelling of correlations for general correlated data, regardless of whether they are ordered.  
As a regression model,  our approach relates the entries in correlation matrices to any covariates via a new unrestricted parametrization of parameters and thus simultaneously addresses the three aforementioned challenges.  
Analogous to extending the support of the correlation coefficient from the unit interval to the whole real line, the new parametrization extends the support of a correlation matrix from a restricted space to an unrestricted one.  
A remarkable advantage of the new parametrization is its order-invariance: re-ordering the variables in the correlation matrix results in the same re-ordering of the components in the new parametrization. 
The two merits -- unrestricted support and order-invariance -- make the new parametrization an ideal device for modelling the correlation structures of generic correlated data. 

This new framework enables unparalleled convenience in applications for correlation model building, as we establish new tools for both exploratory analysis (see the graphical tool  in Section \ref{sec:corro}) and statistical inference.   It not only provides a framework  for multiple layers of random effects, similar to that of various mixed-effects models - see  Section \ref{sec:cor},  but also offers opportunities for new discoveries, as we have demonstrated through our data examples.  In Section \ref{dataexample} for analyzing the classroom data of \citet{hill2005effects},  which aims at finding factors affecting the math performance of elementary school students,  our study reveals interesting new insights on the between-students correlations:  there is significant evidence that  the co-movement of their performances is affected by their social-economic status, besides the benchmarking scores of their earlier performances.   We believe that this finding provides a new and fresh perspective on understanding the impact of different factors on the progress of students' math performance.  Apart from clustered data, our framework provides competitive, sometimes much better, performance when it comes to analyzing ordered observations, such as longitudinal data,  as we have demonstrated by  real data analysis and  simulation studies presented in  the Supplementary Material. In particular,  our approach is competitive to existing regression tools using Cholesky-type decompositions when the ordering of the data is properly incorporated, while it substantially outperforms them when the ordering is not available or erroneous.

Our framework conveniently bridges clustered data analysis and the golden standard statistical inference built upon likelihood.  
Our approach to model correlations via an explicit regression model is rooted in the likelihood functions where the regularity conditions are easily satisfied in our setting, eliminating any ambiguity in specifying quantities such as the limiting distributions of  test statistics and their degrees of freedom. 
For instance, the approach proposed in this paper makes the likelihood ratio test  extremely easy and accessible in applications, as we have demonstrated in  the real data example in Section \ref{dataexample}. 
We remark further that although only linear models for correlations are studied in this paper,  our framework allows viable extensions to generalized linear models,  semi-parametric and non-parametric models.   Therefore, we advocate our framework for its great flexibility and substantial potential as an applied tool.  

The rest of the paper is organized as follows. Section  \ref{sec:meth} introduces the generalized z-transformation, outlines the correlation model and compares it to the mixed-effects model, presents maximum likelihood for parameter estimation, and provides the theoretical justification. We present   two data analyses in Section \ref{sec:data} to illustrate the advantages of our framework over mixed-effects models when observations are not ordered, and over Cholesky-type decompositions when analyzing ordered data. A brief conclusion is made in Section \ref{conc}. All technical details, one more real data analysis, and simulation examples are found in the Supplementary Material.

\section{Methodology} \label{sec:meth}

\subsection{Generalized z-transformation}

We introduce some notation first. For a symmetric matrix $\mathbf{A}\in \mathbb{R}^{m \times m}$, the operator $\operatorname{vecl}(\mathbf{A}) \in \mathbb{R}^{m \times (m-1)/2}$ stacks the lower off-diagonal elements of $\mathbf{A}$ into a vector. The operator $\operatorname{diag}(\cdot)$ is used in two ways. When applied to a vector $\bm{v}=(v_1, \ldots, v_m)^{\prime} \in \mathbb{R}^m$, $\operatorname{diag}(\bm{v})$ becomes a diagonal matrix with diagonal terms being $v_1, \ldots, v_m$. When applied to $\mathbf{A}$, $\operatorname{diag}(\mathbf{A})$ extracts the diagonals of $\mathbf{A}$ to return a length-$m$ vector.   We use $e^{\mathbf{A}}$ to represent the matrix exponential of $\mathbf{A}$, and use $\log \mathbf{A}$ to denote the matrix logarithm of $\mathbf{A}$, assuming $\mathbf{A}$ is positive definite, defined respectively as $e^\mathbf{A}= \mathbf{Q} \operatorname{diag} (e^{\lambda_1}, \ldots, e^{\lambda_m})\mathbf{Q}^{\prime}$ and $\log \mathbf{A}= \mathbf{Q} \operatorname{diag}(\log \lambda_1, ..., \log \lambda_m) \mathbf{Q}^{\prime}$, where $\mathbf{A}=\mathbf{Q} \operatorname{diag}(\lambda_1, \ldots, \lambda_m) \mathbf{Q}^{\prime}$ is the eigen-decomposition of $\mathbf{A}$, with $\mathbf{Q}$ being an orthonormal matrix.

For a correlation $\rho \in (-1, 1)$, the Fisher's z-transformation is  $z=\frac{1}{2}\log \frac{1+\rho}{1-\rho} \in \mathbb{R}$,  transforming a restricted parameter to an unrestricted one.  
\citet{Archakov:2020} recently discovered a matrix operation in the same spirit,  transforming the restricted support of a correlation matrix to an unrestricted one.  The transformation defines a mapping $f$ from  a correlation matrix $\mathbf{R}\in \mathbb{R}^{m \times m}$ to an $m(m-1)/2$-dimensional vector denoted as $\bm{\gamma}$  via
\begin{align}\label{eq:ft}
	\bm{\gamma}=f(\mathbf{R})=\operatorname{vecl} (\log \mathbf{R}).
\end{align} 
Hereafter, we  refer to this transformation as generalized z-transformation.
It has the following  remarkable properties.
\begin{itemize}
	\item[(a)]\textbf{One-to-one mapping between $\mathbf{R}$ and $\bm{\gamma}=f(\mathbf{R})$}.  \citet{Archakov:2020} show that
	for any real symmetric matrix $\mathbf{G} \in \mathbb{R}^{m \times m}$, there exists a unique vector $\mathbf{x}^{*} \in \mathbb{R}^{m}$, such
	that $e^{\mathbf{G}[\mathbf{x}^{*}]}$ is a correlation matrix, where $\mathbf{G}[\mathbf{x}^{*}]$ denotes the matrix $\mathbf{G}$ with $\mathbf{x}^*$ replacing its diagonal.
	This ensures the existence and uniqueness of the inverse mapping in \eqref{eq:ft}. 	To find $\mathbf{x}^{*}$,  \citet{Archakov:2020}  show that
	the sequence 
	$
	\mathbf{x}_{(k)}=\mathbf{x}_{(k-1)}-\log \operatorname{diag}\left(e^{\mathbf{G}[\mathbf{x}_{(k-1)}]}\right)$ 
	converges to $\mathbf{x}^{*}$ as $k\to \infty$
	with arbitrary $\mathbf{x}_{(0)} \in \mathbb{R}^{m}$.
	
	This suggests that to find $\mathbf{R}= f^{-1}(\bm{\gamma})$ from a  $\bm{\gamma}$, whose support is contraint-free, one starts from a symmetric matrix $\mathbf{G}$ = $\operatorname{vecl}^{-1}(\bm{\gamma})$  with arbitrary diagonals. Then upon determining 
	$\mathbf{x}^{*}$ with the above algorithm,  the corresponding correlation matrix is $\mathbf{R}=e^{\mathbf{G}[\mathbf{x}^{*}]}$. 
	
	\item[(b)]\textbf{Order-invariance}.  
	Let  $\bm{y}=(y_1,\dots,y_m)'$ and $\bm{x}=(x_1,\dots,x_m)'$ be random vectors such that $\bm{y}=\mathbf{P} \bm{x} $ where $\mathbf{P}$ is a permutation matrix.    The correlations of these two vectors satisfy 
	$\operatorname{corr}(\bm{y})=\mathbf{R}_{y}=\mathbf{P}\operatorname{corr}(\bm{x})\mathbf{P}'=\mathbf{P} \mathbf{R}_{x} \mathbf{P}'$.   Hence, the correlations are order-invariant in the sense that if the $i$th and $j$th components of $\bm{y}$ are the  $k$th and $l$th components of $\bm{x}$ before the permutation, then
	$\operatorname{corr}(y_i,y_j)=\operatorname{corr}(x_k,x_l)$. 
	Following simple calculations,  the corresponding generalized z-transformations, in their matrix forms, are also order-invariant since 
	$\log (\mathbf{R}_{y} ) =  \mathbf{P}  \log(\mathbf{R}_{x}) \mathbf{P}' $.

\end{itemize}

\subsection{Parsimonious modelling of the correlation matrix }\label{sec:cor}

We assume 
$n$ generic groups of dependent data,   each consisting of $m_i$ observations for the $i$th group $(i=1, \ldots, n)$.  The correlation matrix of each group is 
$\mathbf{R}_i=(\rho_{ijk})$, and its generalized z-transformation leads to  $\bm{\gamma}_i=(\gamma_{ijk})$  $(i=1, \ldots, n; 1\le k<j\le m_i)$. Let $y_{ij}$ be the $j$th observation of group $i$ associated with covariate $\mathbf{x}_{ij}$; then given $\mathbf{x}_{ij}$ and $\mathbf{x}_{ik}$, ${\rho}_{ijk}=\text{corr}(y_{ij},y_{ik})$.
Our proposal for the correlation model is simply
\begin{align} \label{eq:cm}
	\gamma_{i j k}=\mathbf{w}_{i j k}^{\prime} \bm{\alpha},
\end{align}
where $\bm{\alpha}$ is an unknown parameter which will be referred to as the matrix log-correlation parameter, 
and $\mathbf{w}_{i j k}\in {\mathbb R}^d$ are observations 
or constructions of covariates associated with $y_j$ and $y_k$ that used to model the correlation. For example, a reasonable choice of constructing $\mathbf{w}_{i j k}$ is to take the difference of the covariates at observations $j$ and $k$ for subject $i$. This ensures that the resulting correlation matrix is stationary if the covariate is time. 
For more general continuous covariates other than time, this choice is advantageous in avoiding concerns when extrapolation is needed.  Our correlation model enables  a highly parsimonious device  -- one parameter $\bm{\alpha}$ accounts for $n$ correlation matrices of arbitrary sizes for modelling general dependent data.

We would like to highlight that the linearity in  \eqref{eq:cm} is inspired by and analogous to similar ideas in many widely used statistical models, most notably the generalized linear model. In the latter, it is achieved by relating the mean via a link function to a linear function of the covariates. In ours, linearity is attained via a matrix-log transformation.

Of course, the construction of $\mathbf{w}_{ijk}$ is flexible and can be tailor-made to suit  practical scenarios; a few examples are provided below and more can be found in Section \ref{sec:data}.
In some of these examples, with some abuse of notations, we use the indices $i, j$ and $k$ differently from those in \eqref{eq:cm} in to highlight the structures of intended correlated data under discussion.

\textit{Example 1}:
The device in \eqref{eq:cm} can conveniently handle the correlation structure of temporally dependent data.   For example, for modelling longitudinal data, the covariates $\mathbf{w}_{ijk}$ can be simply taken from the $j$th and $k$th measurements of the $i$th subject ($i\in\{1,\dots, n\}; j,k\in\{1,\dots, m_i\}$) including their observed time $t_{ij}$ and $t_{ik}$ and other time-dependent variables. In this case, \eqref{eq:cm}  can be applied in a similar spirit as in \citet{Pourah:1999} and \citet{Zhang:2015}. 	

\textit{Example 2}: 
For clustered data with nested random effects, a typical mixed-effects model can be denoted as
$
y_{ijk}=g(\mathbf{x}_{ijk})+u_{i}+v_{ij}+\epsilon_{ijk}$,
where three indices $i, j$ and $k$ are needed for the notation to maintain the usual interpretations of the indices in the context of mixed-effects models.
Here $g(\mathbf{x}_{ijk})$ is the fixed effects properly specified as a function of covariates $\mathbf{x}_{ijk}$. The additive random errors are independent and satisfy  $u_i\sim N(0,\sigma^2_u)$, $v_{ij}\sim N(0,\sigma^2_v)$,  and $\epsilon_{ijk}\sim N(0,\sigma^2)$.
Then $\operatorname{cov}(y_{ijk}, y_{ijk'})=\sigma_{u}^2+\sigma_v^2$ for two observations sharing two common random effects, and $\operatorname{cov}(y_{ijk}, y_{ij'k'})=\sigma_u^2$ for two observations sharing one common random effect. All the other cases are uncorrelated.

The correlation structure from this mixed-effects model can be modelled via the device in  \eqref{eq:cm} by taking $\mathbf{w}_{i j k}=(w_{ijk,0}, w_{ijk,1})^\prime$ where $w_{ijk,0}=1$ corresponds to the intercept coefficient, and
$w_{ijk,1}=1$ if  the $j$th and the $k$th observations are in the same group sharing two common random effects, and $w_{ijk,1}=0$  otherwise. By construction,
$\gamma_{i j k}=\alpha_{0}+\alpha_{1}$ or $\gamma_{i j k}=\alpha_{0}$. 

\textit{Example 3}:  As a generalization of Example 2,
one may consider  $v_{ij_l}\sim N(0,\sigma_{v_l}^2), l=1, \ldots, L$  in 
$
y_{ij_lk}=g(\mathbf{x}_{ij_lk})+
u_{i}+v_{ij_l}+\epsilon_{ijk}
$ to allow heterogeneous random effects in $v$.
Then $\operatorname{cov}(y_{ij_l k}, y_{ij_lk'})=\sigma_{u}^2+\sigma_{v_l}^2$ for two observations sharing two common random effects, and other covariances remain the same as in Example 2. 

The correlation structure from this mixed-effects model can be modelled via the device in  \eqref{eq:cm} by taking $\mathbf{w}_{i j k}=(w_{ijk,0}, w_{ijk,1}, \ldots, w_{ijk,L})^\prime$ where $w_{ijk,0}=1$ is the intercept, and $w_{ijk,l}$ is the dummy variable defined as $w_{ijk,l}=1$ if  the $j$th and the $k$th observations are in the same group sharing the same $v_{ij_l}$ and $w_{ijk,l}=0$  otherwise.

\textit{Example 4}: For multiple-level  clustered data with $L+1$ levels,  a linear mixed-effects model is
\[
y_{i_0i_1\ldots i_L}=g(\mathbf{x}_{i_0i_1\ldots i_L})+
u_{i_0}+u_{i_0i_1}+\ldots+\epsilon_{i_0i_1\ldots i_L}\]
where additive random effects are independent, satisfying $u_{i_0\ldots i_l}\sim N(0,\sigma^2_{u_l})$ for $l=0, \ldots, L-1$.
Then $\operatorname{cov}(y_{i_0\ldots i_li_{l+1}\ldots i_L}, y_{i_0\ldots i_li'_{l+1}\ldots i'_L})=\sum_{i=0}^{l}\sigma_{u_l}^2$ for two observations sharing common $l+1$   levels of random effects.

Analogously, by appropriate coding  in \eqref{eq:cm},   one can set $\gamma_{i j k}=\sum_{i=0}^{l} \alpha_{l}$ for two observations sharing common $l+1$   levels of random effects.

\textit{Example 5}: In crossed random-effects models, the units at the same level of a  hierarchy are simultaneously classified by more than one factor. For example, a two-level additive variance components model can be denoted as
$
y_{i(jk)}=g(\mathbf{x}_{i(jk)})+
u_{j}+v_{k}+\epsilon_{i(jk)}$, 
where additive random effects are independent and satisfy $u_{j}\sim N(0,\sigma^2_{u})$, $v_{k}\sim N(0,\sigma^2_{v})$ and $\epsilon_{i(jk)}\sim N(0,\sigma^2)$. Then $\operatorname{cov}(y_{i(jk)}, y'_{i(jk)})=\sigma_{u}^2+\sigma_{v}^2$, $\operatorname{cov}(y_{i(jk)}, y'_{i(jk')})=\sigma_{u}^2 $ and $\operatorname{cov}(y_{i(jk)}, y'_{i(j'k)})=\sigma_{v}^2 $.

The  correlation structure from this mixed-effects model can be modelled via the device in  \eqref{eq:cm} by taking $\mathbf{w}_{i j k}=(w_{ijk,0}, w_{ijk,1}, w_{ijk,2})^\prime$ where $w_{ijk,0}=1$ is the intercept. The two dummy variables are defined as $w_{ijk,1}=1$ if the two individuals share the same $u$ effect and $w_{ijk,1}=0$  otherwise, and $w_{ijk,2}=1$ if the two individuals share the same $v$ effect and $w_{ijk,2}=0$  otherwise.

Examples 2-5 belong to a class of blocked correlation matrices {arising from various  clustered data structures.  
	We remark that although our approach can characterize block-structured correlation matrices, the resulting models are not necessarily equivalent to the linear mixed effect models; we refer to Section \ref{dataexample} for an example. 
	Therefore, caution should be exercised when making  comparisons between the two classes of approaches, and we recommend that researchers be mindful of this distinction.

	In the full generality of the proposed framework, the form of the correlation model in \eqref{eq:cm} can be of ANOVA or ANCOVA type, similar to analogous mean models, to deal with categorical and continuous variables that may influence the correlations in a unified fashion; see our real data examples in Section \ref{sec:data}. 
	
	\subsection{Generalized z-transformation correlogram for continuous covariates}\label{sec:corro}
	While  categorical variables 
	may easily be included in our correlation models,  exploring and deciding possible continuous variables for the model is more challenging in applications, especially when data are highly unbalanced across groups and not ordered.
	Below we establish a key property -- whose proof is given in the Supplementary Material -- of the generalized z-transformation that facilitates a convenient descriptive tool to visually examine the relevance of a continuous variable.  
	\begin{proposition} \label{pro3} Denote by $\bm{\rho}=\operatorname{vecl}(\mathbf{R})$  the vector containing the lower off-diagonal elements of $\mathbf{R}$. 
		The diagonal elements of  $\frac{\partial \bm{\rho}}{\partial \bm{\gamma}}$ are all nonnegative, i.e., $\frac{\partial \rho_{jk}}{\partial \gamma_{jk}}\ge 0 \quad ( 1\le k<j\le m)$.  
	\end{proposition}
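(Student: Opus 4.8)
The plan is to exhibit $\partial\bm\rho/\partial\bm\gamma$ as a Schur complement of the Fréchet derivative of the matrix exponential, and then to invoke the positive-definiteness of that derivative. First I would fix coordinates. Writing $\mathbf A=\log\mathbf R$, property (a) identifies $\mathbf A$ as the symmetric matrix whose lower off-diagonal part is $\bm\gamma$ and whose diagonal $\mathbf u=\operatorname{diag}(\mathbf A)$ is the unique vector making $\mathbf R=e^{\mathbf A}$ have unit diagonal. Thus $\mathbf R$, and hence $\bm\rho=\operatorname{vecl}(\mathbf R)$ together with the constrained quantity $\mathbf r=\operatorname{diag}(\mathbf R)\equiv\mathbf 1$, is a function of the free coordinate $\bm\gamma$ through the implicitly defined $\mathbf u(\bm\gamma)$. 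I would then organize the Jacobian of the unconstrained map $\mathbf A\mapsto e^{\mathbf A}$, split according to diagonal/off-diagonal input and output coordinates, into four blocks $J_{ru},J_{r\gamma},J_{\rho u},J_{\rho\gamma}$. Differentiating the constraint $\mathbf r(\mathbf u(\bm\gamma),\bm\gamma)=\mathbf 1$ yields $d\mathbf u/d\bm\gamma=-J_{ru}^{-1}J_{r\gamma}$, whence
\[
\frac{\partial\bm\rho}{\partial\bm\gamma}=J_{\rho\gamma}-J_{\rho u}J_{ru}^{-1}J_{r\gamma},
\]
which has the form of a Schur complement.

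The crux is a structural fact about the differential of $e^{\mathbf A}$. Using the eigendecomposition $\mathbf A=\mathbf Q\operatorname{diag}(\lambda_1,\dots,\lambda_m)\mathbf Q'$ and the integral representation of the Fréchet derivative, I would establish the divided-difference (Daleckii--Krein) formula
\[
De^{\mathbf A}[\mathbf H]=\mathbf Q\bigl(\mathbf L\odot(\mathbf Q'\mathbf H\mathbf Q)\bigr)\mathbf Q',\qquad L_{ab}=\tfrac{e^{\lambda_a}-e^{\lambda_b}}{\lambda_a-\lambda_b}>0,
\]
with $L_{aa}=e^{\lambda_a}$. Since every $L_{ab}>0$, this linear operator is self-adjoint and positive definite with respect to the trace inner product on symmetric matrices, because $\langle De^{\mathbf A}[\mathbf H],\mathbf H\rangle=\sum_{a,b}L_{ab}(\mathbf Q'\mathbf H\mathbf Q)_{ab}^2>0$ for $\mathbf H\neq 0$. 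Consequently, written in an orthonormal basis of symmetric matrices, its matrix representation $\tilde J$ is symmetric positive definite, so $J_{ru}$ (a principal submatrix) is invertible and the Schur complement above is itself symmetric positive definite. Its diagonal entries $\partial\rho_{jk}/\partial\gamma_{jk}$ are then strictly positive, which is in fact stronger than the claimed nonnegativity.

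The step requiring the most care, and the main obstacle, is the passage between the entrywise coordinates $(\gamma_{jk},\rho_{jk})$ used in the statement and the orthonormal coordinates in which $\tilde J$ is manifestly symmetric. The orthonormal basis element attached to an off-diagonal position is $\tfrac{1}{\sqrt2}(\mathbf E_{jk}+\mathbf E_{kj})$, so off-diagonal coordinates carry a factor $\sqrt2$ relative to the entrywise ones; this renders the entrywise block Jacobian asymmetric, e.g. one finds $J_{\rho u}=\tfrac12 J_{r\gamma}'$ rather than $J_{r\gamma}'$. I would verify that these scale factors cancel in the Schur complement, so that the entrywise Schur complement coincides with the orthonormal one and thereby inherits symmetric positive-definiteness, while the diagonal blocks $J_{ru}=\tilde J_{ru}$ and $J_{\rho\gamma}=\tilde J_{\rho\gamma}$ carry no scaling and are already symmetric. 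Once this bookkeeping is settled, positivity of the diagonal of a symmetric positive-definite matrix gives the conclusion at once.
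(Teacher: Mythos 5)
Your proof is correct, and it reaches the proposition by the same two pillars as the paper --- the Daleckii--Krein divided-difference representation of $De^{\mathbf{A}}$ (with all divided differences of $e^{x}$ strictly positive) and a Schur complement accounting for the unit-diagonal constraint --- but the execution is genuinely different, and the comparison is instructive. The paper works in the redundant $\operatorname{vec}$ coordinates of $\mathbb{R}^{m^{2}}$: it quotes the closed-form Jacobian of \citet{Archakov:2020}, $\partial\bm{\rho}/\partial\bm{\gamma}=\mathbf{E}_{l}\mathbf{B}(\mathbf{E}_{l}+\mathbf{E}_{u})^{T}$ with $\mathbf{B}=\mathbf{A}-\mathbf{A}\mathbf{E}_{d}^{T}(\mathbf{E}_{d}\mathbf{A}\mathbf{E}_{d}^{T})^{-1}\mathbf{E}_{d}\mathbf{A}$, proves $\mathbf{B}$ is symmetric positive \emph{semi}definite by an idempotent-projection argument, and then --- because in $\operatorname{vec}$ coordinates each $\gamma_{jk}$ perturbs the two mirror entries $(j,k)$ and $(k,j)$ --- must express $\partial\rho_{jk}/\partial\gamma_{jk}$ as a sum of a diagonal entry and a mirror off-diagonal entry of $\mathbf{B}$, closing with the $2\times2$ principal-minor inequality $\mathbf{B}_{aa}\mathbf{B}_{bb}\ge\mathbf{B}_{ab}^{2}$ together with $\mathbf{B}_{aa}=\mathbf{B}_{bb}$, which yields only $\ge 0$. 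You instead derive the Jacobian from scratch by implicitly differentiating the constraint $\operatorname{diag}\bigl(e^{\mathbf{G}[\mathbf{u}(\bm{\gamma})]}\bigr)=\mathbf{1}$ and work intrinsically on symmetric matrices with the trace inner product; your scaling bookkeeping is exactly right: writing $\tilde J_{dd},\tilde J_{do},\tilde J_{od},\tilde J_{oo}$ for the blocks of $\tilde J$ under the diagonal/off-diagonal split, one has $J_{ru}=\tilde J_{dd}$, $J_{r\gamma}=\sqrt{2}\,\tilde J_{do}$, $J_{\rho u}=\tilde J_{od}/\sqrt{2}$, $J_{\rho\gamma}=\tilde J_{oo}$, so the $\sqrt{2}$ factors cancel and the entrywise Schur complement equals $\tilde J_{oo}-\tilde J_{od}\tilde J_{dd}^{-1}\tilde J_{do}$; moreover the invertibility of $J_{ru}$ needed for the implicit function theorem comes free, since $\tilde J_{dd}$ is a principal block of a positive definite matrix. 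What your route buys is a sharper conclusion: $\partial\bm{\rho}/\partial\bm{\gamma}$ is itself symmetric positive definite, hence its diagonal entries are strictly positive, whereas the paper's minor argument gives only nonnegativity. The two computations are consistent --- your diagonal entry is the quadratic form of the projected operator at $\tfrac{1}{\sqrt{2}}(\mathbf{E}_{jk}+\mathbf{E}_{kj})$, which expands to exactly the paper's sum $\mathbf{B}_{aa}+\mathbf{B}_{ab}$ --- so your argument in fact strengthens theirs.
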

	Proposition \ref{pro3} establishes a monotonic relationship between the correlations $\rho_{jk}$ and the corresponding $\gamma_{jk}$: each pair of them moves in the same direction, when all the other parameters are fixed.  For those off-diagonal elements, the relationships between $\rho_{jk}$ and $\gamma_{lm}$ are case by case; see the examples Section S.2 of the Supplementary Material after the proof of Proposition \ref{pro3}. 
	The monotonicity established in Proposition \ref{pro3} is valuable to gain insight and guide exploratory analysis. However, the parameters in $\bm \gamma$ are only equivalent to the correlations after a non-linear matrix transformation and their analytical expressions in matrix form are not generally intuitive. Therefore, we recommend conducting numerical comparison between the original correlations and the transformed parameters based on individual settings and models when the need arises. For examples of such comparison, please refer to Section S.2 of the Supplementary Material.
	
	To examine  the relevance of some covariates  $\mathbf{w}$ for modelling correlations, a straightforward approach is to plot some empirical estimate of $\bm\gamma$ versus  those variables --  an approach sharing the same spirit of the commonly used scatterplots in multiple regressions  for assessing the pairwise associations between variables. 
	However, for highly unbalanced data,  obtaining an estimate of $\bm\gamma$ is difficult. Proposition \ref{pro3} suggests the inspection  of the association  between some empirical estimate of $\bm\rho$, which is more readily available, and the covariates $\mathbf{w}$.  We call a plot towards this purpose a \textit{generalized z-transformation correlogram}, or \textit{GZT-correlogram} hereinafter,  as 
	an analogue to the variogram in analyzing longitudinal data for examining the pattern in variances.  
	
	For balanced ordered data, empirical correlation matrices of the residuals can be computed after fitting a suitable mean model.  From there the corresponding GZT-correlogram can be constructed in a straightforward manner. 
	For handling more difficult situations including highly unbalanced data, we outline
	the following procedure.  
	\begin{itemize}
		\item[Step 1.] Fit a suitable mean and variance model and obtain (standardized) residuals;
		\item[Step 2.] 	For a  given continuous covariate,  create sub-groups of the data by stratification;  
		\item[Step 3.] Calculate the averaged pairwise correlations within  each sub-group. While there are other choices that one can use for the same purpose, for this paper, we will explore the use of local-averaging after stratification.
		\item[Step 4.] Visually examine the pattern of the resulting correlations from Step 3 against the sub-groups  obtained in Step 2.
	\end{itemize}
	If we see a systematic or monotonic trend in Step 4, according to Proposition \ref{pro3}, it will manifest itself in the relationship between $\gamma_{ijk}$ and the corresponding covariates. An example is shown in Figure \ref{fig3}, where 
	plots  depict decreasing patterns in the strength of correlations with two continuous variables, 
	leading us  to investigate the usefulness of these variables in modelling correlations.

	\subsection{Model specification and estimation}
	Having developed a regression model for the transformed correlations, we complement it by including the following regression models of the mean and the log-variances
	\begin{equation}\label{eq:meanvariance}
		\mu_{i j}=\mathbf{x}_{i j}^{\prime} \bm{\beta}, \quad \log (\sigma_{i j}^{2})=\mathbf{z}_{i j}^{\prime} \bm{\lambda},
	\end{equation}
	where $\mu_{i j}$ and $\sigma_{i j}^{2}\left(i=1, \ldots, n ; j=1, \ldots, m_{i}\right)$ are respectively the conditional mean and variance for the $j$th measurement of the $i$th subject, $\mathbf{x}_{ij}$ and $\mathbf{z}_{ij}$ are $p\times 1$ and $q\times 1$ vectors of generic covariates for modelling the mean and the log-variances respectively.  
	The specification of the three models for the mean, variance and correlation structures in \eqref{eq:cm} and \eqref{eq:meanvariance} stipulates the number of parameters unequivocally as $p+q+d$, which can then be easily utilized to decide the degrees of freedoms for various statistics of interest. In contrast, in the mixed-effects type of models,  it is not clear what one should regard as the degrees of freedom for random effects terms  \citep{Bates:2006, Baayen:2008, Faraway:2015}.
	
	Let $\bm{\mu}_{i}=\left(\mu_{i 1}, \ldots, \mu_{i m_{i}}\right)^{\prime}, \mathbf{D}_{i}=\operatorname{diag}\left(\sigma_{i 1}, \ldots, \sigma_{i m_{i}}\right)$ and  $\bm{\omega}=\left(\bm{\beta}^{\prime}, \bm{\alpha}^{\prime}, \bm{\lambda}^{\prime}\right)^{\prime}$. Write $\bm{\nu}_{i}=\mathbf{y}_{i}-\bm{\mu}_{i}$ as the random error associated with the $i$th subject. Then $\bm{\Sigma}_{i}=\operatorname{cov}(\bm{\nu}_{i})=\mathbf{D}_{i} \mathbf{R}_{i} \mathbf{D}_{i}$. If $\mathbf{y}_{i}$ follows Gaussian distribution,
	the log-likelihood is given by
	\begin{equation} \label{eq2}
		l(\bm{\omega})=-\frac{1}{2}\sum_{i=1}^{n} \left( \log \left|\mathbf{D}_{i} \mathbf{R}_{i} \mathbf{D}_{i}\right|+ \bm{\nu}_{i}^{\prime} \mathbf{D}_{i}^{-1} \mathbf{R}_{i}^{-1} \mathbf{D}_{i}^{-1} \bm{\nu}_{i} \right).
	\end{equation}
	
	Let  $\widehat{\bm{\omega}}=\left(\widehat{\bm{\beta}}^{\prime}, \widehat{\bm{\alpha}}^{\prime}, \widehat{\bm{\lambda}}^{\prime}\right)^{\prime}$ be the maximum likelihood estimator. An algorithm for obtaining $\widehat{\bm{\omega}}$ is provided in  detail in the Supplementary Material. 
	
	In establishing the properties of $\widehat{\bm{\omega}}$, we  assume the following regularity conditions. 
	\begin{itemize}
		\item[(A1)]  The dimensions $p, q$ and $d$ of covariates $\mathbf{x}_{i j}$, $\mathbf{z}_{i j}$ and $\mathbf{w}_{i j k}$ are fixed, and $\max _{1 \leqslant i \leqslant n} m_{i}$ is bounded.
		
		\item[(A2)] The parameter space $\bm{\Omega}$ of $\left(\bm{\beta}^{\prime}, \bm{\alpha}^{\prime}, \bm{\lambda}^{\prime}\right)^{\prime}$ is a compact set in $\mathbb{R}^{p+d+q}$, and the true value $\bm{\omega}_{0}=\left(\bm{\beta}_{0}^{\prime}, \bm{\alpha}_{0}^{\prime}, \bm{\lambda}_{0}^{\prime}\right)^{\prime}$ is in the interior of $\bm{\Omega}$.
		
		\item[(A3)]  As $n \rightarrow \infty, n^{-1} \mathbf{I}\left(\bm{\omega}_{0}\right)$ converges to a positive definite matrix $\mathcal{I}\left(\bm{\omega}_{0}\right)$, where $\mathbf{I}\left(\bm{\omega}_{0}\right)$ is the Fisher information matrix at $\bm{\omega}_{0}$.	
	\end{itemize}
	Assumption (A1) is routinely made in the analysis of correlated data. Assumption
	(A2) is a conventional assumption for theoretical analysis of the maximum likelihood approach. Notably,  given our model formulation, it is natural to assume that the true values of the parameters are not on the boundary of the parameter space. 
	Assumption (A3) is a natural requirement for  regression analysis in unbalanced longitudinal data
	modelling. We establish the following asymptotic results for the maximum likelihood estimator, which support statistical inference associated with the model parameters.
	\begin{theorem} \label{tm:inf}
		Under regularity assumptions (A1)--(A3), as $n\rightarrow \infty$, we have that
		\begin{itemize}
			\item[(a)] the maximum likelihood estimator $\widehat{\bm{\omega}}$ is strongly consistent for the true value $\bm{\omega}_{0}$, and
			
			\item[(b)] $\widehat{\bm{\omega}}=\left(\widehat{\bm{\beta}}^{\prime}, \widehat{\bm{\alpha}}^{\prime}, \widehat{\bm{\lambda}}^{\prime}\right)^{\prime}$ is asymptotically normally distributed such that $\sqrt{n}(\widehat{\bm{\omega}}-\bm{\omega}_0)\xrightarrow{d} \mathcal{N}\left(\mathbf{0},\mathcal{I}(\bm{\omega}_{0} )^{-1}\right) $, where $\mathcal{I}(\bm{\omega}_{0} )$ is the Fisher information matrix defined in assumption (A3) and $\xrightarrow{d}$ denotes convergence in distribution.
			
		\end{itemize}	
		
	\end{theorem}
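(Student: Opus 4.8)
This is an instance of the classical asymptotic theory of maximum likelihood for independent but not identically distributed (i.n.i.d.) observations, and the plan is to verify the standard regularity conditions and then invoke the corresponding uniform law of large numbers (ULLN) together with an i.n.i.d.\ central limit theorem. Writing $l(\bm{\omega})=\sum_{i=1}^n l_i(\bm{\omega})$ with $l_i(\bm{\omega})=-\tfrac{1}{2}(\log|\bm{\Sigma}_i|+\bm{\nu}_i'\bm{\Sigma}_i^{-1}\bm{\nu}_i)$ and $\bm{\Sigma}_i=\mathbf{D}_i\mathbf{R}_i\mathbf{D}_i$, the summands are independent across $i$, each a smooth function of $\bm{\omega}$ through $\bm{\mu}_i(\bm{\beta})$, $\mathbf{D}_i(\bm{\lambda})$, and $\mathbf{R}_i(\bm{\alpha})=f^{-1}(\bm{\gamma}_i)$ with $\bm{\gamma}_i$ linear in $\bm{\alpha}$. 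Assumption (A1) bounds every $m_i$, so each $l_i$ and its derivatives involve matrices of uniformly bounded size; together with the compactness in (A2) this furnishes the uniform moment bounds used throughout.

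\textbf{Part (a): strong consistency.} First I would center the problem at the population criterion. Because the groups are independent and each $\mathbf{y}_i$ is Gaussian, $E_{\bm{\omega}_0}[l_i(\bm{\omega})]-E_{\bm{\omega}_0}[l_i(\bm{\omega}_0)]$ equals a negative Kullback--Leibler divergence and is therefore nonpositive, vanishing only when the induced pair $(\bm{\mu}_i,\bm{\Sigma}_i)$ coincides with the truth. Identifiability then follows from the injectivity of the generalized z-transformation in property (a) — which forces $\bm{\gamma}_i(\bm{\alpha})=\bm{\gamma}_i(\bm{\alpha}_0)$ — combined with the full-rank design implicit in the positive definiteness of (A3), giving $\bm{\alpha}=\bm{\alpha}_0$, $\bm{\beta}=\bm{\beta}_0$, $\bm{\lambda}=\bm{\lambda}_0$. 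Next I would establish a ULLN: under (A1)--(A2) the family $\{l_i(\cdot)\}$ is continuous on the compact set $\bm{\Omega}$ with dominated, uniformly integrable derivatives, so a Jennrich/Wald-type argument yields $\sup_{\bm{\omega}\in\bm{\Omega}} n^{-1}|l(\bm{\omega})-E_{\bm{\omega}_0}l(\bm{\omega})|\to 0$ almost surely. A standard M-estimation argument (well-separated unique maximizer of the limit criterion, uniform convergence, compactness) then gives $\widehat{\bm{\omega}}\to\bm{\omega}_0$ almost surely.

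\textbf{Part (b): asymptotic normality.} Using that $\bm{\omega}_0$ lies in the interior of $\bm{\Omega}$ by (A2), I would expand the score about the truth: $\mathbf{0}=\nabla l(\widehat{\bm{\omega}})=\nabla l(\bm{\omega}_0)+\nabla^2 l(\tilde{\bm{\omega}})(\widehat{\bm{\omega}}-\bm{\omega}_0)$ for some $\tilde{\bm{\omega}}$ on the segment joining $\widehat{\bm{\omega}}$ and $\bm{\omega}_0$. The score $n^{-1/2}\nabla l(\bm{\omega}_0)=n^{-1/2}\sum_i \nabla l_i(\bm{\omega}_0)$ is a sum of independent, mean-zero terms whose covariances sum to $\mathbf{I}(\bm{\omega}_0)$; the Lyapunov condition holds because the $(2+\delta)$-moments of $\nabla l_i(\bm{\omega}_0)$ are uniformly bounded by (A1) and Gaussian tail bounds, so the Lindeberg--Feller / Lyapunov CLT for i.n.i.d.\ arrays together with (A3) gives $n^{-1/2}\nabla l(\bm{\omega}_0)\xrightarrow{d}\mathcal{N}(\mathbf{0},\mathcal{I}(\bm{\omega}_0))$. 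A ULLN for the Hessian combined with $\tilde{\bm{\omega}}\to\bm{\omega}_0$ yields $-n^{-1}\nabla^2 l(\tilde{\bm{\omega}})\xrightarrow{p}\mathcal{I}(\bm{\omega}_0)$, which is positive definite by (A3). Solving the expansion and applying Slutsky's theorem then delivers $\sqrt{n}(\widehat{\bm{\omega}}-\bm{\omega}_0)\xrightarrow{d}\mathcal{N}(\mathbf{0},\mathcal{I}(\bm{\omega}_0)^{-1})$.

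\textbf{Main obstacle.} The genuinely nonstandard ingredient is the implicit dependence of $\mathbf{R}_i$ on $\bm{\alpha}$ through $f^{-1}$, which has no closed form. The hard part will be showing that $\mathbf{R}_i(\bm{\alpha})=f^{-1}(\bm{\gamma}_i(\bm{\alpha}))$ is twice continuously differentiable with derivatives bounded uniformly over the compact $\bm{\Omega}$, so that the chain rule supplies the dominated score and Hessian moments required by the ULLN and the Lyapunov CLT. Here I would use that $f(\mathbf{R})=\operatorname{vecl}(\log\mathbf{R})$ is real-analytic on the open set of positive-definite correlation matrices and, by the bijection established in property (a) whose Jacobian is nonsingular, admits a smooth inverse via the inverse function theorem; continuity of the derivatives of $f^{-1}$ on the compact image of $\bm{\Omega}$ then yields the uniform bounds. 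Propagating these bounds through $\log|\mathbf{R}_i|$ and $\mathbf{R}_i^{-1}$, and verifying that the limiting information coincides with the $\mathcal{I}(\bm{\omega}_0)$ of (A3), is the most delicate bookkeeping, but becomes routine once the smoothness of $f^{-1}$ is secured.
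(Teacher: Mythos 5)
Your proposal is correct and follows exactly the route the paper intends: the paper's own proof is in fact omitted with only the remark that it ``follows standard steps under the regularity conditions,'' and your write-up supplies precisely those standard steps (Wald/Jennrich-type consistency via a ULLN and Kullback--Leibler identification, then an interior-point score expansion with a Lindeberg--Feller/Lyapunov CLT for independent non-identically distributed groups, a Hessian ULLN, and Slutsky), together with a sensible treatment of the one genuinely nonstandard ingredient, the smoothness of $f^{-1}$, for which the paper relies on the Jacobian formula of \citet{Archakov:2020} that you would instead obtain from the inverse function theorem. The only point worth flagging is that the paper's supplement qualifies the theorem as strictly holding for a consistent root of the score equations (the likelihood need not be globally concave), whereas your Wald argument addresses the global maximizer over the compact $\bm{\Omega}$ --- a harmless, indeed slightly stronger, reading under (A2).
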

	Here $\widehat{\bm{\beta}}$ is asymptotically independent of $\widehat{\bm{\alpha}}$ and $\widehat{\bm{\lambda}}$, because $\bm{\beta}$ concerns the mean and $\bm{\alpha}$ and $\bm{\lambda}$ are parameters of the covariances. A consistent estimator of $\mathcal{I}(\bm{\omega}_{0} )$ is $\mathcal{I}(\widehat{\bm{\omega}})$ which can be used for inference.  
	Details containing the calculations and relevant intermediate results, and an algorithm for evaluating the maximum likelihood estimator are provided in  the Supplementary Material.

	From Theorem \ref{tm:inf} and \citet{Chernoff}, we establish Corollary \ref{coro1}, an essential result for statistical inference in applications. 
	
	\begin{corollary}  \label{coro1}
		Suppose that $\bm{\omega}\in \bm{\Omega}$, and that both $\bm{\Omega}_0$ and $\bm{\Omega}-\bm{\Omega}_0$ are non empty subsets of $\bm{\Omega}$. Denote the dimensions of $\bm{\Omega}$ and $\bm{\Omega}_0$ as $k$ and $r$ respectively with $k>r$. For testing
		$H_0: \bm{\omega} \in \bm{\Omega}_0~~\text{vs}~~ H_1: \bm{\omega} \in \bm{\Omega}_1=\bm{\Omega}-\bm{\Omega}_0$,
		if we define the log likelihood ratio test statistic as $2\log LR=2\left(\sup \limits_{\bm{\omega} \in \bm{\Omega}}l(\bm{\omega})-\sup \limits_{\bm{\omega} \in \bm{\Omega}_0}l(\bm{\omega})\right),$
		we have $2\log LR\xrightarrow{d} \chi_{k-r}^2$ as $n\to \infty$.
	\end{corollary}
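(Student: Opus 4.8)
The plan is to build directly on Theorem \ref{tm:inf}, combining the asymptotic normality of $\widehat{\bm{\omega}}$ with a local quadratic approximation of the log-likelihood, and then to invoke \citet{Chernoff} to identify the limiting law of the maximized difference. The crucial simplification in our setup is Assumption (A2): because $\bm{\omega}_0$ lies in the interior of $\bm{\Omega}$, the boundary phenomena that Chernoff's general framework is designed to accommodate do not occur, so the approximating cones collapse to linear tangent spaces and the limit is a single chi-squared rather than a mixture. Let $\widehat{\bm{\omega}}$ be the unrestricted maximizer of $l$ over $\bm{\Omega}$ and $\widehat{\bm{\omega}}_0$ the restricted maximizer over $\bm{\Omega}_0$; under $H_0$, part (a) of Theorem \ref{tm:inf} gives that both are strongly consistent for $\bm{\omega}_0$.

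First I would perform a second-order Taylor expansion of $l(\bm{\omega})$ about the interior maximizer $\widehat{\bm{\omega}}$, where the first-order term vanishes, to obtain
\begin{equation*}
	2\log LR = 2\left(l(\widehat{\bm{\omega}})-l(\widehat{\bm{\omega}}_0)\right)= n(\widehat{\bm{\omega}}-\widehat{\bm{\omega}}_0)^{\prime}\left(-n^{-1}\nabla^2 l(\bm{\omega}_0)\right)(\widehat{\bm{\omega}}-\widehat{\bm{\omega}}_0)+o_p(1).
\end{equation*}
By the law of large numbers and Assumption (A3), $-n^{-1}\nabla^2 l(\bm{\omega}_0)$ converges almost surely to $\mathcal{I}(\bm{\omega}_0)$, so the empirical Hessian may be replaced by the Fisher information with a negligible remainder.

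Next I would recast this quadratic form through the normalized score $\mathbf{U}_n=n^{-1/2}\nabla l(\bm{\omega}_0)$, which is asymptotically $\mathcal{N}(\mathbf{0},\mathcal{I}(\bm{\omega}_0))$ by the same arguments underlying Theorem \ref{tm:inf}. The first-order conditions for the two constrained problems give $\sqrt{n}(\widehat{\bm{\omega}}-\bm{\omega}_0)=\mathcal{I}(\bm{\omega}_0)^{-1}\mathbf{U}_n+o_p(1)$, while $\sqrt{n}(\widehat{\bm{\omega}}_0-\bm{\omega}_0)$ is, to the same order, the projection of this vector onto the $r$-dimensional tangent space of $\bm{\Omega}_0$ at $\bm{\omega}_0$ in the inner product induced by $\mathcal{I}(\bm{\omega}_0)$. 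Substituting into the display and whitening by $\mathcal{I}(\bm{\omega}_0)^{1/2}$ shows that $2\log LR$ equals, up to $o_p(1)$, a quadratic form $\mathbf{Z}^{\prime}\mathbf{M}\mathbf{Z}$, where $\mathbf{Z}$ is a standard $k$-variate Gaussian vector and $\mathbf{M}$ is the orthogonal projection onto the orthocomplement of the tangent space of $\bm{\Omega}_0$ within that of $\bm{\Omega}$. Since $\mathbf{M}$ is idempotent of rank $k-r$, the quadratic form is $\chi^2_{k-r}$, and Slutsky's theorem yields the stated convergence.

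The main obstacle is the geometric step: one must verify that near $\bm{\omega}_0$ the sets $\bm{\Omega}_0$ and $\bm{\Omega}$ admit well-defined tangent spaces of dimensions $r$ and $k$, so that the Chernoff approximating cones are genuinely linear and $\mathbf{M}$ has exactly rank $k-r$. In the typical applications of interest the restriction is linear in $\bm{\omega}$ (for instance, setting a subvector of $\bm{\alpha}$ to zero), in which case $\bm{\Omega}_0$ is an affine subspace and the tangent-space structure is immediate; for the fully general statement one relies on the smoothness of the generalized z-transformation and of the parametrization, together with interiority from (A2), to guarantee the same conclusion. Once the tangent-space structure is in place, the remainder is the standard Wilks-type bookkeeping.
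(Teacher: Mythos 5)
Your proposal is correct and takes essentially the same route as the paper, which establishes the corollary simply by invoking Theorem \ref{tm:inf} together with \citet{Chernoff}, relying on the interiority assumption (A2) to rule out boundary effects---precisely your observation that the Chernoff approximating cones collapse to linear tangent spaces, so the limit is a single $\chi^2_{k-r}$ rather than a mixture. The paper supplies no detail beyond that citation, and your quadratic expansion, score representation, and projection argument are exactly the standard Wilks-type bookkeeping that the citation stands for.
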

	Thanks to our new modelling framework,  there is no ambiguity regarding the number of parameters in the model, and thus deriving the distributions of various likelihood ratio test (LRT)  statistics is straightforward. 
	We advocate the use of our approach as an appealing competitor to the mixed-effects model approach for analyzing correlated data in practice.

	In this paper, we mainly illustrate our framework with a correctly specified model to avoid digression from the main message that we want to convey. If this assumption is reasonable,  the maximum likelihood estimation method provides a natural approach for parameter estimation with desirable properties as we have shown. On the other hand, when this assumption is questionable, a considerable amount of research has been conducted on the properties of likelihood-based approaches, most notably along the line of the foundational work of \citet{White1982}.  In Section S.5 of the Supplementary Material,  we provide additional simulation studies of the impact of model misspecification. Our conclusion is that caution should be exercised in interpreting results and that there is a clear need for model diagnosis if misspecification is a concern.  We recommend the use of robust alternatives when evidence of assumption violation is detected.
	
	\section{Real Data Analysis}\label{sec:data}
	\subsection{The classroom data}
	\label{dataexample}
	
	We analyze the classroom dataset from a study evaluating math achievement scores conducted by researchers at the University of Michigan \citep{hill2005effects}. 
	The primary goal of this study is to assess the improvement in math, measured by a variable called Mathgain,   for kids in their early ages in relation to  multiple factors.    
	Since this is a  dataset with an interesting clustering structure, elucidating the effects of the covariates on the correlation of Mathgain measures is also of great interest, not only to better model the response of interest but also for understanding the correlation pattern. 
	In a new finding that was not revealed before, we demonstrate via our analysis some interesting co-movements of Mathgain for students with some similar conditions, which may shed light on further studies of this kind.  
	
	In this dataset,  first- and third-grade students were randomly selected  from classrooms in a national U.S. sample of elementary schools. Hence, there are  within-class and within-school clusters, where the former are nested within the latter. After omitting items with missing values, the dataset has 1081  students from 285 classrooms in 105 schools.  The number of students in each school varies from 2 to 31, making this dataset highly unbalanced.  We  consider  Mathgain as the response variable, which measures the change in a student's  math achievement scores from the spring of kindergarten to the spring of first grade.   There are eight  covariates including individual-level, classroom-level, and school-level variables with a mixture of categorical and continuous ones as detailed in Table \ref{tb:5}.

	\begin{table}[ht]
		\begin{center}
			\caption{Covariates in the classroom data}
			\label{tb:5} 
			
			\begin{tabular}{ll|l}
				
				\toprule[1.2pt]
				&Covariate & Description (range) \\
				\specialrule{0.05em}{0pt}{0pt}
				\multirow{4}{*}{Student-level}		&Sex &\multicolumn{1}{m{9cm}}{Indicator variable (0 = boy, 1 = girl)} \\
				&Minority &\multicolumn{1}{m{9cm}}{Indicator variable (0 = nonminority student, 1 = minority student) }\\
				&Mathkind &\multicolumn{1}{m{9cm}}{Student’s math score in the spring of their kindergarten year [290, 629]}\\
				&	Ses&\multicolumn{1}{m{9cm}}{Student socioeconomic status [-1.61, 3.21]}\\
				\hline
				\multirow{5}{*}{Classroom-level}
				&Yearstea &\multicolumn{1}{m{9cm}}{First-grade teacher’s years of teaching experience [0, 40]}\\
				&	Mathprep& \multicolumn{1}{m{9cm}}{ First-grade teacher’s mathematics preparation: number of mathematics content and methods courses [1, 6]}\\
				&	Mathknow &\multicolumn{1}{m{9cm}}{First-grade teacher’s mathematics content knowledge, higher values indicate higher content knowledge [-2.50, 2.61]} \\	
				\hline	
				School-level	&Housepov &\multicolumn{1}{m{9cm}}{Percentage of households in the neighborhood of the school below the poverty level [0.012, 0.564]}\\	
				\bottomrule[1.2pt]
				
			\end{tabular}
			
		\end{center}
	\end{table}

	As a demonstration of linear mixed-effects models, \citet{West:2014} recommended  the following model: 
	\begin{align} \label{eq:6}
		\text{Mathgain}_{ijk}&=\beta_{0}+\beta_{1}\text{Sex}+\beta_{2}\text{Minority}+\beta_{3}\text{Mathkind}+\beta_{4}\text{Ses}   \notag \\
		&+\beta_{5}\text{Yearstea}+\beta_{6}\text{Mathprep}+\beta_{7}\text{Mathknow}+s_{i}+c_{ij}+\varepsilon_{ijk},
	\end{align}
	where $i, j$, and  $k$ index schools, classes, and individuals respectively,  $s_{i}\sim N(0, \sigma_{s}^2)$ and  $c_{ij}\sim N(0, \sigma_{c}^2)$ are  independent random effects with a nested structure respectively capturing the within-school effect and within-class effect nested in schools,  and $\varepsilon_{ijk}\sim N(0, \sigma^2)$ incorporates all the remaining variations.  
	This  model  serves as a benchmark in our analysis. 
	Fitting this linear mixed-effects models via function ${\verb+lmer+}()$ in R package ${\verb+lme4+}$
	gives a log-likelihood $-5160.1$ which has a Akaike information criterion (AIC) value $98.50$. 
	
	The correlations in this case are 
	said to be blocked, a class of structures that broadly applies in analyzing clustered data.  For example,  if school $i$ has two classrooms with the first classroom having two observations and the second three, model  \eqref{eq:6} implies that 
	
	\begin{align}\label{eq:blk}
		\mathbf{R}_i=\left(\begin{array}{cc:ccc}
			1& \rho_1&\rho_2& \rho_2&\rho_2 \\
			\rho_1& 1 &\rho_2&\rho_2& \rho_2\\
			\hdashline
			\rho_2&\rho_2&1&\rho_1&\rho_1\\
			\rho_2&\rho_2&\rho_1&1&\rho_1\\
			\rho_2&\rho_2&\rho_1&\rho_1&1\\
		\end{array}\right), 
	\end{align}
	where the within-block correlation is $\rho_1={(\sigma_{s}^2+\sigma_{c}^2)}/{(\sigma_{s}^2+\sigma_{c}^2+\sigma^2)}$ and the between-block correlation is $\rho_2={\sigma_{s}^2}/{(\sigma_{s}^2+\sigma_{c}^2+\sigma^2})$.

	To compare with the benchmark linear mixed-effects model in (\ref{eq:6}), we will take the same mean model as in (\ref{eq:6}), and  model the log-variances as $\log \sigma^2=\lambda_{0}$ as in \eqref{eq:meanvariance}.   
	For the correlation model, we set
	\begin{equation} \label{eq:7}
		\gamma_{ijk}=\alpha_{0}+\alpha_{1}w_{ijk,1},
	\end{equation}
	where  $i$ is the index of the schools, $j,k$ are the $j$th and $k$th individuals therein,  $w_{ijk,1}=1$ if student $j$ and student $k$ come from the same classroom and $w_{ijk,1}=0$ otherwise. Thus,    in our correlation model,  $\alpha_{0}$ captures the between-school variability and $\alpha_1$ captures the additional variability between classrooms.  This gives rise to the same blocked structure as in \eqref{eq:blk} for $\bm{\gamma}_i$: 
	
	\begin{align}\label{eq:blk1}
		\bm{\gamma}_i=\text{vecl}\left(\begin{array}{cc:ccc}
			*&\alpha_0+\alpha_1&\alpha_0&\alpha_0&\alpha_0\\[8pt]
			\alpha_0+\alpha_1&*&\alpha_0&\alpha_0&\alpha_0\\[4pt]
			\hdashline
			\specialrule{0em}{2pt}{2pt}
			\alpha_0&\alpha_0&*&\alpha_0+\alpha_1&\alpha_0+\alpha_1\\[8pt]
			\alpha_0&\alpha_0&\alpha_0+\alpha_1&*&\alpha_0+\alpha_1 \\[8pt]
			\alpha_0&\alpha_0&\alpha_0+\alpha_1&\alpha_0+\alpha_1&*
		\end{array}\right),
	\end{align}
	and subsequently $\tilde{\mathbf R}_i=f^{-1}(\bm{\gamma}_i)$  is also block-structured. 
	This actually indicates that the generalized z-transformation maintains the blocked structure of the correlation matrix; see also
	\citet{Archakov:2020}. 
	We note here that though  model \eqref{eq:blk1}  implies a block-structured correlation matrix,  $\tilde{\mathbf R}_i$ is not exactly equivalent to $\mathbf{R}_i$ in \eqref{eq:blk} for unbalanced case with unequal block sizes.  Specifically, the within-block correlations of $\tilde{\mathbf{R}}_i$ may vary depending on the block size, whereas ${\mathbf R}_i$ assumes a common within-block correlation uniformly for all blocks as a result from the linear mixed-effect models.  This is a notable consideration to keep in mind when interpreting and comparing results between the two models. 
	
	We report the results for testing the school and classroom effects for modelling correlations  by applying likelihood ratio tests (LRT) for both models.  As can be seen in the first three columns of Table \ref{tb:8}, both the LRTs produce very small $p$-values for testing the school effect and the class effect within school, 
	indicating significant statistical evidence for the school and classroom's contribution to the correlations.

	While the linear mixed-effects model \eqref{eq:6} offers an effective mechanism to model classroom and school effects -- the two grouping factors -- as random effects,  an important yet previously unaddressed question arises: are there other variables that play a role in influencing the  correlations?   In this application, among others of the same kind, investigating  the impact of students' math score in the spring of their kindergarten year (Mathkind) and socioeconomic status (Ses) -- two continuous variables -- is of particular interest. Intuitively, it would seem that students in the same class, especially those with  similar math scores in  their kindergarten and/or similar socioeconomic statuses,  should tend to perform similarly in the future. This is simply because students with similar backgrounds may tend to undertake similar educational paths and interact more with each other. 
	To test this out in an exploratory analysis, we obtain the residuals $\widehat{\epsilon}_{ijk}$ of each student by
	subtracting the fixed effects  estimated from model \eqref{eq:6}.  We standardize $\widehat{\epsilon}_{ijk}$ by dividing it by the fitted standard deviation of model \eqref{eq:6}, obtaining standardized residuals as $\widetilde{\epsilon}_{ijk}$.
	Then we calculate the empirical correlations confined to pre-defined subgroup $S$ as
	\[\widehat{\rho}_{ij}^{S}=\frac{1}{N_{ij}^S}\sum_{k<k^{\prime}}\widetilde{\epsilon}_{ijk}\widetilde{\epsilon}_{ijk'}I(|\text{V}_{ijk}-\text{V}_{ijk'}|\in S),\]
	where $V$ is used as a generic variable, standing for Mathkind or Ses that we are concerned about,   $S$ is created by stratifying their difference, and $N_{ij}^S$ is the total number of different pairs in the subgroup.  We examine  the data by creating three subgroups -- small difference:  $S=(0, 40]$ for Mathkind 
	or  $S=(0, 0.3]$ for Ses; mid difference:  $S=(40, 120]$ for Mathkind or $S=(0.3, 0.6]$ for Ses, and large difference: $S=(120, 240]$ for Mathkind or $S=(0.6, 0.9]$ for Ses.
	The empirical distributions of $\widehat{\rho}_{ij}^{S}$ are shown by boxplots in Figure \ref{fig3}, where clear decreasing trends are seen, indicating that the mathgains  are indeed more correlated among students whose Mathkind or socioeconomic statuses differ relatively less. 
	
	\begin{figure}[htbp]
		\centering
		
		\includegraphics[width=13cm]{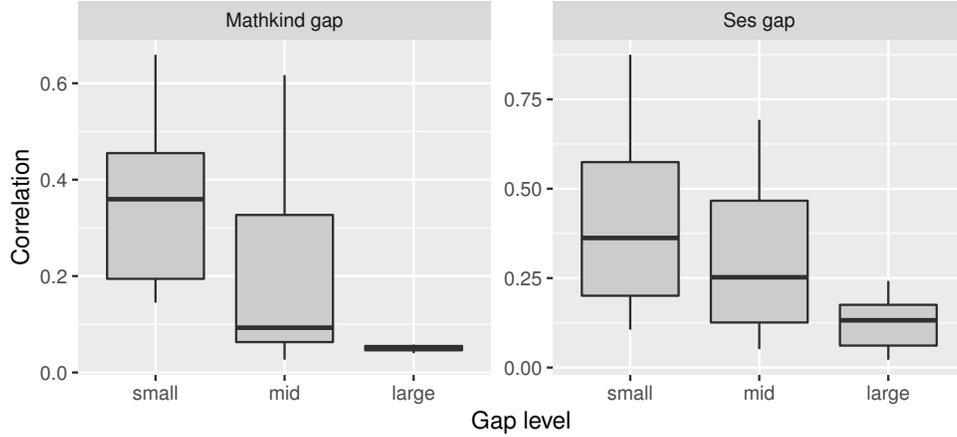}

		\caption{ GZT-Correlogram: boxplots of the pairwise correlations with different Mathkind and Ses gap.}
		\label{fig3}
	\end{figure}

	Motivated by the GZT-correlogram in Figure \ref{fig3},  we examine the effects of the four student-level variables for modelling correlations by expanding \eqref{eq:7} as 
	\begin{equation} \label{eq:m2}
		\gamma_{ijk}=\alpha_{0}+\alpha_{1}w_{ijk,1} +\alpha_{2}w_{ijk,2}.
	\end{equation} 
	In this specification,  each of these variables -- Sex, Minority, Mathkind, Ses -- is assessed individually by varying, respectively,  the definition of the dummy variable  $w_{ijk,2}$ as the  difference between the corresponding values of student $j$ and student $k$.  We then implement the LRT for testing $H_0:\alpha_2=0$, whose $p$-values are reported in the second part of Table  \ref{tb:8}. 
	
	To make a fair comparison, we also attempt a setting for the mixed-effect models comparable to \eqref{eq:m2}. Since including a continuous variable in the mixed-effects model typically means a random slope, which is different from that in \eqref{eq:m2}, we do not report the results when Mathkind or Ses is included. Instead, we focus on the two categorical variables Sex and Minority by  adding one more layer of random effects to \eqref{eq:6} as
	\begin{align}\label{lmm2}
		\text{Mathgain}_{ijkl}=\text{fixed effect} + s_{i}+c_{ij}+v_{ik}+\varepsilon_{ijkl}, 
	\end{align}
	where the subscript  $i$ indicates school, $j$ represents class, $k$ is the level of the categorical variable -- Sex or Minority, and $l$ indexes the subjects in the  respective group.  Here the random effects are assumed independent following $s_i\sim N(0,\sigma^2_s)$, $c_{ij}\sim N(0,\sigma^2_c)$, $v_{ik}\sim N(0,\sigma^2_v)$, and $\varepsilon_{ijkl}\sim N(0,\sigma^2)$. 
	We test the added random effect by validating $H_0: \sigma^2_v=0$ with the results found in Table  \ref{tb:8}. 
	
	The results in Table \ref{tb:8} are quite interesting. While both approaches see adding Sex as a random effect as not necessary, the linear mixed-effects model approach finds Minority marginally significant in modelling random effects, while our approach does not. We remark again from the previous discussion that for the former, testing the existence of random effects has known issues deeply rooted in the nature of such tests - the value under the null hypothesis lies on the boundary of the parameter space and thus the existing LRT is problematic. On the other hand, for our approach, the parameter value under the null in the LRT lies in the interior of its parameter space and thus our test can be easily carried out. More remarkably, this statement carries over easily to the inclusion of continuous variables in modelling correlations. As we can see from Table \ref{tb:8}, it is found that Mathkind is a highly useful variable for depicting correlations. In contrast, the mixed-effects model does not have a natural way of incorporating continuous variables in modelling correlations, unless they are included to have random slopes, which drastically changes the question to be addressed. We further note that the student-level variables do not have to be included in a way nested in school, and  the examinations in other ways are equally applicable by varying the specifications in  \eqref{eq:m2} and \eqref{lmm2}.  Such assessments, not reported here, do not change the conclusions on the effects of those variables. 
	
	\begin{table}[ht]
		\begin{center}
			\caption{$P$-values of LRTs by using  the linear mixed-effects model (LMM) and our proposed approach (Proposed). The null hypotheses are:  $H_0: \sigma^2_s=0$ and $H_0: \alpha_0=0$ for testing the School effect;  $H_0: \sigma^2_c=0$ and $H_0: \alpha_1=0$ in \eqref{eq:6} and \eqref{eq:7} respectively for testing the Class effect nested in the School (School:Class).   The null hypotheses are $H_0:\sigma^2_v=0$ and $H_0:\alpha_2=0$ in \eqref{lmm2} and \eqref{eq:m2} respectively for testing the student-level variables. }
			\label{tb:8} 
			\begin{tabular}{c|cc|cccc}
				\toprule[1.2pt]
				&School & School:Class & Sex & Minority & Mathkind& Ses \\
				\hline
				LMM	 & $1.08\times10^{-11}$&$0.0011^{}$ &$0.2818$& $0.0485$ & $ - $&$ - $   \\
				Proposed &$2.37\times10^{-12}$ &$0.0023^{}$ &$0.2387$& $0.1166$ & $ 0.0027^{}$&$0.1567$  \\ 	
				\bottomrule[1.2pt]
				
			\end{tabular}
			
		\end{center}
	\end{table}
	
	After examining the potential impact of the variables, we conduct  a model selection exercise in the context of our framework by using the  Akaike information criterion (AIC):  
	$
	\operatorname{AIC}=-2 \widehat{l}_{\max } / n+2(p+q+d)  / n,
	$
	where $\widehat{l}_{\max }$ is the maximum of the corresponding log-likelihood. For ease in our demonstration  and considering the finding from Table  \eqref{tb:8} , we narrow  the full model for the correlations to: 
	\begin{equation} \label{eq:8}
		\gamma_{ijk}=\alpha_{0}+\alpha_{1}w_{ijk,1}+\alpha_{2}w_{ijk,2}+\alpha_{3}w_{ijk,3},
	\end{equation}
	where 
	$w_{ijk,2}$  and  $w_{ijk,3}$ are  the differences  between students $j$ and $k$ in their  respective Mathkind and Ses variables.  For the full model on the mean and log-variance as in \eqref{eq:meanvariance},  we include all the variables. 
	The full model gives a log-likelihood $-4153.88$ and AIC value $79.50$;  
	and the optimal model  of our framework chosen by minimizing AIC gives a log-likelihood $-4156.30$ and AIC value $79.45$. 
	These values are better  than those of the benchmark model \eqref{eq:6},   
	indicating  the improvement of  our modelling approach  due to the effective inclusion of the Mathkind or Ses as covariates for the correlations.

	For completeness, we show the parameter estimates and their standard errors in Table \ref{tb:7}. For the mean model part, all the results are similar.  For the linear mixed-effects model,   we are unable to report the standard errors of the random effects parameters as they are not available in the package ${\verb+lme4+}$ used in our implementation.
	
	There are interesting results from our analysis that may be relevant in answering important questions related to assessing the improvement in math for kids in their early ages, the primary goal of this study. 
	First, the coefficients of both  Mathkind and Ses are found negative in our correlation model,  which is consistent with our empirical observation from  Figure \ref{fig3}. In particular, the negative coefficients suggest that for the kids who had similar initial performance, or similar socioeconomic backgrounds,  their later math improvements tend to move in the same direction, provided that all their other conditions are controlled at the same level.  Thus, these variables, especially Mathkind, may help to estimate math performance and quantify its uncertainty more faithfully to the data. 
	Though seemingly intuitive, this finding was not discovered via other existing approaches.   
	Moreover, there are also interesting observations from the log-variance model part. The coefficient of Mathkind is negative, suggesting that kids with  higher math scores in kindergarten show lower  level of variation in their later math improvement. In contrast, the coefficient of Ses is positive, implying that higher socioeconomic status  score is associated with higher variance in measuring the math improvement.
	
	We report another real data example with analyzing longitudinal data, upon
	applying our framework to a malaria immune response data set studied by \citet{Adjakossa:2016},  whose detail is provided in Section S.4 of the Supplementary Material.

	\begin{table}[ht]
		\begin{center}
			\caption{Analysis of the classroom data: The estimated values of parameters and their standard errors. LMM: Linear mixed-effects model (AIC: 98.50); Our full: our full model (AIC: 79.50); 
				Our AIC: Our model chosen by AIC (AIC: 79.45).}
			\label{tb:7} 		\resizebox{\textwidth}{26mm}{
				\begin{tabular}{lrrrrrrrr}
					
					\toprule[1.2pt]
					\multicolumn{9}{l}{Mean: all approaches}  \\
					& Intercept & Sex& Minority &Mathkind& Ses&Yearstea&Mathprep&Mathknow\\	
					\specialrule{0.1em}{3pt}{3pt}
					LMM& $282.02 (11.70)$ &$ -1.34 (1.72)$  &$ -7.87 (2.43)$  &$ -0.48 (0.02)$ &$  5.42 (1.28)$ & $ 0.04 (0.12)$ & $ 1.09 (1.15)$ & $ 1.91 (1.15)$ \\
					Our full& $ 277.05 (12.73)$ &$ -1.62 (1.68)$  &$ -7.46 (2.48)$  &$ -0.46 (0.02)$ & $ 5.13 (1.35)$ & $ 0.05 (0.11)$ & $ 0.94 (1.09)$ & $ 2.20 (1.12)$ \\
					Our AIC& $ 275.60 (12.79)$ &$ -1.18 (1.68)$ &$ -7.22 (2.49)$  &$ -0.46 (0.02)$ & $ 5.20 (1.28)$ & $ 0.06 (0.11)$ & $ 0.90 (1.13)$ & $ 2.16 (1.14)$ \\
					\specialrule{0.1em}{3pt}{3pt}
					\multicolumn{9}{l}{Log-variance: our approach only; results for LMM are not applicable}\\
					& Intercept  & Sex  & Minority  &Mathkind & Ses &Yearstea&Mathprep &Mathknow  \\
					\specialrule{0.1em}{3pt}{3pt}	
					Our full& $8.333 (0.546)$ &$-0.093 (0.083)$ &$-0.188 (0.099)$  &$-0.003 (0.001)$ & $0.108 (0.060)$ & $-0.004 (0.004)$ & $-0.036 (0.043)$ & $-0.044 (0.043)$ \\
					Our AIC& $8.071 (0.520)$  &  &$-0.163 ( 0.096) $  &$ -0.003 (0.001)$ & $0.108 (0.059)$ &  &  &  \\
					\specialrule{0.1em}{3pt}{3pt}
					\multicolumn{9}{l}{Matrix log-correlation: our approach only; results for LMM are not applicable }\\
					& School & Classroom &Mathkind gap & Ses gap & &  & & \\
					\specialrule{0.1em}{3pt}{3pt}
					Our full& $0.112 (0.024)$ &$0.081 (0.025)$ &$-0.00073 (0.0003)$  &$-0.025 (0.019)$ &  &  &  &  \\
					Our AIC& $0.096 (0.019)$ & $0.081 (0.025)$ & $-0.00075 (0.0003)$ & &  &  &  &  \\
					\bottomrule[1.2pt]	
			\end{tabular}}
			
		\end{center}
	\end{table}

	\section{Conclusion}\label{conc}
	We have proposed a novel regression analysis analysis of correlations for general correlated data and illustrated its wide applicability via the analysis of a clustered and a longitudinal dataset. Our model can deal with highly unbalanced clusters and groups and provide a parsimonious characterization of various correlation structures. Our approach builds on the generalized z-transformation that permits unrestricted parameters, to relate quantities in this transformation to covariates via a regression model.  Together with a mean model and a model for the logarithm of the marginal variances, the proposed method represents a flexible and attractive framework with easy and accessible inferential tools rooted in maximum likelihood. Through simulations found in the Supplementary Material and data analysis, we have demonstrated that our modelling framework can be more robust to model misspecification and offer a valuable alternative to the mixed-effects modelling approach.  Additionally, our approach provides a simple and effective means for conducting statistical inference, especially when examining the impact of various factors on correlation structures. This can be a difficult problem for mixed-effects models to handle, making our approach a valuable alternative in such situations.

	Our framework utilizes the likelihood approach.  Though we find it reasonably robust, we advocate caution when there is strong evidence that the distribution assumption is violated. Specifically, in a simulation study reported in the Supplementary Material, we found that the Gaussian likelihood approach works reasonably well when the underlying error distribution deviate moderately away from Gaussianity, but its performance worsens when the errors are much heavy-tailed.  In case there is evidence against the model error distribution, a more robust approach such as the  generalized estimating equations (GEE) can be applied as an appealing alternative.  Related to this, how to develop a more robust alternative approach for estimating the correlation model becomes an interesting open research problem. 
	
	We identify several directions for future work. First, since the parameters in our parametrization are unconstrained, it is natural to model the matrix log-correlations nonparametrically or semiparametrically. Second, we only consider the scenario when the response variable is Gaussian. When departure from normality happens, it will be interesting to consider a wider family of distributions such as multivariate t-distributions for modelling correlated data. Moreover, we have only considered continuous response variables in this paper. It will be interesting to extend the developed framework to deal with categorical responses  in a broad context of generalized linear models. 
	Besides modeling and inference, it is also interesting to extensively investigate the predictions incorporating broad correlation structures, e.g., the kind of the recent study in \citet{Mandel2022}.
	These and other generalizations of the method in this paper will be reported elsewhere.
	
	\begin{acks}[Acknowledgments]
		The authors would like to thank the anonymous referees, an Associate
		Editor and the Editor for their constructive comments that improved the
		quality of this paper. 
	\end{acks}

	\bibliographystyle{imsart-nameyear} 
	\bibliography{paper-ref}

\newpage

\renewcommand\thesection{S}

\begin{supplement}
\stitle{}
\sdescription{}
This Supplementary Material contains the algorithm for fitting our model,  technical proofs, an application of malaria immune response data in Benin, and simulation. 

\subsection{The algorithm} 

First, recall that the log-likelihood is given by
\begin{equation} \label{eq2}
	l(\bm{\omega})=-\frac{1}{2}\sum_{i=1}^{n} \left( \log \left|\mathbf{D}_{i} \mathbf{R}_{i} \mathbf{D}_{i}\right|+ \bm{\nu}_{i}^{\prime} \mathbf{D}_{i}^{-1} \mathbf{R}_{i}^{-1} \mathbf{D}_{i}^{-1} \bm{\nu}_{i} \right).
\end{equation}

Let $\bm{\Delta}_{i}=\bm{\Delta}_{i}\left(\mathbf{X}_{i} \bm{\beta}\right)=\operatorname{diag}\left\{\dot{g}^{-1}\left(\mathbf{x}_{i1}^{\prime} \bm{\beta}\right), \ldots, \dot{g}^{-1}\left(\mathbf{x}_{i m_{i}}^{\prime} \bm{\beta}\right)\right\}$ where $\dot{g}^{-1}(\cdot)$ is
the derivative of the inverse link function $g^{-1}(\cdot)$ and we note that $\mu(\cdot)=g^{-1}(\cdot)$. Define $\widehat{\mathbf{R}}_i=\mathbf{D}_i^{-1}\bm{\nu}_i \bm{\nu}_i^{\prime}\mathbf{D}_i^{-1}$, and $\mathbf{h}_i=\operatorname{diag}(\mathbf{R}_i^{-1}\widehat{\mathbf{R}}_i)$. Then the following score equations based on the log-likelihood (\ref{eq2}) can be obtained by direct calculation:
\begin{align} \label{eq:3}
	\mathbf{S}_{1}(\bm{\beta};\bm{\alpha},\bm{\lambda} )&=\sum_{i=1}^{n} \mathbf{X}_{i}^{\prime} \bm{\Delta}_{i} \bm{\Sigma}_{i}^{-1}\left(\mathbf{y}_{i}-\boldsymbol{\mu}_{i}\right)=\mathbf{0}, \nonumber \\
	\mathbf{S}_{2}(\bm{\alpha};\bm{\beta},\bm{\lambda} )&=\sum_{i=1}^{n}\mathbf{W}_{i}^{\prime} (\frac{\partial \bm{\rho}_i}{\partial \bm{\gamma}_i})^{\prime} \operatorname{vecl} \left(\mathbf{R}_i^{-1}\widehat{\mathbf{R}}_i\mathbf{R}_i^{-1}-\mathbf{R}_i^{-1} \right) =\mathbf{0}, \\ \nonumber
	\mathbf{S}_{3}(\bm{\lambda};\bm{\beta},\bm{\alpha} )&=\frac{1}{2} \sum_{i=1}^{n} \mathbf{Z}_{i}^{\prime}\left(\mathbf{h}_{i}-\mathbf{1}_{m_{i}}\right)=\mathbf{0},
\end{align}
where $\mathbf{X}_{i}, \mathbf{W}_{i}$ and $\mathbf{Z}_{i}$ are respectively the $m_i\times p, m_i(m_i-1)/2\times d$ and $m_i \times q $ matrices that contain the relevant observed covariates, and $\mathbf{1}_{m_{i}}$ is the $m_i\times 1$ vector with elements 1.

We define the negative expected Hessian matrix $\mathbf{I}(\bm{\omega})=-\mathbb{E}(\frac{\partial^2 l}{ \partial\bm{\omega}\partial \bm{\omega}^{\prime}} )$. Following  (\ref{eq:3}),  the block expression of $\mathbf{I}(\bm{\omega})$ satisfy
\begin{align*}
	&\mathbf{I}_{11}(\bm{\omega})=\sum_{i=1}^{n} \mathbf{X}_{i}^{\prime} \bm{\Delta}_{i} \bm{\Sigma}_{i}^{-1} \bm{\Delta}_{i} \mathbf{X}_{i},
	\mathbf{I}_{22}(\bm{\omega})=\sum_{i=1}^{n}\mathbf{W}_{i}^{\prime} (\frac{\partial \bm{\rho}_i}{\partial \bm{\gamma}_i})^{\prime} \mathbf{J}_i \frac{\partial \bm{\rho}_i}{\partial \bm{\gamma}_i} \mathbf{W}_{i},  \\
	& \mathbf{I}_{33}(\boldsymbol{\bm{\omega}})=\frac{1}{4} \sum_{i=1}^{n} \mathbf{Z}_{i}^{\prime}\left(\mathbf{R}_{i}^{-1} \circ \mathbf{R}_{i}+\mathbf{I}_{m_{i}}\right) \mathbf{Z}_{i},
	\mathbf{I}_{12}(\bm{\omega})=\mathbf{I}_{21}^{\prime}(\bm{\omega})=\mathbf{0}, \\
	& \mathbf{I}_{13}(\bm{\omega})=\mathbf{I}_{31}^{\prime}(\bm{\omega})=\mathbf{0},
	\mathbf{I}_{23}(\bm{\omega})=\mathbf{I}_{32}^{\prime}(\bm{\omega})=\frac{1}{2} \sum_{i=1}^{n}\mathbf{W}_{i}^{\prime} (\frac{\partial \bm{\rho}_i}{\partial \bm{\gamma}_i})^{\prime} \mathbf{H}_i \mathbf{Z}_i,
\end{align*}
where $\circ$ denotes the Hadamard product. Denote by $\bm{\eta}_i=\operatorname{vecl} (\mathbf{R}_i^{-1}\widehat{\mathbf{R}}_i\mathbf{R}_i^{-1}-\mathbf{R}_i^{-1} )=(\eta_{ijk}), 1\le k<j\le m_i$ and $\bm{\phi}_i=\mathbf{h}_{i}-\mathbf{1}_{m_{i}}=(\phi_{il}), 1\le l\le m_i $ for $i=1, \dots, n$. Then, the $\frac{m_i(m_i-1)}{2}\times \frac{m_i(m_i-1)}{2}$ matrix $\mathbf{J}_i$ in $\mathbf{I}_{22}(\bm{\omega})$  and $\frac{m_i(m_i-1)}{2}\times m_i$ matrix $\mathbf{H}_i$ in $\mathbf{I}_{23}(\bm{\omega})$ can be respectively expressed as $\mathbf{J}_i=\mathbb{E}(\bm{\eta}_i \bm{\eta}_i^\prime) $ and $\mathbf{H}_i=\mathbb{E}(\bm{\eta}_i \bm{\phi}_i^\prime) $, since the negative expected Hessian matrix is equate to the Fisher information matrix $\mathbf{I}(\bm{\omega})=\mathbb{E}(\frac{\partial l} {\partial\bm{\omega}}\frac{ \partial l}{\partial \bm{\omega}}^{\prime} )$. The calculation of each element of $\mathbf{J}_i$ and $\bm{H}_i$ is given in Lemma \ref{pro1} below.

We then estimate $\bm{\omega}$ by maximizing the log-likelihood (\ref{eq2}) via an iterative Newton-Raphson algorithm.  An application of the quasi-Fisher scoring algorithm on Equation (\ref{eq:3}) directly yields the numerical solutions for these parameters. Since the  negative expected Hessian matrix $\mathbf{I}(\bm{\omega})$ is block diagonal consisting of one block corresponding to $\bm{\beta}$ and the other to $\bm{\alpha}$ and $\bm{\lambda}$, it is natural to iterate between updating $\bm{\beta}$ and $(\bm{\alpha}^\prime, \bm{\lambda}^\prime)^\prime$. The computation needed to find the solution is summarized in Algorithm \ref{alg1}.

\begin{algorithm}[h]
	\caption{Quasi-Fisher Scoring Algorithm}
	\label{alg1}
	\begin{algorithmic}[1]
		\Require
		Starting value: $\bm{\beta}^{(0)}, \bm{\alpha}^{(0)}$ and $\bm{\lambda}^{(0)}$, set $k=0$,
		\Ensure  An estimate of
		$\bm{\omega}$.
		
		\Repeat
		\State Compute $\bm{\Sigma}_{i}$ by using $\bm{\alpha}^{(k)}$ and $\bm{\lambda}^{(k)}$. Update $\bm{\beta}^{(k+1)}$ as
		\begin{equation}
			\bm{\beta}^{(k+1)}=\bm{\beta}^{(k)}+\left.\mathbf{I}_{11}^{-1}(\bm{\omega}) \mathbf{S}_{1}(\bm{\beta}; \bm{\alpha},\bm{\lambda} )\right|_{\bm{\beta}=\bm{\beta}^{(k)}}. \notag
		\end{equation}
		\State Given $\bm{\beta}=\bm{\beta}^{(k+1)}$, update $\bm{\alpha}^{(k+1)}$ and $\bm{\lambda}^{(k+1)}$ by using
		\begin{equation}
			\left(\begin{array}{l}
				\bm{\alpha}^{(k+1)}\\
				\bm{\lambda}^{(k+1)}
			\end{array}\right)=\left(\begin{array}{l}
				\bm{\alpha}^{(k)} \\
				\bm{\lambda}^{(k)}
			\end{array}\right)+\left.\left[\left(\begin{array}{ll}
				\mathbf{I}_{22}(\bm{\omega}) & \mathbf{I}_{23}(\bm{\omega}) \\
				\mathbf{I}_{32}(\bm{\omega}) & \mathbf{I}_{33}(\bm{\omega})
			\end{array}\right)^{-1}\left(\begin{array}{l}
				\mathbf{S}_{2}(\bm{\alpha} ; \bm{\beta}, \bm{\lambda}) \\
				\mathbf{S}_{3}(\bm{\lambda} ; \bm{\beta}, \bm{\alpha})
			\end{array}\right)\right]\right|_{\bm{\alpha}=\bm{\alpha}^{(k)}, \bm{\lambda}=\bm{\lambda}^{(k)}}. \notag
		\end{equation}
		\State Set $k=k+1$.
		\Until{ a desired convergence criterion is met. }
	\end{algorithmic}
\end{algorithm}

Since the likelihood function is not a global convex function of the parameters on their support, it can only be guaranteed that the algorithm converges to a local optimum. 
Similar to the conventional theory for the MLE,  strictly speaking, the properties in Theorem 2.1 are established for the so-called  consistent root of the likelihood score equation.  In practice, we advocate using multiple initial values to ensure that the algorithm find an optimal solution. 

To choose one reasonable initial value, we can take $\bm{\Sigma}_i$ as identity matrices initially and use the least-squares estimator as the
initial value of $\bm{\beta}$ in the first equation of (\ref{eq2}). Then we initiate $\bm{\alpha}$ and $\bm{\lambda}$ using the least-squares estimation based on the residuals.  Our numerical experience shows that this iterative algorithm converges very quickly {\color{black} if we stop the iteration when $\parallel \omega^{(k+1)}-\omega^{(k)}\parallel <1e^{-7}$}, usually in a few iterations. {\color{black} In addition, we also tried using different initial values and found that the results were not affected, unless  extreme initial values that are far away from the true parameters were used.}

\subsection{Properties}\label{sec:prop}

We summarize the computation of $\mathbf{J}_i$ and $\bm{H}_i$ in the negative expected Hessian matrix in the following lemma:
\begin{lemma} \label{pro1}
	Let $a_{ijk}$ be the $(j, k)$th element of $\mathbf{R}_i^{-1}$. Then the $(\frac{(2n-k)(k-1)}{2}+j-k, \frac{(2n-s)(s-1)}{2}+l-s)$th element of $\mathbf{J}_i$ is given by $\mathbb{E}(\eta_{ijk}\eta_{ils})=a_{ijl}a_{iks}+a_{ijs}a_{ikl}  (1\le k<j\le m_i; 1\le s<l\le m_i)$, and the $(\frac{(2n-k)(k-1)}{2}+j-k, l)$th element of $\mathbf{H}_i$ is given by $\mathbb{E}(\eta_{ijk}\phi_{il})=a_{ijl}\delta_{jl}+a_{ikl}\delta_{kl}  (1\le k<j\le m_i; 1\le l\le m_i) $, where $ \delta_{jk}$ is unity when $j=k$ and zero otherwise.
\end{lemma}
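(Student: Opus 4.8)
The plan is to reduce every entry of $\mathbf{J}_i$ and $\mathbf{H}_i$ to a fourth-order moment of a centered Gaussian vector and to evaluate it by Isserlis' theorem (Wick's formula). All expectations are taken at the true parameter value, where $\bm{\nu}_i\sim N(\mathbf{0},\bm{\Sigma}_i)$ with $\bm{\Sigma}_i=\mathbf{D}_i\mathbf{R}_i\mathbf{D}_i$. First I would standardize by setting $\bm{\xi}_i=\mathbf{D}_i^{-1}\bm{\nu}_i$, so that $\bm{\xi}_i\sim N(\mathbf{0},\mathbf{R}_i)$ and $\widehat{\mathbf{R}}_i=\bm{\xi}_i\bm{\xi}_i^\prime$. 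Introducing $\bm{u}_i=\mathbf{R}_i^{-1}\bm{\xi}_i\sim N(\mathbf{0},\mathbf{R}_i^{-1})$ then gives the clean identities
\begin{equation}
\mathbf{R}_i^{-1}\widehat{\mathbf{R}}_i\mathbf{R}_i^{-1}=\bm{u}_i\bm{u}_i^\prime,\qquad (\mathbf{R}_i^{-1}\widehat{\mathbf{R}}_i)_{ll}=u_{il}\xi_{il},
\end{equation}
so that the quantities of interest are the elementary quadratics $\eta_{ijk}=u_{ij}u_{ik}-a_{ijk}$ and $\phi_{il}=u_{il}\xi_{il}-1$.

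Next I would record the three second-moment identities that drive everything:
\begin{equation}
\mathbb{E}(u_{ij}u_{ik})=a_{ijk},\qquad \mathbb{E}(\xi_{ij}\xi_{il})=(\mathbf{R}_i)_{jl},\qquad \mathbb{E}(u_{ij}\xi_{il})=(\mathbf{R}_i^{-1}\mathbf{R}_i)_{jl}=\delta_{jl}.
\end{equation}
These already give $\mathbb{E}(\eta_{ijk})=0$ and $\mathbb{E}(\phi_{il})=0$, confirming that $\bm{\eta}_i$ and $\bm{\phi}_i$ are the (nonredundant parts of the) centered score contributions and hence that $\mathbf{J}_i$ and $\mathbf{H}_i$ are genuine covariances. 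For $\mathbf{J}_i$ I would write $\mathbb{E}(\eta_{ijk}\eta_{ils})=\mathbb{E}(u_{ij}u_{ik}u_{il}u_{is})-a_{ijk}a_{ils}$ and apply Isserlis' theorem to the fourth moment of $\bm{u}_i$: the pairing $(u_{ij}u_{ik})(u_{il}u_{is})$ contributes $a_{ijk}a_{ils}$ and cancels the correction term, leaving exactly the two cross pairings $a_{ijl}a_{iks}+a_{ijs}a_{ikl}$.

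For $\mathbf{H}_i$ I would proceed identically with the mixed fourth moment, expanding $\mathbb{E}(\eta_{ijk}\phi_{il})=\mathbb{E}(u_{ij}u_{ik}u_{il}\xi_{il})-\mathbb{E}(u_{ij}u_{ik})-a_{ijk}\mathbb{E}(u_{il}\xi_{il})+a_{ijk}$. Isserlis' theorem splits the leading term into three pairings, evaluated using the covariances above: the pairing $(u_{ij}u_{ik})(u_{il}\xi_{il})$ gives $a_{ijk}$, while the two remaining pairings contribute terms in which $l$ is matched, via a Kronecker delta, against $j$ or $k$ with a surviving off-diagonal entry of $\mathbf{R}_i^{-1}$. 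The three lower-order corrections then cancel the stray $a_{ijk}$, leaving the claimed pair of delta-weighted entries of $\mathbf{R}_i^{-1}$.

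The main obstacle is purely the index bookkeeping in the $\mathbf{H}_i$ step: because its four factors mix $\bm{u}_i$ and $\bm{\xi}_i$, one must keep careful track of which index of $\eta_{ijk}$ each $\delta$ ties to $l$ and which entry of $\mathbf{R}_i^{-1}$ survives, before verifying the cancellation of the first-order terms. The only structural point deserving a word of justification is the reduction itself, namely that the negative expected Hessian coincides with the outer-product (information) form, which is exactly what permits writing $\mathbf{J}_i=\mathbb{E}(\bm{\eta}_i\bm{\eta}_i^\prime)$ and $\mathbf{H}_i=\mathbb{E}(\bm{\eta}_i\bm{\phi}_i^\prime)$ and thereby converting the Hessian blocks into the Gaussian moments computed above.
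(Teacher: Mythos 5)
Your computations are correct and rest on the same underlying idea as the paper's proof: every entry of $\mathbf{J}_i$ and $\mathbf{H}_i$ is reduced to a fourth moment of a jointly Gaussian vector and evaluated by Wick/Isserlis pairings. The difference is purely executional. The paper standardizes all the way to $\bm{\epsilon}_i=\mathbf{R}_i^{-1/2}\mathbf{D}_i^{-1}\bm{\nu}_i\sim \mathcal{N}(\mathbf{0},\mathbf{I}_{m_i})$, proves a separate matrix-form moment identity $\mathbb{E}(\bm{\epsilon}\bm{\epsilon}^{\prime}\mathbf{B}\bm{\epsilon}\bm{\epsilon}^{\prime})=\mathbf{B}+\mathbf{B}^{\prime}+\operatorname{tr}(\mathbf{B})\mathbf{I}$ (which is Isserlis in matrix clothing), and contracts it against the columns $\mathbf{T}_{ij}$, $\mathbf{P}_{ij}$ of the symmetric square roots $\mathbf{R}_i^{-1/2}$, $\mathbf{R}_i^{1/2}$. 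You stop at $\bm{\xi}_i=\mathbf{D}_i^{-1}\bm{\nu}_i$ and $\bm{u}_i=\mathbf{R}_i^{-1}\bm{\xi}_i$, write $\eta_{ijk}=u_{ij}u_{ik}-a_{ijk}$ and $\phi_{il}=u_{il}\xi_{il}-1$, and apply scalar Isserlis to $(\bm{u}_i,\bm{\xi}_i)$. Your route avoids matrix square roots and the auxiliary lemma and handles the mixed $u$--$\xi$ moments transparently; the paper's route keeps everything in matrix form and yields a reusable identity. For $\mathbf{J}_i$ the two computations coincide and give the stated $a_{ijl}a_{iks}+a_{ijs}a_{ikl}$.

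For $\mathbf{H}_i$, however, you must be explicit about what your (correct) calculation yields, because it is not the formula printed in the statement. Your Wick expansion gives
\begin{equation*}
\mathbb{E}(\eta_{ijk}\phi_{il})=\mathbb{E}(u_{ij}u_{il})\,\mathbb{E}(u_{ik}\xi_{il})+\mathbb{E}(u_{ij}\xi_{il})\,\mathbb{E}(u_{ik}u_{il})=a_{ijl}\delta_{kl}+a_{ikl}\delta_{jl},
\end{equation*}
equivalently $a_{ijk}(\delta_{jl}+\delta_{kl})$: the off-diagonal entry $a_{ijk}$ survives whenever $l\in\{j,k\}$. The statement instead reads $a_{ijl}\delta_{jl}+a_{ikl}\delta_{kl}$, which, when $l=j$, equals the diagonal entry $a_{ijj}$ --- a genuinely different quantity. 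A direct bivariate check ($m_i=2$, $j=2$, $k=l=1$) confirms your version: there $\mathbb{E}(\eta_{i21}\phi_{i1})=2a_{i21}-a_{i21}=a_{i21}$, not $a_{i11}$. In fact the paper's own proof arrives at exactly your expression --- its penultimate display evaluates to $a_{ijk}+a_{ijl}\delta_{kl}+a_{ikl}\delta_{jl}-a_{ijk}$ --- and the transposition of the Kronecker-delta subscripts occurs only in the final equality and in the lemma statement, so the statement contains a typo. Your derivation is sound, but your closing assertion that the pairings leave ``the claimed pair of delta-weighted entries'' papers over this mismatch; a complete answer should state that the correct entry of $\mathbf{H}_i$ is $a_{ijl}\delta_{kl}+a_{ikl}\delta_{jl}$ and that the deltas in the statement need their subscripts swapped.
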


The proof of this lemma will be given later. We first present 
some useful formula. 
We start from a formula  in \citet{Archakov:2020}:    \[\frac{\partial \bm{\rho}}{\partial \bm{\gamma}}=\mathbf{E}_{l}\left(\mathbf{I}-\mathbf{A} \mathbf{E}_{d}^{T}\left(\mathbf{E}_{d} \mathbf{A} \mathbf{E}_{d}^{T}\right)^{-1} \mathbf{E}_{d}\right) \mathbf{A}\left(\mathbf{E}_{l}+\mathbf{E}_{u}\right)^{T},\] where $\mathbf{A}= \partial \operatorname{vec} \mathbf{R}/\partial \operatorname{vec} \mathbf{G}$, $\operatorname{vec}$ is the matrix column vectorization operator, $\mathbf{G}=\log \mathbf{R} $ and
the matrices $\mathbf{E}_{l}, \mathbf{E}_{u}$ and $\mathbf{E}_{d}$ are elimination matrices, such that $\operatorname{vecl} \mathbf{R}=\mathbf{E}_{l} \operatorname{vec} \mathbf{R}, \operatorname{vecl} \mathbf{R}^{T}=\mathbf{E}_{u} \operatorname{vec} \mathbf{R}$
and $\operatorname{diag} \mathbf{R}=\mathbf{E}_{d} \operatorname{vec} \mathbf{R}$.	
Let $\mathbf{G}=\mathbf{Q} \bm{\Lambda} \mathbf{Q}^{T}$, where $\bm{\Lambda}$ is the diagonal matrix containing the eigenvalues $\lambda_{1}, \ldots, \lambda_{m}$ of $\mathbf{G}$,  and $\mathbf{Q}$ is an orthonormal matrix (i.e. $\mathbf{Q}^{T}=\mathbf{Q}^{-1}$ ) containing the corresponding eigenvectors. 

Following \citet{Lint:1995},    we have $\partial \operatorname{vec} \mathbf{R} =\mathbf{A} \partial \operatorname{vec} \mathbf{G}$, where
$$
\mathbf{A} =(\mathbf{Q} \otimes \mathbf{Q}) \bm{\Xi} (\mathbf{Q}\otimes \mathbf{Q})^{T}
$$
is a $m^{2} \times m^{2}$ matrix and $\bm{\Xi}$ is an $m^{2} \times m^{2}$ diagonal matrix whose elements are given by
$$
\xi_{ jk}=\bm{\Xi}_{ (j-1) m+k, (j-1) m+k}=\left\{\begin{array}{lll}
	e^{\lambda_{j}}, & \text { if } & \lambda_{j}=\lambda_{k} \\
	\frac{e^{\lambda_{j}}-e^{\lambda_{k}}}{\lambda_{j}-\lambda_{k}}, & \text { if } & \lambda_{j} \neq \lambda_{k}
\end{array}\right.
$$
for $j, k=1, \ldots, m$. Clearly, we have $\xi_{jk}=\xi_{kj}$ for all $(j,k)$. Moreover, $\mathbf{A}$ is a symmetric positive definite matrix, because all the diagonal elements of $\bm{\Xi}$ are strictly positive. For convenience, in the following proofs, we use $(\mathbf{M})_{ij}$  for the $(i,j)$th element of the matrix $\mathbf{M}$.

\begin{proof}[Proof of Proposition 1]
	Since  $\mathbf{A}$ is a symmetric positive definite matrix, we can decompose $\mathbf{A}$ as $\mathbf{A}=\mathbf{Q}\mathbf{Q}^{T}$, where $\mathbf{Q}$ is a positive matrix. Then, we have
	$$\mathbf{I}-\mathbf{A} \mathbf{E}_{d}^{T}\left(\mathbf{E}_{d} \mathbf{A}\mathbf{E}_{d}^{T}\right)^{-1} \mathbf{E}_{d}=\mathbf{I}-\mathbf{Q}\mathbf{Q}^{T} \mathbf{E}_{d}^{T}\left(\mathbf{E}_{d} \mathbf{Q}\mathbf{Q}^{T} \mathbf{E}_{d}^{T}\right)^{-1} \mathbf{E}_{d}\mathbf{Q}\mathbf{Q}^{-1},$$
	where $\mathbf{Q}^{T} \mathbf{E}_{d}^{T}\left(\mathbf{E}_{d} \mathbf{Q}\mathbf{Q}^{T} \mathbf{E}_{d}^{T}\right)^{-1} \mathbf{E}_{d}\mathbf{Q}$ is a symmetric idempotent matrix whose eigenvalues are either 0 or 1. Therefore, the eigenvalues of $\mathbf{Q}\mathbf{Q}^{T} \mathbf{E}_{d}^{T}\left(\mathbf{E}_{d} \mathbf{Q}\mathbf{Q}^{T} \mathbf{E}_{d}^{T}\right)^{-1} \mathbf{E}_{d}\mathbf{Q}\mathbf{Q}^{-1}$ are 0 or 1, which implies that the eigenvalues of $\mathbf{I}-\mathbf{A} \mathbf{E}_{d}^{T}\left(\mathbf{E}_{d} \mathbf{A} \mathbf{E}_{d}^{T}\right)^{-1} \mathbf{E}_{d}$ are 0 or 1, too. Denote by $\mathbf{B}=\mathbf{A}-\mathbf{A} \mathbf{E}_{d}^{T}\left(\mathbf{E}_{d} \mathbf{A} \mathbf{E}_{d}^{T}\right)^{-1} \mathbf{E}_{d}\mathbf{A}$. Thus, $\mathbf{B}$ is a semi-positive definite symmetric matrix.
	
	Since $\partial \bm{\rho}/\partial \bm{\gamma} =\mathbf{E}_{l}\mathbf{B}\left(\mathbf{E}_{l}+\mathbf{E}_{u}\right)^{T}$, the diagonal element $\partial \rho_{jk}/\partial \gamma_{jk}$ $(1\le k<j\le m)$ is given by
	$$ \frac{\partial\rho_{jk}}{\partial\gamma_{jk}}=\mathbf{B}_{j+m(k-1),j+m(k-1)}+\mathbf{B}_{j+m(k-1),k+m(j-1)}.$$
	Note that $\mathbf{B}$ is also the Jacobian of $\partial \operatorname{vec} \mathbf{R}/\partial \operatorname{vec} \mathbf{G}$, but the diagonal elements of $\mathbf{R}$ are constrained to one. Because of the symmetry of $\partial \operatorname{vec} \mathbf{R}/\partial \operatorname{vec} \mathbf{G}$, it is easy to verify that
	$$\mathbf{B}_{j+m(k-1),j+m(k-1)}=\mathbf{B}_{k+m(j-1),k+m(j-1)}.$$
	The semi-positive definiteness of $\mathbf{B}$ implies that the principal sub-matrix of $\mathbf{B}$  satisfies
	$$\mathbf{B}_{j+m(k-1),j+m(k-1)}\mathbf{B}_{k+m(j-1),k+m(j-1)}\ge \mathbf{B}_{j+m(k-1),k+m(j-1)}^2,$$
	so that
	$$\mathbf{B}_{j+m(k-1),j+m(k-1)}^2\ge\mathbf{B}_{j+m(k-1),k+m(j-1)}^2,$$
	which implies that $\partial\rho_{jk}/\partial\gamma_{jk}=\mathbf{B}_{j+m(k-1),j+m(k-1)}+\mathbf{B}_{j+m(k-1),k+m(j-1)}\ge 0.$
\end{proof}

As for $\partial \rho_{jk}/\partial \gamma_{lm}$ $(j\neq l, k\neq m)$, their relationships are case by case,  depending on the correlation matrix itself. For example, for correlation matrix  AR(0.5) with dimension of 3, we have
$$
\mathbf{R}=\left(\begin{array}{ccc}
	1&0.5&0.25\\
	0.5&1&0.5\\
	0.25&0.5&1
\end{array}\right)  \rightarrow 	
\frac{\partial \bm{\rho}}{\partial \bm{\gamma}}=\left(\begin{array}{ccc}
	0.736&0.188&0.014\\
	0.188&0.910&0.188\\
	0.014&0.188&0.736
\end{array}\right),
$$
while for AR(-0.5) we have
$$
\mathbf{R}=\left(\begin{array}{ccc}
	1&-0.5&0.25\\
	-0.5&1&-0.5\\
	0.25&-0.5&1
\end{array}\right)  \rightarrow 	
\frac{\partial \bm{\rho}}{\partial \bm{\gamma}}=\left(\begin{array}{ccc}
	0.736&-0.188&0.014\\
	-0.188&0.910&-0.188\\
	0.014&-0.188&0.736
\end{array}\right).
$$

Since the one-to-one correspondence between the matrix log-correlation parameters and the correlations  is only through a matrix generalized z-transformation, in general their analytical expressions can be complicated. In practice,  we recommend examining this correspondence via numerical evaluation. As examples,  here we plot the associations between $\gamma$ and $\rho$ in Figure \ref{fig6} for three commonly used correlations structures: exchangeable,  AR(1), and banded (whose non-zero entries are confined to a diagonal band of width 1) structure with parameter $\rho$. For example, when the dimension $m$ is 3, the correlation matrices are respectively 
$$\left(\begin{array}{ccc}
	1&\rho&\rho\\
	\rho&1&\rho\\
	\rho&\rho&1\\
\end{array}\right), \qquad \left(\begin{array}{ccc}
	1&\rho&\rho^2\\
	\rho&1&\rho\\
	\rho^2&\rho&1\\
\end{array}\right),  \quad \text{and} \quad
\left(\begin{array}{ccc}
	1&\rho&0\\
	\rho&1&\rho\\
	0&\rho&1\\
\end{array}\right). 
$$ 
We note that $(2,1)$th components of all these three matrices are the same. As the representative case, we plot the corresponding $(2,1)$th element after the matrix-log transformation versus $\rho$ in Figure \ref{fig6}. 
We vary the dimension of the corresponding correlation matrices as $m=2$, $5$, $10$ or $20$. 

There are clear monotonic relationships between the parameters for all the three structures; the patterns of monotonicity are not identical. In addition, we found that $m$ plays a role in determining the extent of the monotonicity. These observations were consistent across all components in the correlation matrices and their matrix-log transformations, as further analysis showed (results not presented here). Overall, these findings align with the discussion presented in the paper.

Additionally, it is important to note that the relationships between the $(i,j)$th component of the correlation matrix and its corresponding $(k,l)$th component of the matrix-log transformation can vary on a case-by-case basis, unless $i=j$ and  $k=l$, as  discussed in the paper. To illustrate this point, we present Figure \ref{fig7} which displays the $(3,1)$th component of the matrix-log transformation plotted against the $(2,1)$th component of the correlation matrix for both the AR(1) and banded structures. The exchangeable structure has the same pattern shown in Figure \ref{fig7} and thus is not plotted. Notably, the patterns for the AR(1) and banded structures are opposite to each other, with one increasing while the other decreasing.

\begin{figure}[htbp]
	\centering
	\subfigure[Exchangeable ]{
		\includegraphics[width=0.3\textwidth]{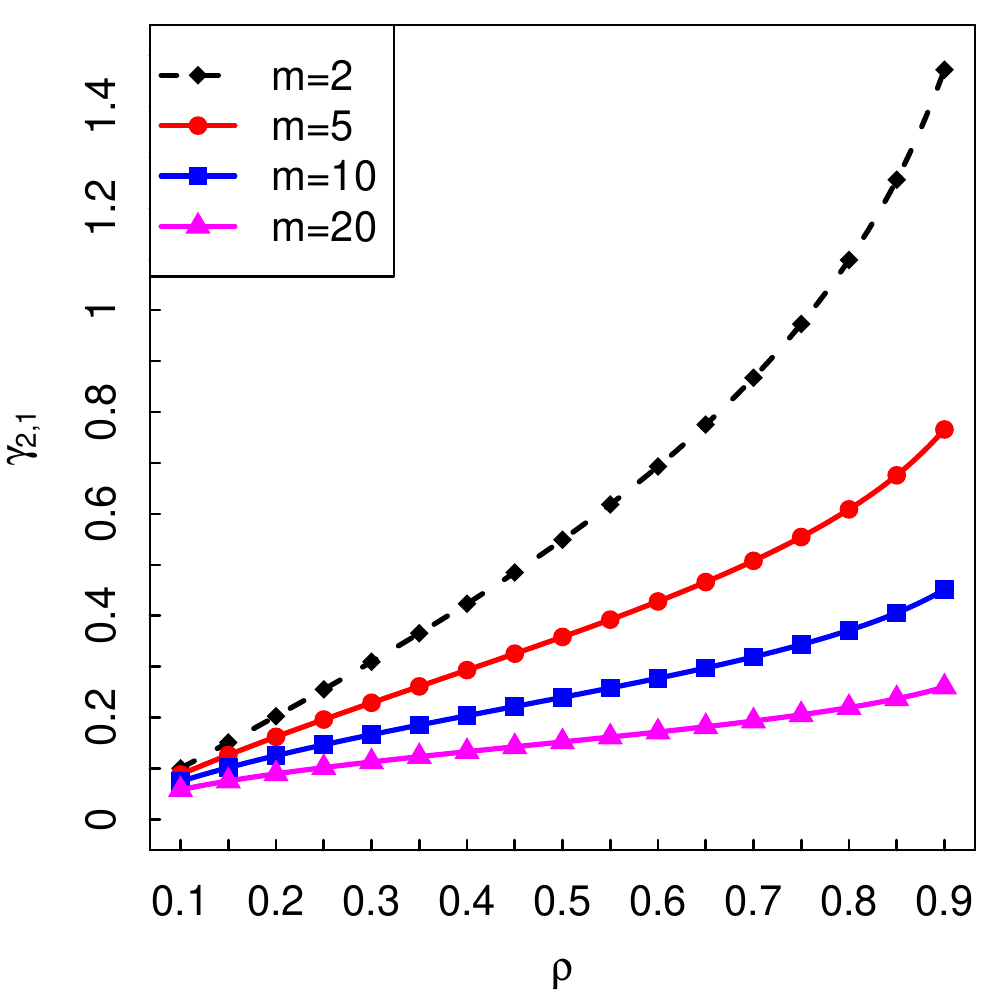}
	}	\quad
	\subfigure[AR(1)]{
		\includegraphics[width=0.3\textwidth]{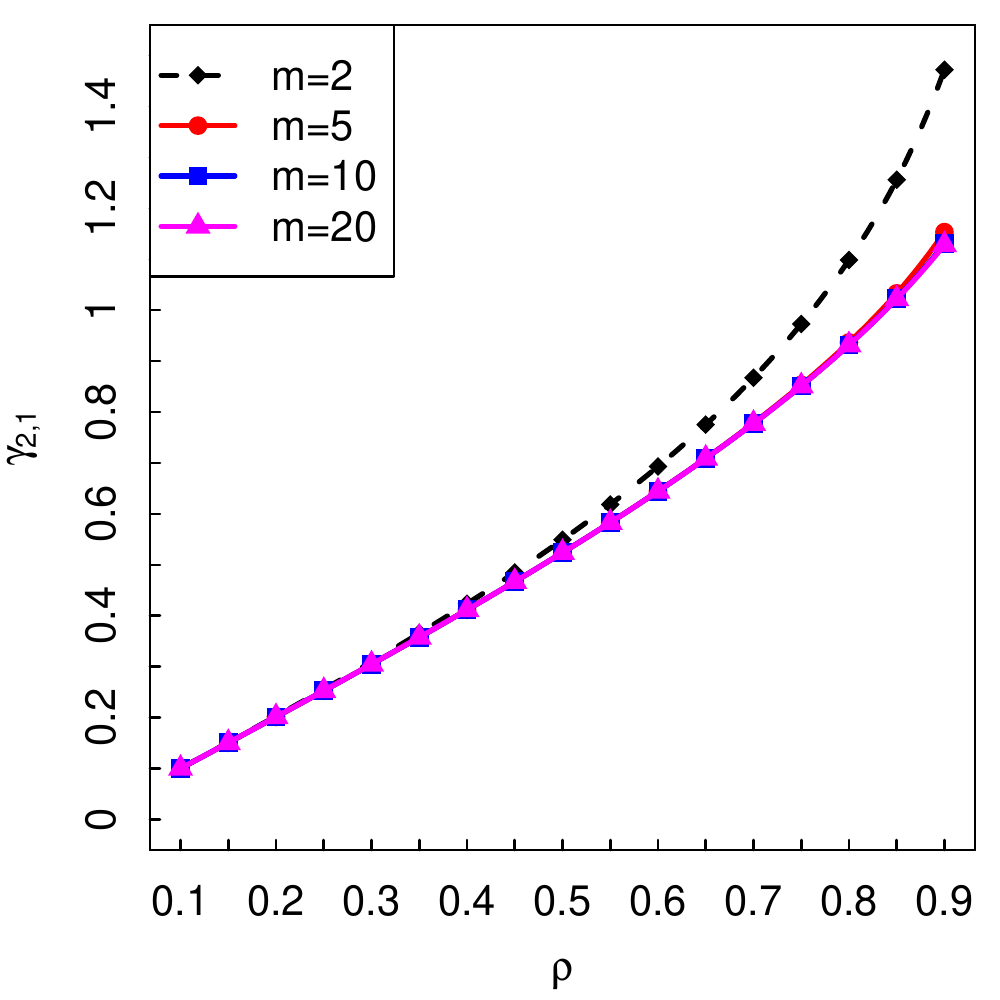}
	}\quad
	\subfigure[Banded]{
		\includegraphics[width=0.3\textwidth]{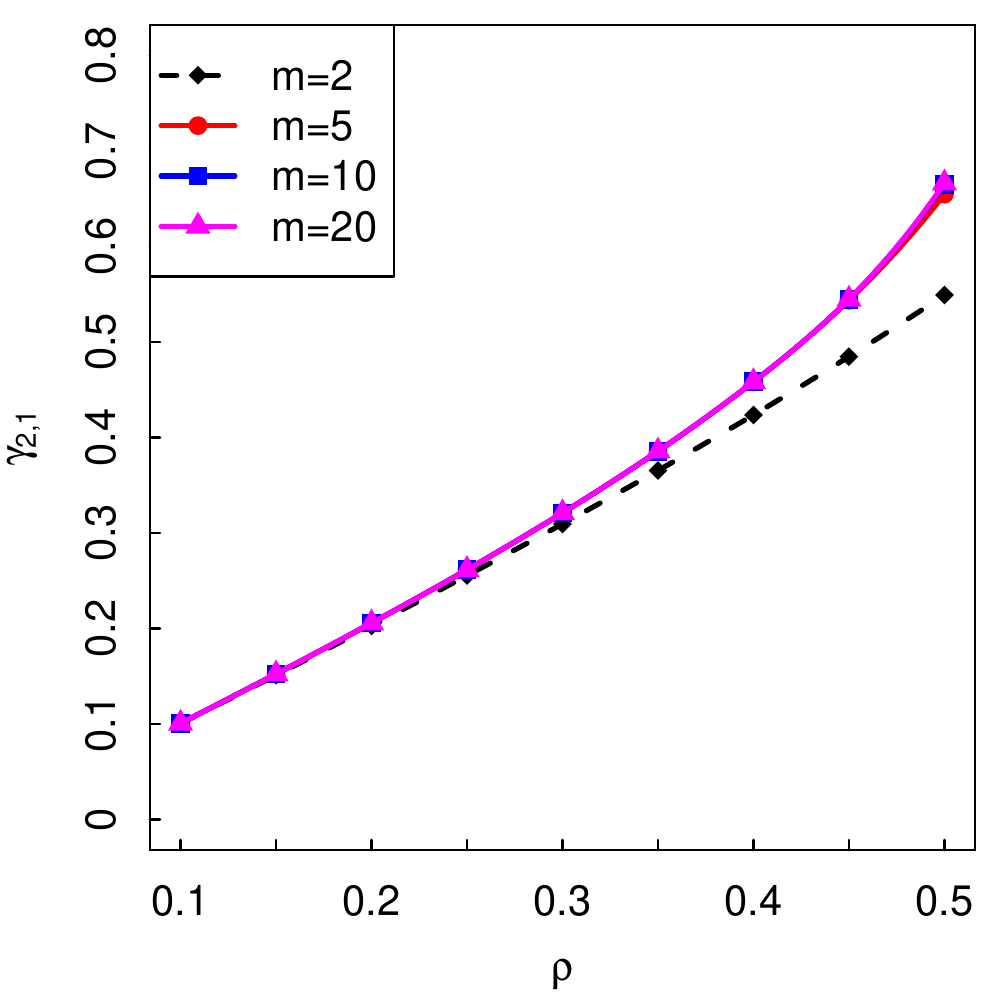}
	}
	\caption{ {$\{\log(\mathbf{R})\}_{2,1}$ versus $\mathbf{R}_{2,1}$  under three commonly used correlation structures. }}
	\label{fig6}
\end{figure}

\begin{figure}[htbp]
	\centering
	\subfigure[AR(1)]{
		\includegraphics[width=0.3\textwidth]{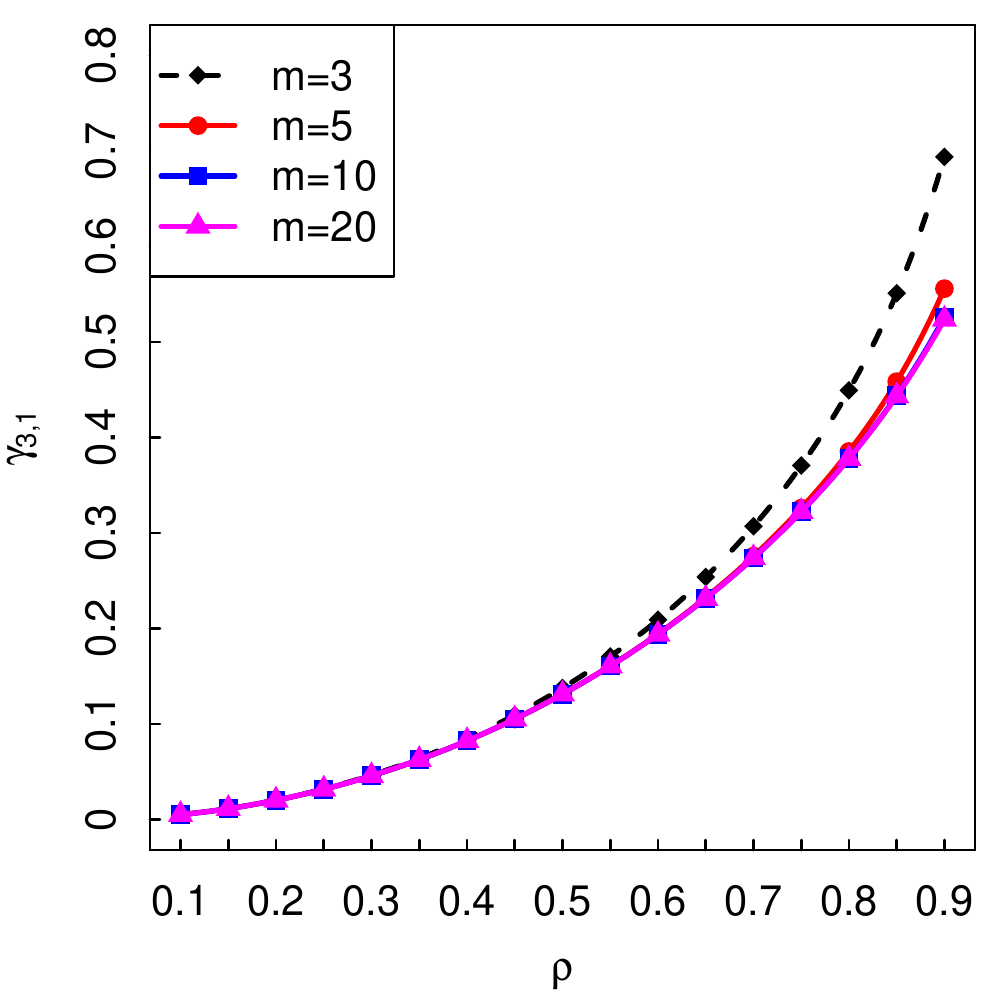}
	}\quad
	\subfigure[Banded]{
		\includegraphics[width=0.3\textwidth]{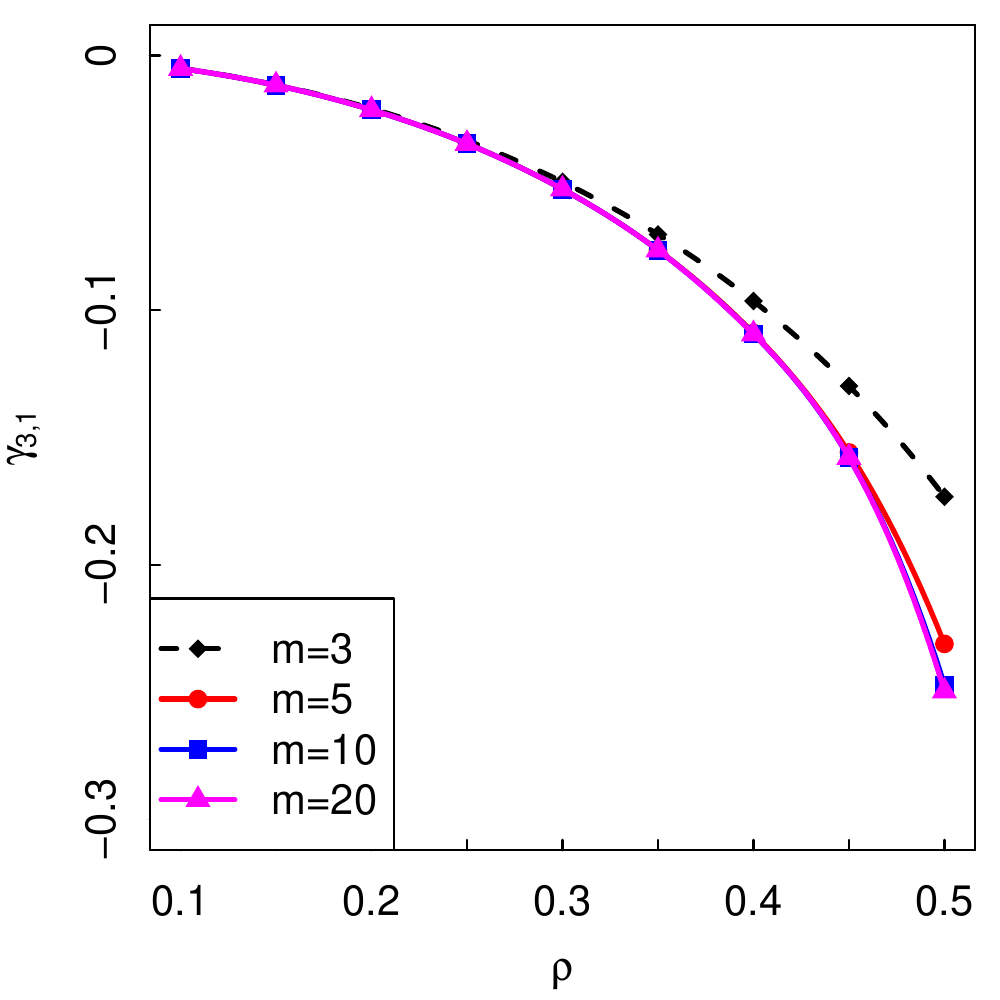}
	}
	\caption{$\{\log(\mathbf{R})\}_{3,1}$ versus $\mathbf{R}_{2,1}$  under two commonly used correlation structures. }
	\label{fig7}
\end{figure}

\subsection{More technical details}

We have the following intermediate result that is useful in our derivations. 

\begin{lemma} \label{lem1}
	Suppose that $\bm{\epsilon}\in \mathbb{R}^d$ and $\bm{\epsilon}\sim \mathcal{N}(\mathbf{0}, \mathbf{I}_d)$. Then for any $d\times d$ matrix $\mathbf{B}$, $\mathbb{E}(\bm{\epsilon}\bm{\epsilon}^{\prime}\mathbf{B}\bm{\epsilon}\bm{\epsilon}^{\prime})=\mathbf{B}+\mathbf{B}^{\prime}+\operatorname{tr}(\mathbf{B}) \mathbf{I}_d$, where $\mathbf{I}_d$ is an identity matrix and $\operatorname{tr}(\cdot)$ is the trace of a matrix.
\end{lemma}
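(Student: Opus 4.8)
The plan is to reduce the claimed matrix identity to a scalar computation entry by entry, and then invoke the standard formula for the fourth mixed moments of a standard Gaussian vector. First I would fix indices $i,j$ and expand the $(i,j)$th entry of the matrix $\bm{\epsilon}\bm{\epsilon}^{\prime}\mathbf{B}\bm{\epsilon}\bm{\epsilon}^{\prime}$ as a double sum, namely $[\bm{\epsilon}\bm{\epsilon}^{\prime}\mathbf{B}\bm{\epsilon}\bm{\epsilon}^{\prime}]_{ij}=\sum_{k,l} B_{kl}\,\epsilon_i\epsilon_k\epsilon_l\epsilon_j$. Taking expectations and using linearity, the problem reduces to evaluating $\mathbb{E}(\epsilon_i\epsilon_k\epsilon_l\epsilon_j)$ for each quadruple of indices.

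The key step is the Gaussian fourth-moment (Isserlis/Wick) identity: since the $\epsilon_i$ are independent standard normals, $\mathbb{E}(\epsilon_i\epsilon_k\epsilon_l\epsilon_j)=\delta_{ik}\delta_{lj}+\delta_{il}\delta_{kj}+\delta_{ij}\delta_{kl}$, where $\delta$ is the Kronecker delta. One can either cite this directly or re-derive it by a short case analysis on how the four indices pair up, noting that a nonzero term must split the four indices into two matched couples and that $\mathbb{E}(\epsilon^2)=1$ and $\mathbb{E}(\epsilon^4)=3$ make the three pairings contribute consistently. Substituting into the double sum and summing the three groups of Kronecker deltas separately yields $\sum_{k,l}B_{kl}\delta_{ik}\delta_{lj}=B_{ij}$, $\sum_{k,l}B_{kl}\delta_{il}\delta_{kj}=B_{ji}$, and $\sum_{k,l}B_{kl}\delta_{ij}\delta_{kl}=\delta_{ij}\operatorname{tr}(\mathbf{B})$. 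Hence $\mathbb{E}[\bm{\epsilon}\bm{\epsilon}^{\prime}\mathbf{B}\bm{\epsilon}\bm{\epsilon}^{\prime}]_{ij}=B_{ij}+B_{ji}+\delta_{ij}\operatorname{tr}(\mathbf{B})$, which is precisely the $(i,j)$th entry of $\mathbf{B}+\mathbf{B}^{\prime}+\operatorname{tr}(\mathbf{B})\mathbf{I}_d$, completing the proof.

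There is no serious obstacle here; the only thing requiring care is the index bookkeeping in the three Kronecker-delta contractions, in particular keeping straight which pairing produces $\mathbf{B}$ versus its transpose $\mathbf{B}^{\prime}$, and observing that the third pairing forces $i=j$ and contracts $\mathbf{B}$ along its diagonal to produce the trace term. If one prefers to avoid citing Isserlis' theorem outright, the self-contained route is to split according to whether the indices $i,k,l,j$ take four distinct values (expectation zero by independence and the vanishing of odd Gaussian moments), two distinct values, or a single repeated value, which reproduces the same three-term formula.
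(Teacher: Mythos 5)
Your proof is correct and takes essentially the same route as the paper's: an entry-wise expansion of $\mathbb{E}(\bm{\epsilon}\bm{\epsilon}^{\prime}\mathbf{B}\bm{\epsilon}\bm{\epsilon}^{\prime})$ reduced to Gaussian fourth moments. The only cosmetic difference is that you organize the moment computation via the Isserlis/Wick identity with Kronecker deltas, whereas the paper carries out the same case analysis ($i=j$ versus $i\neq j$) directly from $\mathbb{E}(\epsilon_i^2)=1$ and $\mathbb{E}(\epsilon_i^4)=3$; both yield the identical entries $B_{ij}+B_{ji}+\delta_{ij}\operatorname{tr}(\mathbf{B})$.
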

\begin{proof}[Proof]
	Without loss of generality, we assume $\bm{\epsilon}=(\epsilon_1, \ldots, \epsilon_d)^{\prime}$, so that $\epsilon_i, i=1, \ldots, d$ are i.i.d standard normal variables, and we have $\mathbb{E}(\epsilon_i^2)=1 $ and $\mathbb{E}(\epsilon_i^4)=3$. Denote by $\mathbf{B}=(b_{ij})_{i,j=1}^d$. Then it is easy to compute
	$$\bm{\epsilon}^{\prime}\mathbf{B}\bm{\epsilon}=\sum_{m=1}^{d}\sum_{n=1}^{d}b_{mn}\epsilon_m\epsilon_n.$$
	Thus, the $(i, j)$th element of $\mathbb{E}(\bm{\epsilon}\bm{\epsilon}^{\prime}\mathbf{B}\bm{\epsilon}\bm{\epsilon}^{\prime})$ is given by
	\begin{equation}\label{eq:A1}
		\mathbb{E}[\epsilon_i(\sum_{m=1}^{d}\sum_{n=1}^{d}b_{mn}\epsilon_m\epsilon_n)\epsilon_j ]=\left\{\begin{array}{lll}
			2b_{ii}+ \sum_{k=1}^{d}b_{kk}, & \text { if } & i=j \\
			b_{ij}+b_{ji}, & \text { if } & i \neq j
		\end{array}.\right.
	\end{equation}
	Then, it is easy to rewrite (\ref{eq:A1}) in matrix form $\mathbb{E}(\bm{\epsilon}\bm{\epsilon}^{\prime}\mathbf{B}\bm{\epsilon}\bm{\epsilon}^{\prime})=\mathbf{B}+\mathbf{B}^{\prime}+\operatorname{tr}(\mathbf{B}) \mathbf{I}_d$.
\end{proof}	

We then compute the score equations below.
\begin{proof}[Score equations]
	The calculation of $\mathbf{S}_{1}(\bm{\beta};\bm{\alpha},\bm{\lambda} )$ is straightforward and is omitted. 
	
	To calculate $\mathbf{S}_{2}(\bm{\alpha}; \bm{\beta}, \bm{\lambda} )$ we rewrite the log-likelihood (\ref{eq2}) as 
	\begin{equation} \label{eq:A2}
		l(\bm{\omega})=-\frac{1}{2}\sum_{i=1}^{n} \left( \log \left|\mathbf{D}_{i}^2\right|+\log \left|\mathbf{R}_{i} \right|+\operatorname{tr}\left(\mathbf{R}_{i}^{-1}\widehat{\mathbf{R}}_i \right)   \right).
	\end{equation}
	Since $\bm{\rho}_i=\operatorname{vecl}(\bm{R}_{i})=(\rho_{ijk}), (1\le k<j \le m_i)$ and $\frac{\partial \log \left|\mathbf{R}_{i} \right| }{\rho_{ijk}}=2(\mathbf{R}_i^{-1})_{jk}$, by the chain rule we have
	$$\frac{\partial \log \left|\mathbf{R}_{i} \right| }{\partial \bm{\gamma}_i}=\sum_{k<j} \left(\frac{\partial \log \left|\mathbf{R}_{i} \right| }{\rho_{ijk}}\right) \frac{\partial \rho_{ijk}}{\partial \bm{\gamma}_i}=2\sum_{k<j}(\mathbf{R}_i^{-1})_{jk} \frac{\partial \rho_{ijk}}{\partial \bm{\gamma}_i}=2\left( \frac{\partial \bm{\rho}_i}{\partial \bm{\gamma}_i}\right)^{\prime} \operatorname{vecl}(\mathbf{R}_i^{-1}),  $$
	and 
	\begin{align} \label{eq:A3}
		\frac{\partial\operatorname{tr}\left(\mathbf{R}_{i}^{-1}\widehat{\mathbf{R}}_i \right)}{\partial \bm{\gamma}_i}=&\sum_{j,k=1}^{m_i} \left(\frac{\partial \operatorname{tr}\left(\mathbf{R}_{i}^{-1}\widehat{\mathbf{R}}_i \right)}{\partial \mathbf{R}_{i}^{-1}} \right)_{jk}\frac{\partial (\mathbf{R}_{i}^{-1})_{jk}}{\partial \bm{\gamma}_i}=\sum_{j,k=1}^{m_i}(\widehat{\mathbf{R}}_i)_{jk} \frac{\partial (\mathbf{R}_{i}^{-1})_{jk}}{\partial \bm{\gamma}_i} \notag \\
		=& \left( \frac{\partial \bm{\rho}_i}{\partial \bm{\gamma}_i}\right)^{\prime} \sum_{j,k=1}^{m_i}(\widehat{\mathbf{R}}_i)_{jk} \frac{\partial (\mathbf{R}_{i}^{-1})_{jk}}{\partial \bm{\rho}_i}.
	\end{align}
	Because
	$\frac{\partial (\mathbf{R}_{i}^{-1})_{jk}}{\partial \mathbf{R}_i}=\frac{\partial \mathbf{R}_i^{-1}}{\partial (\mathbf{R}_i)_{jk}}=-\mathbf{R}_i^{-1}\mathbf{E}_{jk}\mathbf{R}_i^{-1} $, where $\mathbf{E}_{jk}$ is the selection matrix with 1 in its $(j,k)$th element and zero otherwise, by the definition of  $\bm{\rho}_i=\operatorname{vecl}(\bm{R}_{i})$ we have
	\begin{equation} \label{eq:A4}
		\frac{\partial (\mathbf{R}_{i}^{-1})_{jk}}{\partial \bm{\rho}_i}=-2 \operatorname{vecl}(\mathbf{R}_i^{-1}\mathbf{E}_{jk}\mathbf{R}_i^{-1} ).
	\end{equation}
	Here, the 2 in the right-hand side of  (\ref{eq:A4}) is due to a change in an element of $\bm{\rho}_i$ affecting two symmetric entries in the matrix $\mathbf{R}_i$. Substituting equation (\ref{eq:A4}) in equation (\ref{eq:A3}), we obtain
	\begin{align}
		\frac{\partial\operatorname{tr}\left(\mathbf{R}_{i}^{-1}\widehat{\mathbf{R}}_i \right)}{\partial \bm{\gamma}_i}=&-2\left( \frac{\partial \bm{\rho}_i}{\partial \bm{\gamma}_i}\right)^{\prime} \sum_{j,k=1}^{m_i}(\widehat{\mathbf{R}}_i)_{jk}\operatorname{vecl}(\mathbf{R}_i^{-1}\mathbf{E}_{jk}\mathbf{R}_i^{-1} ) \notag \\
		=&-2\left( \frac{\partial \bm{\rho}_i}{\partial \bm{\gamma}_i}\right)^{\prime} \operatorname{vecl}(\mathbf{R}_i^{-1}\widehat{\mathbf{R}}_i \mathbf{R}_i^{-1} ).\notag
	\end{align}
	Therefore,
	$$
	\frac{\partial l(\bm{\omega}) }{\partial \bm{\alpha}}=-\frac{1}{2}\sum_{i=1}^{n}\mathbf{W}_i^{\prime}\left(\frac{\partial \log \left|\mathbf{R}_{i} \right| }{\partial \bm{\gamma}_i}+\frac{\partial\operatorname{tr}\left(\mathbf{R}_{i}^{-1}\widehat{\mathbf{R}}_i \right)}{\partial \bm{\gamma}_i}  \right) =\sum_{i=1}^{n}\mathbf{W}_i^{\prime} \left( \frac{\partial \bm{\rho}_i}{\partial \bm{\gamma}_i}\right)^{\prime} \operatorname{vecl}(\mathbf{R}_i^{-1}\widehat{\mathbf{R}}_i \mathbf{R}_i^{-1}-\mathbf{R}_i^{-1}),	
	$$
	and this establishes $\mathbf{S}_{2}(\bm{\alpha};\bm{\beta},\bm{\lambda} )$. 
	
	To calculate $\mathbf{S}_{2}(\bm{\lambda};\bm{\beta},\bm{\alpha})$, it is easy to see that
	$\frac{\partial \log \left|\mathbf{D}_{i}^2\right|}{\partial \bm{\lambda}}= \mathbf{Z}_i^{\prime} \mathbf{1}_{m_i},$
	where $\mathbf{1}_{m_i}$ is $m_i\times 1$ vector with elements 1. Since the parameter $\bm{\lambda}$ is only on the diagonal elements of $\mathbf{D}_i$ for $i=1,\ldots,n$, by the chain rule, we have
	\begin{align}
		\frac{\partial \operatorname{tr}\left(\mathbf{R}_i^{-1} \mathbf{D}_i^{-1}\bm{\nu}_i \bm{\nu}_i^{\prime}\mathbf{D}_i^{-1}\right)}{\partial \bm{\lambda}}=&\sum_{j=1}^{m_i} \left(\frac{\partial \operatorname{tr}\left(\mathbf{R}_i^{-1} \mathbf{D}_i^{-1}\bm{\nu}_i \bm{\nu}_i^{\prime}\mathbf{D}_i^{-1}\right)}{\partial \mathbf{D}_i^{-1}}\right)_{jj} \frac{\partial \sigma_{ij}^{-1}}{\partial \bm{\lambda}} \notag \\
		=&-\sum_{j=1}^{m_i}\left(\mathbf{R}_i^{-1} \mathbf{D}_i^{-1}\bm{\nu}_i \bm{\nu}_i^{\prime}\mathbf{D}_i^{-1}\right)_{jj} \frac{2\partial \sigma_{ij}}{\sigma_{ij}\partial \bm{\lambda}} \notag\\
		=&-\sum_{j=1}^{m_i}\left(\mathbf{R}_i^{-1} \mathbf{D}_i^{-1}\bm{\nu}_i \bm{\nu}_i^{\prime}\mathbf{D}_i^{-1}\right)_{jj} \frac{\partial \log(\sigma_{ij}^2)}{\partial \bm{\lambda}} \notag \\
		=&-\mathbf{Z}_i^{\prime} \operatorname{diag}\left(\mathbf{R}_i^{-1} \mathbf{D}_i^{-1}\bm{\nu}_i \bm{\nu}_i^{\prime}\mathbf{D}_i^{-1}\right).  \notag
	\end{align}
	Thus
	$$\frac{\partial l(\bm{\omega}) }{\partial \bm{\lambda}}=\frac{1}{2}\sum_{i=1}^{n}\mathbf{Z}_i^{\prime}\left( \operatorname{diag}(\mathbf{R}_i^{-1} \mathbf{D}_i^{-1}\bm{\nu}_i \bm{\nu}_i^{\prime}\mathbf{D}_i^{-1})-\mathbf{1}_{m_i} \right), $$
	and we have completed the derivation.
\end{proof}

\begin{proof}[The proof of  Lemma \ref{pro1} and the calculation of the Fisher information matrix]

	The calculation of $\mathbf{I}_{11}(\bm{\omega})$ is trivial.  
	Since $\bm{\Sigma}_i, i=1, \ldots, n$ only depend on $\bm{\alpha}$ and $\bm{\lambda}$, it is easy to see that $\mathbf{I}_{12}(\bm{\omega})=\mathbf{0} $ and $\mathbf{I}_{13}(\bm{\omega})=\mathbf{0}$. 
	
	Recall that $\bm{\eta}_i=\operatorname{vecl} (\mathbf{R}_i^{-1}\widehat{\mathbf{R}}_i\mathbf{R}_i^{-1}-\mathbf{R}_i^{-1} )=(\eta_{ijk}), 1\le k<j\le m_i$. For $\mathbf{I}_{22}(\bm{\omega})$, the key is to compute the $\frac{m_i(m_i-1)}{2}\times \frac{m_i(m_i-1)}{2}$ matrix  $\mathbf{J}_i=\mathbb{E}(\bm{\eta}_i \bm{\eta}_i^\prime) $. Since $\mathbb{E}((\mathbf{R}_i^{-1}\widehat{\mathbf{R}}_i\mathbf{R}_i^{-1})_{jk} )=(\mathbb{E}(\mathbf{R}_i^{-1}\widehat{\mathbf{R}}_i\mathbf{R}_i^{-1}))_{jk}=(\mathbf{R}_i^{-1})_{jk} $, the $(\frac{(2n-k)(k-1)}{2}+j-k, \frac{(2n-s)(s-1)}{2}+l-s)$th element of $\mathbf{J}_i$ for $1\le k<j\le m_i, 1\le s<l\le m_i $ is given by
	$$\mathbb{E}(\eta_{ijk}\eta_{ils})=\mathbb{E}\left[(\mathbf{R}_i^{-1}\widehat{\mathbf{R}}_i\mathbf{R}_i^{-1})_{jk} (\mathbf{R}_i^{-1}\widehat{\mathbf{R}}_i\mathbf{R}_i^{-1})_{ls} \right]-a_{ijk}a_{ils}, $$
	where $a_{ijk}$ is the $(j,k)$th element of $\mathbf{R}_i^{-1}$. Denote by $\bm{\epsilon}_i=\mathbf{R}_i^{-\frac{1}{2}}\mathbf{D}_i^{-1}\bm{\nu}_i $ so that $\bm{\epsilon}_i\sim \mathcal{N}(\mathbf{0}, \mathbf{I}_{m_i}) $	and  $\mathbf{R}_i^{-1}\widehat{\mathbf{R}}_i\mathbf{R}_i^{-1}=\mathbf{R}_i^{-\frac{1}{2}}\bm{\epsilon}_i\bm{\epsilon}_i^{\prime}\mathbf{R}_i^{-\frac{1}{2}} $. Let $\mathbf{T}_{ij} $ be the $j$th column of $\mathbf{R}_i^{-\frac{1}{2}}$. We have $\mathbf{T}_{ij}^{\prime}\mathbf{T}_{ik}=a_{ijk}$, and $(\mathbf{R}_i^{-1}\widehat{\mathbf{R}}_i\mathbf{R}_i^{-1})_{jk}= \mathbf{T}_{ij}^{\prime}\bm{\epsilon}_i\bm{\epsilon}_i^{\prime} \mathbf{T}_{ik}$. Thus $$\mathbb{E}\left[(\mathbf{R}_i^{-1}\widehat{\mathbf{R}}_i\mathbf{R}_i^{-1})_{jk} (\mathbf{R}_i^{-1}\widehat{\mathbf{R}}_i\mathbf{R}_i^{-1})_{ls} \right]=\mathbf{T}_{ij}^{\prime}\mathbb{E}\left( \bm{\epsilon}_i\bm{\epsilon}_i^{\prime} \mathbf{T}_{ik}\mathbf{T}_{il}^{\prime}\bm{\epsilon}_i\bm{\epsilon}_i^{\prime} \right)\mathbf{T}_{is}.$$
	From Lemma \ref{lem1},  
	$$\mathbb{E}(\eta_{ijk}\eta_{ils})=\mathbf{T}_{ij}^{\prime} \mathbf{T}_{ik}\mathbf{T}_{il}^{\prime}\mathbf{T}_{is}+\mathbf{T}_{ij}^{\prime} \mathbf{T}_{il}\mathbf{T}_{ik}^{\prime}\mathbf{T}_{is}+\mathbf{T}_{ij}^{\prime} \operatorname{tr}(\mathbf{T}_{ik}\mathbf{T}_{il}^{\prime})\mathbf{T}_{is}-a_{ijk}a_{ils}=a_{ijl}a_{iks}+a_{ijs}a_{ikl},  $$	
	and this proves the first part of Lemma \ref{pro1}.
	
	Similarly, recall that $\bm{\phi}_i=\mathbf{h}_{i}-\mathbf{1}_{m_{i}}=(\phi_{il}), 1\le l\le m_i $, where $\mathbf{h}_{i}=\operatorname{diag}(\mathbf{R}_i^{-1} \mathbf{D}_i^{-1}\bm{\nu}_i \bm{\nu}_i^{\prime}\mathbf{D}_i^{-1})$. Then for  $\mathbf{I}_{23}(\bm{\omega})$, the key is to compute the $\frac{m_i(m_i-1)}{2}\times m_i$ matrix  $\mathbf{H}_i=\mathbb{E}(\bm{\eta}_i \bm{\phi}_i^\prime) $. Since $\mathbb{E}(\mathbf{h}_{i} )=\mathbf{1}_{m_i} $, the $(\frac{(2n-k)(k-1)}{2}+j-k, l)$th element of $\mathbf{H}_i$ for $1\le k<j\le m_i, 1\le l\le m_i $ is given by
	$$\mathbb{E}(\eta_{ijk}\phi_{il})=\mathbb{E}\left[(\mathbf{R}_i^{-1}\widehat{\mathbf{R}}_i\mathbf{R}_i^{-1})_{jk} (\mathbf{R}_i^{-1}\widehat{\mathbf{R}}_i)_{ll}\right]-a_{ijk}. $$
	Note $\mathbf{R}_i^{-1} \mathbf{D}_i^{-1}\bm{\nu}_i \bm{\nu}_i^{\prime}\mathbf{D}_i^{-1}= \mathbf{R}_i^{-\frac{1}{2}}\bm{\epsilon}_i\bm{\epsilon}_i^{\prime}\mathbf{R}_i^{\frac{1}{2}}$,  and denote by $\mathbf{P}_{ij}$  the $j$th column of $\mathbf{R}_i^{\frac{1}{2}}$. We have $\mathbf{P}_{ij}^{\prime}\mathbf{P}_{ik}=\rho_{ijk}$, $\mathbf{T}_{ij}^{\prime}\mathbf{P}_{ik}=\delta_{jk}$, and $(\mathbf{R}_i^{-1}\widehat{\mathbf{R}}_i)_{ll}=\mathbf{T}_{il}^{\prime}\bm{\epsilon}_i\bm{\epsilon}_i^{\prime} \mathbf{P}_{il} $. Thus
	$$\mathbb{E}\left[(\mathbf{R}_i^{-1}\widehat{\mathbf{R}}_i\mathbf{R}_i^{-1})_{jk} (\mathbf{R}_i^{-1}\widehat{\mathbf{R}}_i)_{ll}\right]=\mathbf{T}_{ij}^{\prime}\mathbb{E}\left( \bm{\epsilon}_i\bm{\epsilon}_i^{\prime} \mathbf{T}_{ik}\mathbf{T}_{il}^{\prime}\bm{\epsilon}_i\bm{\epsilon}_i^{\prime} \right)\mathbf{P}_{il}. $$
	By using Lemma \ref{lem1} we have
	$$\mathbb{E}(\eta_{ijk}\phi_{il})=\mathbf{T}_{ij}^{\prime} \mathbf{T}_{ik}\mathbf{T}_{il}^{\prime}\mathbf{P}_{il}+\mathbf{T}_{ij}^{\prime} \mathbf{T}_{il}\mathbf{T}_{ik}^{\prime}\mathbf{P}_{il}+\mathbf{T}_{ij}^{\prime} \operatorname{tr}(\mathbf{T}_{ik}\mathbf{T}_{il}^{\prime})\mathbf{P}_{il}-a_{ijk}=a_{ijl}\delta_{jl}+a_{ikl}\delta_{kl}. $$		
	This completes the proof of Lemma \ref{pro1}.
	
	For $\mathbf{I}_{33}(\bm{\omega})$, the key is to compute the $m_i \times m_i$ matrix $ \mathbb{E}(\bm{\phi}_i \bm{\phi}_i^\prime)$. According to the proof above, the $(j,k)$th element of $ \mathbb{E}(\bm{\phi}_i \bm{\phi}_i^\prime)$ for $j,k=1,\ldots,m_i$ can be calculated as
	$$\mathbb{E}(\phi_{ij}\phi_{ik})=\mathbb{E}\left[(\mathbf{R}_i^{-1}\widehat{\mathbf{R}}_i)_{jj} (\mathbf{R}_i^{-1}\widehat{\mathbf{R}}_i)_{kk}\right]-1, $$
	and
	$$\mathbb{E}\left[(\mathbf{R}_i^{-1}\widehat{\mathbf{R}}_i)_{jj} (\mathbf{R}_i^{-1}\widehat{\mathbf{R}}_i)_{kk}\right] =\mathbf{T}_{ij}^{\prime}\mathbb{E}\left( \bm{\epsilon}_i\bm{\epsilon}_i^{\prime} \mathbf{P}_{ij}\mathbf{T}_{ik}^{\prime}\bm{\epsilon}_i\bm{\epsilon}_i^{\prime} \right)\mathbf{P}_{ik}.$$ 	
	Similarly, using Lemma \ref{lem1} we have
	$$\mathbb{E}(\phi_{ij}\phi_{ik})=\mathbf{T}_{ij}^{\prime} \mathbf{P}_{ij}\mathbf{T}_{ik}^{\prime}\mathbf{P}_{ik}+\mathbf{T}_{ij}^{\prime} \mathbf{T}_{ik}\mathbf{P}_{ij}^{\prime}\mathbf{P}_{ik}+\mathbf{T}_{ij}^{\prime} \operatorname{tr}(\mathbf{P}_{ij}\mathbf{T}_{ik}^{\prime})\mathbf{P}_{ik}-1=a_{ijk}\rho_{ijk}+\delta_{jk}. $$
	Thus, we have $\mathbb{E}(\bm{\phi}_i \bm{\phi}_i^\prime)=\mathbf{R}_i^{-1}\circ \mathbf{R}_i+\mathbf{I}_{m_i}$, where $\circ$ denotes the Hadamard product.	
\end{proof}

\begin{proof}[Proof of Theorem 1]
	The proof follows standard steps  under the regularity conditions; we omit the details here.	
\end{proof}

\subsection{Malaria immune response data in Benin} \label{ex:long}
As an application to analyze longitudinal data,
we  apply our framework to a malaria immune response data set studied by \citet{Adjakossa:2016}, where  the primary goal of the study is analyzing the malaria incidence in children in their early months via  examining various  antigens' measurements related to the  immune reactions against the disease.
In this study,  multiple measurements are available for the same subject, and the between-measurements correlations is known important for the analysis.  An interesting aspect of this dataset is that the measurements of each child were sequentially ordered but the exact time of a measurement was not recorded.  Additionally,  two time-dependent variables -- mosquito exposures (pred\_trim) and  nutrition scores (nutri\_trim) -- are available; and assessing the effect of their inclusion in the model is an aim of our study.  In addition, we assess the effect on model fitting due to missing or erroneous ordering of the data, so as to show the robustness of our method. For comparison purposes, we include existing methods  that are designed  for longitudinal data. 

The response  variable is the level of the protein coded as IgG1\_A1 in the children, assessed at 3, 6, 9, 12, 15, and 18 months. This variable was obtained by using two recombinant \textit{P. falciparum} antigens to perform antibody quantification by Enzyme-Linked ImmunoSorbent Assay  standard methods developed for evaluating malaria vaccines by the African Malaria Network Trust (AMANET [www.amanet148trust.org]).
Due to missing data, the number of measurements for each individual varies from 2 to 5. In total,
this data set contains 316 individuals with 1292 measurements.  Together, there are seven  covariates as described in Table \ref{tb:4}. 
\begin{table}[ht]
	\begin{center}
		\caption{Covariates in the malaria immune response data}
		\label{tb:4} 
		\resizebox{\textwidth}{18mm}{
			\begin{tabular}{l|l}
				
				\toprule[1.2pt]
				Covariate & Description (range) \\
				\specialrule{0.05em}{0pt}{0pt}
				CO.IgG1\_A1 &measured concentration of IgG1\_A1 in the umbilical cord blood [-4.59, 8.21]\\
				M3.IgG1\_A1 &predicted concentration of IgG1\_A1 in the child’s peripheral blood at 3 months [-5.83, 4.13] \\
				ap &placental apposition (0=apposition, 1=non apposition)\\
				hb&hemoglobin level [5.7, 17.1]\\
				inf\_trim &number of malaria infections in the previous 3 months [1, 5]\\
				pred\_trim &quarterly average number of mosquito child is exposed to [0.028, 24.25]\\
				nutri\_trim &quarterly average nutrition scores [0, 1]\\	
				
				\bottomrule[1.2pt]
				
		\end{tabular}}
		
	\end{center}
\end{table}

Since the exact times of the measurements were not recorded, when applying the methods for analyzing longitudinal data, we take $\mathbf{t}_i=(1,\ldots, m_i)^{\prime}$ to be the sequential indices of the observations. 
For exploratory analysis, we select all the individuals with 4 and 5 measurements,  and  calculate the respective $4\times 4$ and $5\times 5$ empirical correlation matrices; these correlation matrices allow us to produce the GZT-correlogram. 
By examining empirical versions of $\bm{\gamma}$ versus the differences in two variables -- time $\mathbf{t}_i$ and  mosquito exposures (pred\_trim) --  respectively in Figure \ref{fig4}, we see that ${\gamma}$ decreases with  the difference  in mosquito exposures,  while the trend in the time lag is not clear. This makes sense since malaria is caused by a parasite which is passed to humans through mosquito bites, so that the mosquito exposures are a pronounced common factor affecting the responses.  Meanwhile, the time-lag does not seem very informative, this might be due to the fact that we do not have the exact times available.  
Additionally,  since mosquito bites are seasonal,  the time-lag alone may offer relatively limited insight in revealing the correlation between measurements.

\begin{figure}[htbp]
	\centering
	\subfigure[]{
		\includegraphics[width=6cm]{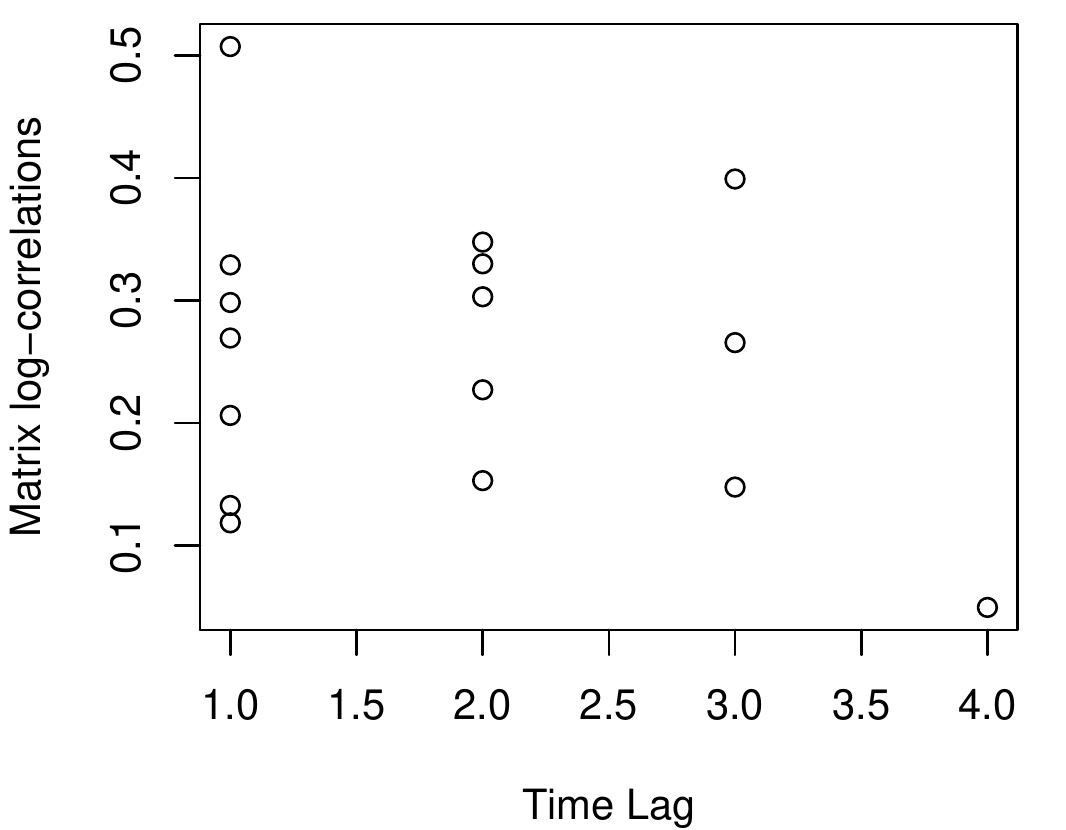}
	}
	\subfigure[]{
		\includegraphics[width=6cm]{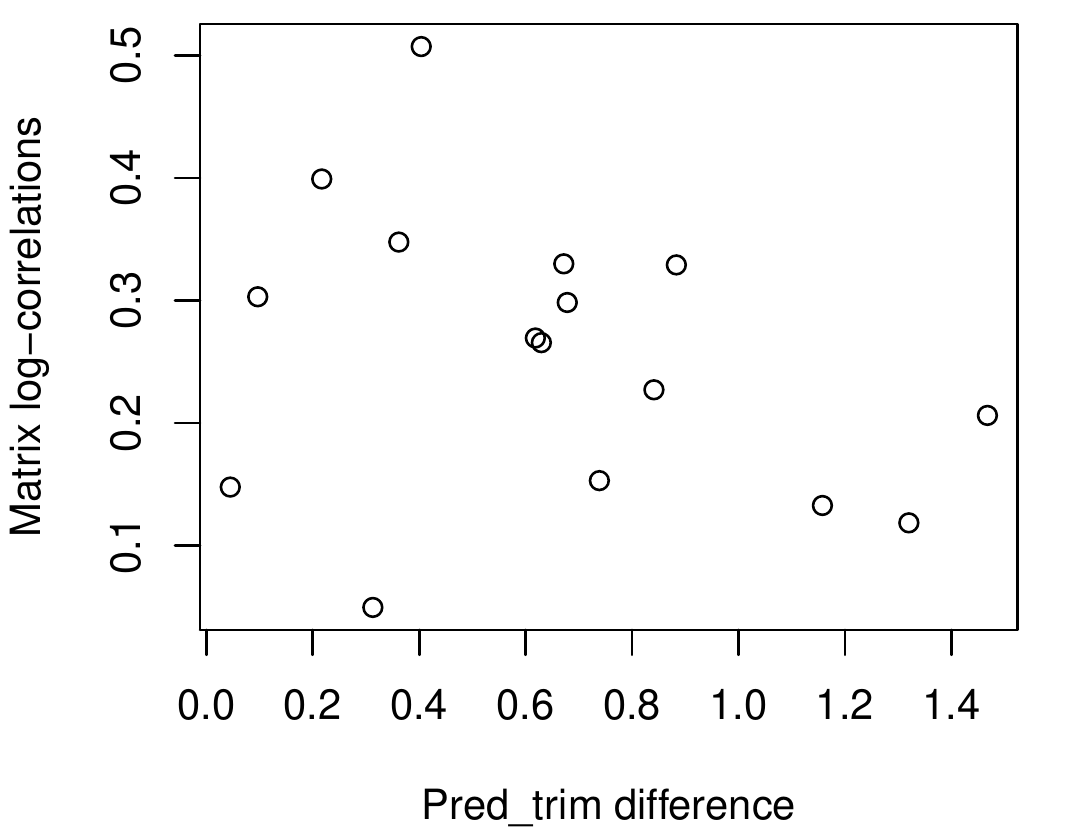}
	}
	
	\caption{ Malaria immune response data. (a): the empirical matrix log-correlations against time lag; (b) the empirical matrix log-correlations against pred\_trim difference.}
	\label{fig4}
\end{figure}
For comparison, we apply 
the approaches in \citet{Pourah:1999} and \citet{Zhang:2015} that require ordering of the observations.  
Furthermore, it is interesting to compare the performance between approaches when the ordering is not maintained.  Having these in mind,   we conduct the following analysis.  We begin with analyzing the data by using their correct ordering, 
then the robustness of different methods is examined by applying them on the data  with a random permutation that breaks down the ordering of the data from the same subject. 
In particular,
we consider  the following three specifications, taking the same linear mean model including all the covariates:

\begin{itemize}
	\item ``time model'':  This specification favors approaches requiring ordering. We use a polynomial of  $t_{ij}$ for the log-variance model and a polynomial of the time lag for the correlation model for our approach and that in \citet{Zhang:2015}, and use a polynomial of  $t_{ij}$ for the log-innovation model and a polynomial of the time lag for the autoregressive model for the modified Cholesky decomposition in \citet{Pourah:1999}. The optimal models all turn out to have $q=1$ and $d=1$ when using BIC as the criterion to choose an optimal model;
	\item ``pred\_trim model'':  This specification maintains the original ordering of the data. We use pred\_trim for the log-variance model  and the difference between the $j$th and $k$th  pred\_trim of individual $i$ for the correlation model for our approach and that in \citet{Zhang:2015}. For the modified Cholesky decomposition in \citet{Pourah:1999}, 
	we use pred\_trim for its log innovation model and  the pred\_trim difference for its autoregressive model;
	\item ``model with permuted data'':
	This specification  randomly permutes the ordering, mimicing  un-ordered clustered data. 
	The model  specification is the same as the previous setting,  but  the measurements of each child are randomly permuted. All the approaches under comparison will use the permuted order as the true order to construct their models. 
	This random permutation is conducted 50 times and the results are averaged. 
\end{itemize}

The results of the comparison are presented in Table \ref{tb:2}. We can see clearly that  the proposed method outperforms the other two approaches,  with or without knowing the ordering of the measurements.   Furthermore,  the methods of \citet{Pourah:1999} and \citet{Zhang:2015} are satisfactory only when the ordering is known; but  they clearly fail to fit the data  if the correct ordering is not maintained. This observation raises a cautionary flag when applying those ordering-sensitive methods in cases when such information may be erroneous or unknown.

We plot selective fitted variances and correlations in Figure \ref{fig2}, and elaborate some  findings that could be useful in further study designs and data analysis of this kind.    
First, consistent with the observations from Figure \ref{fig4}, we observe a decreasing trend in correlations when the differences between pred\_trim becomes smaller.   This is probably due to the fact that the measurements tend to move in the same direction for the subjects sharing similar levels of  mosquito exposures. 
The mosquito exposure is also seen to have an increasing trend for the log-variances as in panel (d) of  Figure \ref{fig4}: higher levels of exposures are associated with higher levels of variations in the measurements.  
Intuitively, when the mean of mosquito exposure increases, one expects that the variance of its impact also increases, due to heterogeneous conditions of the children at their early months
A possible reason is that not all the  families are subject to the same public health conditions, causing their responses to   higher levels of mosquito exposure more varying,  while these variations are  lower when there is less severe mosquito exposure.   We observe that the effect due to mosquito exposure is clearly dominating the pattern in the variance. 
After  the contribution of mosquito exposure is accounted for, a decreasing trend in the variance over time is seen; but the range of the variations  is very small. 
This is understandable as  there is  likely some seasonality in this problem and there is clear data evidence that a more effective variable is available to explain the variations of the response. 
Another interesting aspect of this data analysis is that the exact time the observations were obtained are unknown, so that using the time variable alone could lead to less accurate results. Our approach provides a useful alternative in case there are other variables that carry information about the time, and can incorporate those variables effortlessly as we have demonstrated.

\newcolumntype{H}{>{\setbox0=\hbox\bgroup}c<{\egroup}@{}}
\begin{table}[!htb]
	\begin{center}
		\caption{Malaria immune response data: Comparison of different models}
		\label{tb:2} 
		\begin{tabular}{cHHcccccccc}
			
			\toprule[1.2pt]
			
			&\multirow{2}*{ $(q, d)$}	&\multirow{2}*{  \tabincell{c}{ Number of\\ parameters}}&\multicolumn{2}{c}{Our approach}&& \multicolumn{2}{c}{\citet{Pourah:1999}}   && \multicolumn{2}{c}{\citet{Zhang:2015}}                   \\
			
			\cline{4-5} \cline{7-8} \cline{10-11}
			\specialrule{0em}{3pt}{3pt}
			Model& &  & log-likelihood &BIC &&log-likelihood  &BIC&&log-likelihood  &BIC\\
			\specialrule{0.05em}{3pt}{3pt}
			\multirow{1}*{time}
			& $(1,1)$  & 11 & $-1241.65$ & 8.059 & &$-1244.09$ & 8.093 && $-1244.96$ & 8.098 \\
			\specialrule{0.05em}{3pt}{3pt}
			pred\_trim & $(1,1)$ & 11 & $-1233.04$ & 8.004 & &$-1240.16$ & 8.050& & $-1237.37$ & 8.032\\
			\specialrule{0.05em}{3pt}{3pt}
			permuted & $(1,1)$ & 11 & $-1233.04$ & 8.004 & &$-1707.99$ & 11.01& & $-6361.83$ & 40.47\\
			\bottomrule[1.2pt]
		\end{tabular}
		
	\end{center}
\end{table}

\begin{figure}[htbp]
	\centering
	\subfigure[ ]{
		\includegraphics[width=5cm]{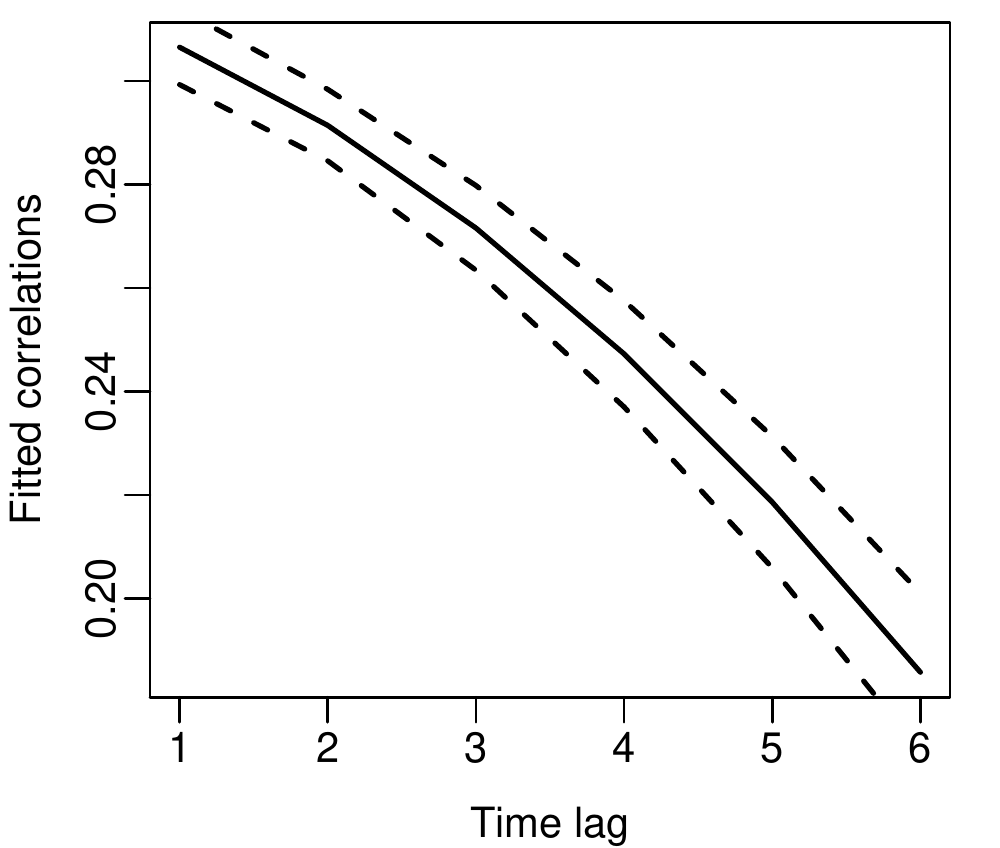}
	}\quad
	\subfigure[]{
		\includegraphics[width=5cm]{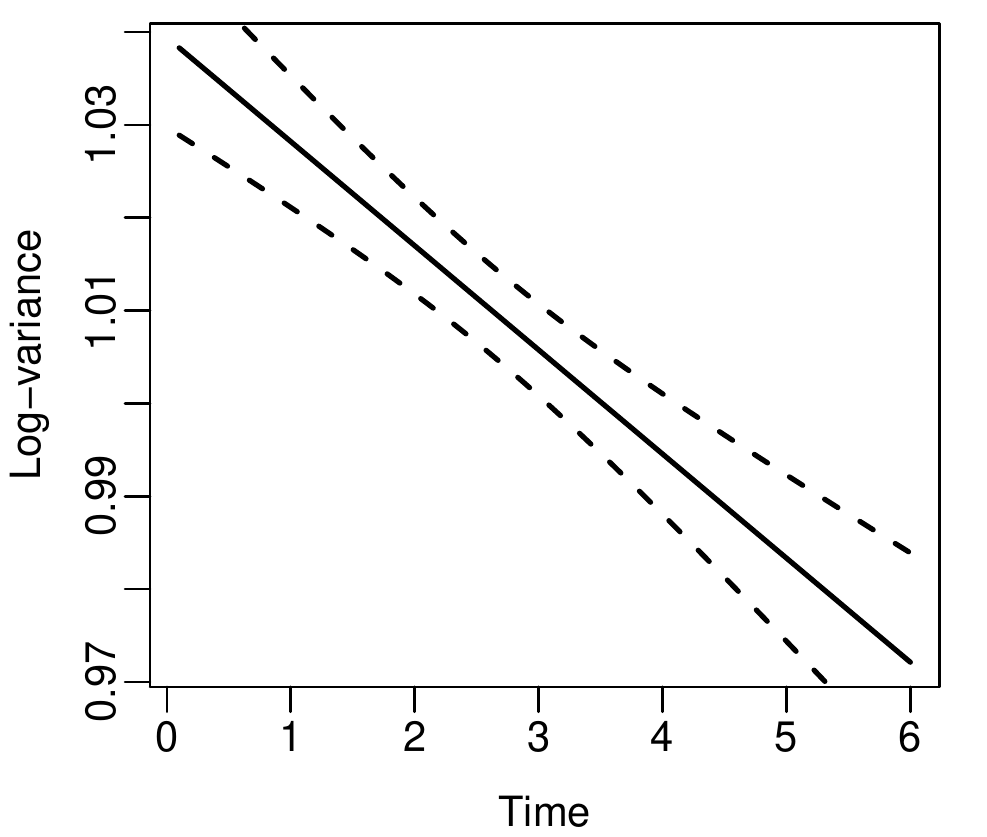}
	}\\
	\subfigure[ ]{
		\includegraphics[width=5cm]{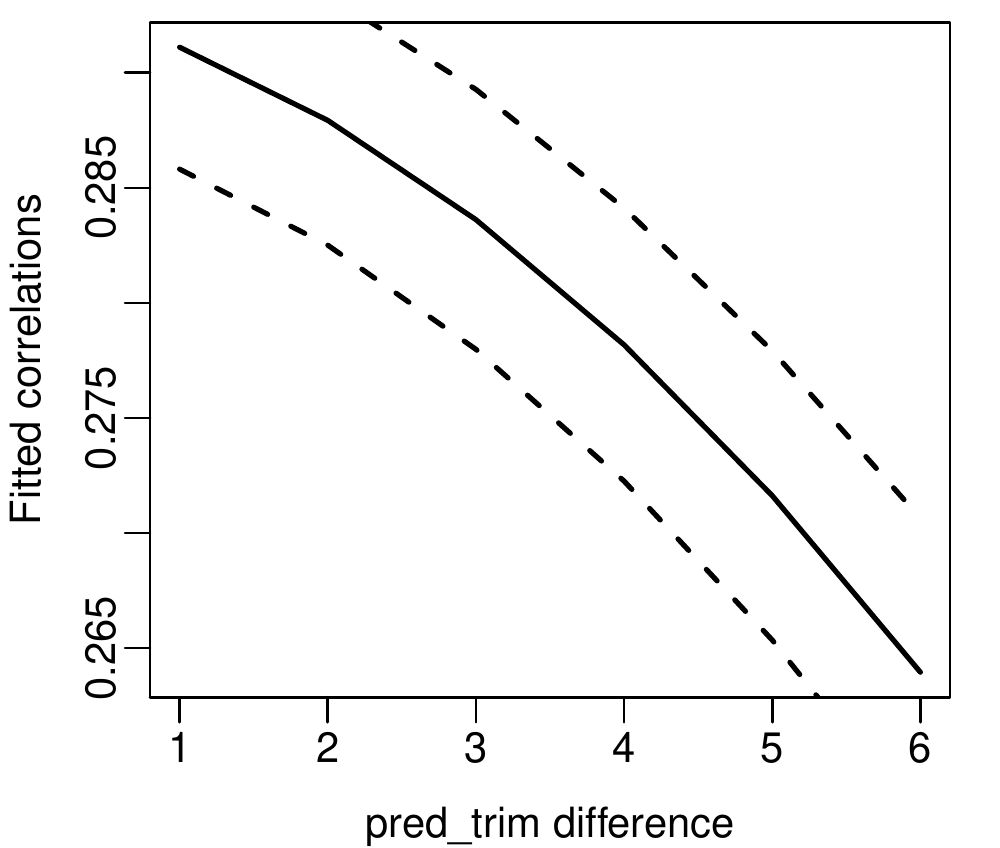}
	}\quad
	\subfigure[ ]{
		\includegraphics[width=5cm]{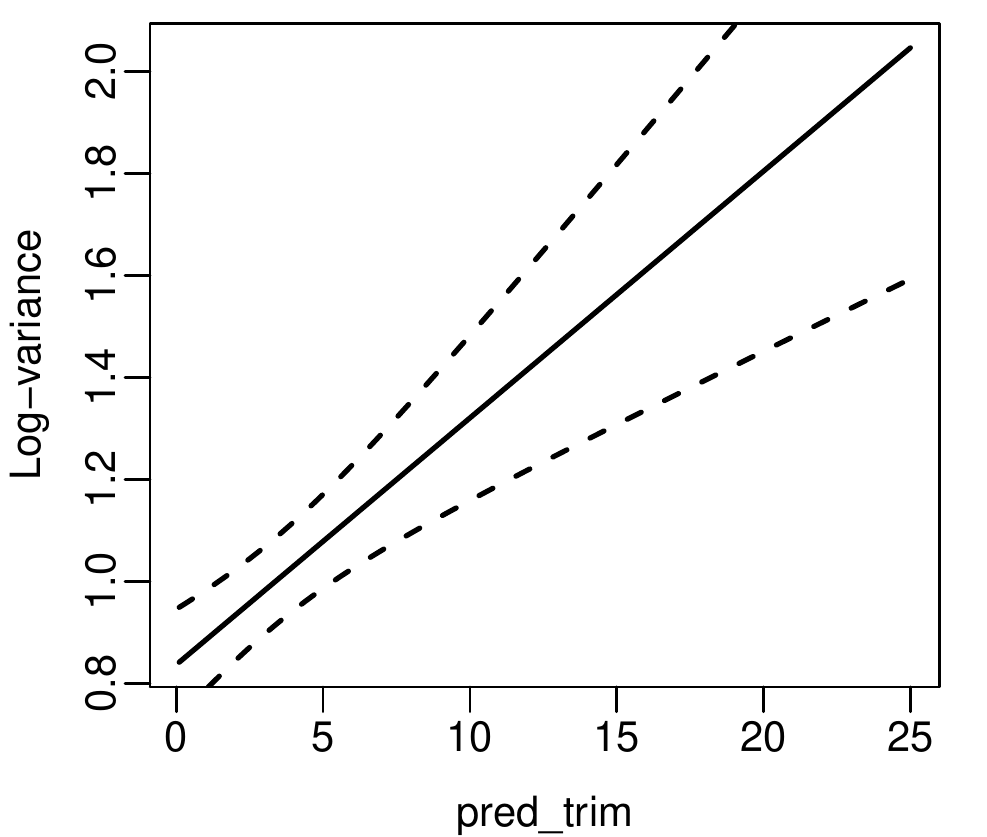}
	}
	\caption{ Malaria immune response data: the dotted lines indicate the asymptotic 95\% confidence intervals: (a) the fitted correlations against time lag; (b) the fitted log-variances against time; (c) the fitted correlations against pred\_trim difference; (d) the fitted log-variances against pred\_trim. }
	\label{fig2}
\end{figure}

\subsection{Simulation}\label{simu}
In this section, we investigate the finite sample performance of the proposed method via three studies. In Study 1, the data are generated from our model to validate our theory with a view to check its robustness.  In Study 2 and Study 3, we generate data inspired by the classroom data and compare our approach to the linear mixed-effects model to demonstrate the usefulness of our approach. 

\textit{Study 1}. The data sets are generated from the following model
$$
\left\{\begin{array}{c}
	y_{i j}=\beta_{0}+x_{i j 1} \beta_{1}+x_{i j 2} \beta_{2}+e_{i j} \quad \left(i=1, \ldots, n ; j=1, \ldots, m_{i}\right) \\
	\gamma_{i j k}=\alpha_{0}+w_{i j k 1} \alpha_{1}+w_{i j k 2} \alpha_{2} \quad \left(i=1, \ldots, n ; 1\le k<j\le m_{i}\right) \\
	\log(\sigma_{i j}^{2})=\lambda_{0}+z_{i j 1} \lambda_{1}+z_{i j 2} \lambda_{2} \quad \left(i=1, \ldots, n ; j=1, \ldots, m_{i}\right)
\end{array} \right.
$$
where $m_{i}-1 \sim$ binomial$(6, 0.8)$ gives rise to different numbers of repeated measurements $m_{i}$ for each subject, the covariate $\mathbf{x}_{i j}=\left(x_{i j 1}, x_{i j 2}\right)^{\prime}$ is generated from a multivariate normal distribution with mean $\mathbf{0}$, marginal variance $1$ and correlation $0.5$, and we set $\mathbf{z}_{i j}=\mathbf{x}_{i j}$. For $\mathbf{w}$, we set $\mathbf{w}_{i j k}=\left(1, u_{i j}-u_{i k},\left(u_{i j}-u_{i k}\right)^{2}\right)^{\prime}$ with $u_{i j}$ generated from the uniform $(0,1)$ distribution.   
The values of the parameters are set to be $\bm{\beta}=(1.0, -0.5, 0.5)^{\prime}$, $\bm{\alpha}=(0.3,-0.2,0.3)^{\prime}$ and $\bm{\lambda}=(-0.5, 0.5, -0.3)^{\prime}$. We generate 1,000 data sets and consider sample sizes $n=50, 100,$ or $200$.

To examine the robustness of the method against mis-specification of the Gaussian assumption,   we further consider generating $\mathbf{e}_i=(e_{i1}, \ldots, e_{im_i})^{\prime}$ from multivariate $t$ distribution with a covariance matrix $\Sigma_i$ and degrees of freedom of 10, 5 and 3, respectively.  Here, the smaller the degree of freedom is, the heavier the tail of the error distribution is.  

The results of the simulations are reported in Table \ref{tb:3}, summarized by the following measures: the  mean absolute distance (MAD) from the estimations to the truth for each of the model parameters, the mean prediction $\parallel \widehat{\mu} \parallel=\frac{1}{n}\sum_{a=1}^{n}\parallel  \mathbf{x}_{a}^{\prime}(\widehat{\bm{\beta}}-\bm{\beta}_0) \parallel $  and the covariance prediction $  \parallel \widehat{\Sigma} \parallel=\frac{1}{n}\sum_{a=1}^{n}\parallel \widehat{\bm{\Sigma}}_a-\bm{\Sigma}_{0a} \parallel$.

Overall, the performance of our method in this study is satisfactory, showing substantial improvement when the sample size increases. When the error distribution is $t$ instead of normal, the likelihood function is mis-specified and the performance suffers. This is reflected in the larger values of all the error measures.   The estimation of the correlation parameter is particularly affected, especially when the degree of freedom of the $t$ distribution is small.   Thus, when there is evidence of severe violation of the distribution assumption, caution is needed when using our likelihood based estimation approach.  As an alternative, we may resort to estimation equations based methods that are more robust, which is best left for a future project.

\begin{table}
	\begin{center}
		\caption{Simulation results for Study 1. All the results have been multiplied by 100 and the corresponding standard errors are in subscripts.}
		\label{tb:3} 	
		\resizebox{\textwidth}{36mm}{
			\begin{tabular}{cccccccccccc}
				
				\toprule[1.2pt]
				\multicolumn{12}{c}{Multivariate normal distribution} \\	 
				& $\beta_{0}$ & $\beta_{1}$ & $\beta_{2}$ &$\alpha_{0}$& $\alpha_{1}$&$\alpha_{2}$&$\lambda_{0}$&$\lambda_{1}$&$\lambda_{2}$&$\parallel \widehat{\mu} \parallel$&$\parallel\widehat{\Sigma}\parallel$ \\	
				\specialrule{0.1em}{3pt}{3pt}
				
				n=50& $7.23 (8.79)$ &$3.98 (5.00)$ & $3.78 (4.57)$ &$4.28 (5.16) $&$6.85 (8.73)$ &$15.17 (17.82)$&$12.10 (14.93)$ &$7.23 (8.86)$ & $6.31 (7.98)$ &$7.90 (4.50)$  &$8.99 (4.91)$\\
				n=100& $4.70 (6.12)$ &$2.33 (2.88)$ & $2.16 (2.68)$ &$2.62 (3.34) $&$5.04 (6.00)$ &$9.36 (11.43)$&$7.28 (9.37)$ &$5.05 (6.66)$ & $4.30 (5.31)$ &$5.35 (3.71)$  &$5.93 (3.02)$\\
				n=200&$3.32 (4.10)$ &$1.66 (2.04)$ & $1.51 (1.85)$ &$1.90 (2.27) $&$3.33 (4.02)$ &$6.40 (7.74)$&$4.93 (6.06)$ &$3.25 (4.05)$ & $3.30 (4.03)$ &$3.74 (2.17)$  &$3.99 (1.94)$\\
				\specialrule{0.1em}{3pt}{3pt}
				\multicolumn{12}{c}{Multivariate t distribution (df=10)}\\
				& $\beta_{0}$ & $\beta_{1}$ & $\beta_{2}$ &$\alpha_{0}$& $\alpha_{1}$&$\alpha_{2}$&$\lambda_{0}$&$\lambda_{1}$&$\lambda_{2}$&$\parallel \widehat{\mu} \parallel$&$\parallel\widehat{\Sigma}\parallel$ \\	
				\specialrule{0.1em}{3pt}{3pt}
				n=50&$8.85 (11.01)$ &$4.56 (5.52)$ & $4.00 (5.04)$ &$4.48 (5.45) $&$7.86 (9.56)$ &$17.90 (21.81)$&$23.05 (17.46)$ &$7.21 (9.06)$ & $7.89 (10.08)$ &$9.73 (5.81)$  &$15.37 (9.84)$\\		
				n=100&$5.31 (6.50)$ &$2.44 (2.94)$ & $2.06 (2.64)$ &$2.95 (3.87) $&$4.44 (5.53)$ &$10.29 (12.68)$&$21.48 (12.48)$ &$5.48 (7.14)$ & $5.04 (6.32)$ &$6.05 (3.64)$  &$13.37 (8.21)$\\
				n=200&$3.85 (5.01)$ &$1.91 (2.29)$ & $1.69 (2.16)$ &$1.88 (2.34) $&$3.32 (4.22)$ &$6.64 (8.27)$&$19.60 (7.71)$ &$3.85 (4.70)$ & $4.12 (5.01)$ &$4.42 (2.90)$  &$10.33 (5.00)$\\
				\specialrule{0.1em}{3pt}{3pt}
				\multicolumn{12}{c}{Multivariate t distribution (df=5)}\\
				& $\beta_{0}$ & $\beta_{1}$ & $\beta_{2}$ &$\alpha_{0}$& $\alpha_{1}$&$\alpha_{2}$&$\lambda_{0}$&$\lambda_{1}$&$\lambda_{2}$&$\parallel \widehat{\mu} \parallel$&$\parallel\widehat{\Sigma}\parallel$ \\	
				\specialrule{0.1em}{3pt}{3pt}
				n=50&$10.53 (12.90)$ &$5.77 (7.13)$ & $4.24 (5.53)$ &$4.91 (5.84) $&$8.37 (10.41)$ &$21.20 (25.99)$&$47.23 (23.89)$ &$10.65 (13.30)$ & $8.80 (10.95)$ &$11.33 (6.75)$  &$31.63 (22.32)$\\		
				n=100&$5.76 (7.42)$ &$3.01 (3.81)$ & $2.37 (2.90)$ &$4.02 (5.01) $&$6.24 (7.81)$ &$14.34 (18.58)$&$47.13 (15.34)$ &$7.30 (8.84)$ & $6.31 (7.90)$ &$6.68 (4.29)$  &$29.63 (13.67)$\\
				n=200&$4.30 (5.40)$ &$1.96 (2.50)$ & $1.94 (2.42)$ &$2.98 (3.52) $&$4.72 (6.29)$ &$10.89 (13.67)$&$48.75 (11.69)$ &$5.05 (6.35)$ & $4.82 (5.97)$ &$4.88 (2.94)$  &$28.84 (10.39)$\\
				\specialrule{0.1em}{3pt}{3pt}
				\multicolumn{12}{c}{Multivariate t distribution (df=3)}\\
				& $\beta_{0}$ & $\beta_{1}$ & $\beta_{2}$ &$\alpha_{0}$& $\alpha_{1}$&$\alpha_{2}$&$\lambda_{0}$&$\lambda_{1}$&$\lambda_{2}$&$\parallel \widehat{\mu} \parallel$&$\parallel\widehat{\Sigma}\parallel$ \\	
				\specialrule{0.1em}{3pt}{3pt}
				n=50&$10.66 (13.14)$ &$4.31 (5.73)$ & $4.50 (5.72)$ &$6.34 (8.16) $&$11.50 (14.78)$ &$26.34 (32.72)$&$86.53 (33.95)$ &$12.48 (16.06)$ & $11.77 (14.31)$ &$11.56 (7.25)$  &$71.82 (51.60)$\\			
				n=100&$8.95 (11.61)$ &$3.84 (4.94)$ & $3.28 (4.26)$ &$4.61 (5.76) $&$8.41 (10.42)$ &$14.43 (18.47)$&$92.31 (29.28)$ &$8.66 (11.03)$ & $7.31 (9.54)$ &$9.87 (7.06)$  &$80.20 (49.47)$\\
				n=200&$4.65 (6.01)$ &$2.28 (2.87)$ & $2.37 (2.94)$ &$4.50 (5.58) $&$6.61 (8.61)$ &$13.90 (17.56)$&$98.66 (22.32)$ &$7.79 (9.81)$ & $6.65 (8.39)$ &$5.42 (3.38)$  &$83.86 (41.50)$\\
				\bottomrule[1.2pt]	
		\end{tabular}}	
	\end{center}
\end{table}

\textit{Study 2}. In this study, we compare the estimation accuracy of the proposed approach with the linear mixed-effects model and generalized estimating equations (GEE) under four different scenarios.
We set sample sizes as $n=50$, $100$ or $200$ and generate clustered data from the   model
$y_{ijk}=\beta_{0}+x_{ijk,1}\beta_{1}+x_{ijk,2}\beta_{2}+\varepsilon_{ijk}$, 
where $i=1, \ldots, n; j=1, \ldots, m_i; k=1, \ldots, k_{ij} $. For all the cases considered here, we set $\bm{\beta}_{0}=(1,-0.5,0.5)^{\prime}$ and generate the covariate $\mathbf{x}_{ijk}=\left(x_{ijk, 1}, x_{ijk, 2}\right)^{\prime}$  from a bivariate normal distribution with mean $\mathbf{0}$, marginal variance $1$ and correlation $0.5$.

In case I, we  generate the error from a linear mixed-effects model by taking $\varepsilon_{ijk}=u_{i}+v_{ij}+\epsilon_{ijk}$, where $u_{i}, v_{ij}$ and $\epsilon_{ijk}$ are independent and all follow the standard normal distribution.
To be consistently comparable,  we consider balanced data sets where $m_i=2$ and $k_{ij}=5$ for all $i$ and $j$. In this case, the linear mixed-effects model and our approach estimate the same covariance structure, while the covariance model for  GEE is misspecified.

In case II, we generate data similar to the unbalanced classroom data where $m_i$ is uniform on $\{2, 3, 4\}$ and $k_{ij}-1\sim$ binomial$(4,0.8)$. We  generate  $\varepsilon_{ijk}$ from our model (2) in the paper with $\gamma_{ikk'}=\alpha_{0}+\alpha_{1}w_{ikk'}+ \alpha_{2}|t_{ik}-t_{ik'}|$ with $t_{ik} $ following the uniform distribution on $[0,1]$. Here $w_{ikk'}=1$ if $k$th and $k'$th observations are in the same group and $w_{ikk'}=0$ otherwise. For simplicity, we set $\log(\sigma_{ijk}^2)=\lambda_{0}$, and set $\bm{\alpha}_0=(0.2, 0.3, -0.2)^\prime$ and $\lambda_{0}=1$.  In this case, the covariance model for the linear mixed-effects model and GEE are misspecified.

In case III, we take the same setting on $m_i$ and $k_{ij}$ as in case II and let $\varepsilon_{ijk}=u_{i}+v_{ij}+\epsilon_{ijk}$, where $u_{i}$ and $v_{ij}$ both follow the standard normal distribution as in case I, but with $\epsilon_{ijk}$ having an  autoregressive AR(1) correlation structure in the sense that $\operatorname{corr}(\epsilon_s, \epsilon_{s'})=0.85\rho^{|s-s'|}$ for $\rho=0.6$ corresponding to moderately correlated errors. In this case, all approaches use misspecified models for the covariance.

In case IV, we take the same setting as in case III, but with $\epsilon_{ijk}$ generated from an ARCH(1) process specified as $\sigma_{ij(k+1)}^2=1+0.5\epsilon_{ijk}^2$. In this case, all approaches use misspecified models for the covariance.

We use the following error measurements to compare the performance of the two competing methods
$$\parallel \widehat{\mu} \parallel=\frac{1}{n}\sum_{a=1}^{n}\parallel  \mathbf{x}_{a}^{\prime}(\widehat{\bm{\beta}}-\bm{\beta}_0) \parallel \quad \text{and} \quad  \parallel \widehat{\Sigma} \parallel=\frac{1}{n}\sum_{a=1}^{n}\parallel \widehat{\bm{\Sigma}}_a-\bm{\Sigma}_{0a} \parallel,$$
which are the $\ell_2$-norm of the difference between the estimated mean and the true mean, and the Frobenius norm of the difference between the true covariance matrix and its estimate.
Table \ref{tb:6} presents the average norms with their standard errors over 1000 replicates. In case I where data are generated from the linear mixed-effects model, our approach performs almost the same as the mixed-effects model approach. In case II when the data is generated from our model, it is not surprising that our method performs much better. In case III and case IV when all methods are misspecified, our method still performs well, especially for estimating the covariance matrices.  The simulation results together with our real data examples clearly demonstrate that the proposed approach is more adaptive and flexible for capturing the correlations of correlated data, even when the model is misspecified.

We repeat the experiment with the same setting but with the model errors $\epsilon_{ijk}$ generated from a $t$ distribution with 3 degrees of freedom.  The results are reported in Table \ref{tb:6a}.  We summarize the impact of model misspecification as follows:
\begin{enumerate}
	\item The estimation of the mean model is generally accurate for all methods, as expected; 
	\item Larger errors in estimating the covariance models are present when the respective models are misspecified, e.g., in Case II for the linear mixed-effects model;  
	\item All methods performed worse when the models are misspecified, and as seen in Table \ref{tb:6}, the proposed method is more robust to the misspecification of the covariance models, even in Case IV with an ARCH setting; this finding re-affirms that  the proposed approach is more adaptive and flexible for capturing the correlations of correlated data, which is the primary focus of this work;
	\item When the Gaussian  assumption on $\epsilon_{ijk}$ is violated as in Table 8, the performance of our method in estimating the covariance underperforms in general in comparison to the linear mixed-effects model. This is perhaps related to the fact that our estimator of the matrix log-correlation parameters is on the logarithmic scale In case this violation is of concern, we recommend the use of robust alternatives.
\end{enumerate}

\begin{table}[ht]
	\begin{center} 
		\caption{The mean errors and their standard errors (in parentheses) for the method proposed, the linear mixed-effects model and GEE with an AR(1) working correlation when the errors $\epsilon_{ijk}$ were generated from Gaussian distribution.}
		\label{tb:6} 
		\resizebox{\textwidth}{34mm}{	
			\begin{tabular}{cccccccccc} 
				
				\toprule[1.2pt]
				\multirow{2}{*}{Case}	& \multirow{2}{*}{Size}	& \multicolumn{2}{c}{Proposed model} & & \multicolumn{2}{c}{Mixed-effects model}& & \multicolumn{2}{c}{GEE (ar1)}\\
				
				\cline{3-4} \cline{6-7} \cline{9-10}
				\specialrule{0.0em}{3pt}{3pt}	
				&	& $\parallel \widehat{\mu} \parallel$  &  $ \parallel \widehat{\bm{\Sigma}} \parallel$  & &$ \parallel\widehat{\mu} \parallel$  &  $\parallel\widehat{\bm{\Sigma}} \parallel$& &$ \parallel\widehat{\mu} \parallel$  &  $\parallel\widehat{\bm{\Sigma}} \parallel$ \\	
				
				\specialrule{0.05em}{3pt}{3pt}
				\multirow{3}{*}{Case I}	 &$n=50$& $0.153 (0.096)$ & $ 0.288 (0.172)$& &$0.153 (0.096)$  &$0.287 (0.167)$& &$0.166 (0.103)$  &$0.552 (0.160)$  \\
				&$n=100$& $0.115 (0.075)$ & $ 0.201 (0.107)$& &$0.115 (0.075)$  &$0.199 (0.102)$ & &$0.120 (0.068)$  &$0.503 (0.090)$ \\	
				&$n=200$& $0.084 (0.058)$ & $ 0.132 (0.075)$& &$0.084 (0.058)$  &$0.130 (0.075)$& &$0.087 (0.053)$  &$0.470 (0.044)$  \\
				\specialrule{0.05em}{3pt}{3pt}
				\multirow{3}{*}{Case II}	 &$n=50$& $0.159 (0.105)$ & $ 0.227 (0.149)$& &$0.166 (0.108)$  &$0.993 (0.180)$  & &$0.169 (0.113)$  &$0.670 (0.272)$\\
				&$n=100$& $0.101 (0.066)$ & $ 0.131 (0.082)$& &$0.112 (0.088)$  &$0.674 (0.085)$& &$0.121 (0.087)$  &$0.607 (0.172)$  \\
				&$n=200$& $0.078 (0.043)$ & $ 0.101 (0.053)$& &$0.084 (0.056)$  &$0.560 (0.067)$& &$0.083 (0.052)$  &$0.551 (0.106)$  \\
				\specialrule{0.05em}{3pt}{3pt}
				\multirow{3}{*}{Case III}	 &$n=50$& $0.138 (0.111)$ & $ 0.371 (0.143)$& &$0.141 (0.110)$  &$1.228 (0.370)$& &$0.140 (0.089)$  &$0.512 (0.124)$  \\
				&$n=100$& $0.098 (0.049)$ & $ 0.309 (0.081)$& &$0.103 (0.056)$  &$1.151 (0.303)$& &$0.093 (0.068)$  &$0.468 (0.086)$  \\
				&$n=200$& $0.080 (0.041)$ & $ 0.282 (0.078)$& &$0.083 (0.052)$  &$1.090 (0.206)$& &$0.086 (0.046)$  &$0.436 (0.057)$  \\  
				\specialrule{0.05em}{3pt}{3pt}	 
				\multirow{3}{*}{Case IV}	 &$n=50$& $0.146 (0.079)$ & $ 0.421 (0.224)$& &$0.145 (0.079)$  &$0.681 (0.159)$& &$0.195 (0.088)$  &$0.774 (0.147)$  \\
				&$n=100$& $0.103 (0.065)$ & $ 0.380 (0.122)$& &$0.102 (0.063)$  &$0.691 (0.122)$& &$0.104 (0.066)$  &$0.755 (0.107)$  \\
				&$n=200$& $0.076 (0.046)$ & $ 0.321 (0.070)$& &$0.075 (0.045)$  &$0.687 (0.089)$& &$0.079 (0.046)$  &$0.721 (0.068)$  \\  	
				\bottomrule[1.2pt]
				
		\end{tabular}}
		
	\end{center}
\end{table}

\begin{table}[ht]
	\begin{center} 
		\caption{The mean errors and their standard errors (in parentheses) for the method proposed, the linear mixed-effects model and GEE when the errors $\epsilon_{ijk}$ were generated from $t$ distribution with 3 degrees of freedom.} 
		\label{tb:6a} 
		\resizebox{\textwidth}{34mm}{	
			\begin{tabular}{cccccccccc} 
				
				\toprule[1.2pt]
				\multirow{2}{*}{Case}	& \multirow{2}{*}{Size}	& \multicolumn{2}{c}{Proposed model} & & \multicolumn{2}{c}{Mixed-effects model}& & \multicolumn{2}{c}{GEE (ar1)}\\
				
				\cline{3-4} \cline{6-7} \cline{9-10}
				\specialrule{0.0em}{3pt}{3pt}	
				&	& $\parallel \widehat{\mu} \parallel$  &  $ \parallel \widehat{\bm{\Sigma}} \parallel$  & &$ \parallel\widehat{\mu} \parallel$  &  $\parallel\widehat{\bm{\Sigma}} \parallel$& &$ \parallel\widehat{\mu} \parallel$  &  $\parallel\widehat{\bm{\Sigma}} \parallel$ \\	
				
				\specialrule{0.05em}{3pt}{3pt}
				\multirow{3}{*}{Case I}	 &$n=50$& 0.323 (0.232) & 3.848 (3.479)&& 0.323 (0.232)& 1.018 (0.912) && 0.315 (0.222)& 4.095 (3.547)  \\
				&$n=100$& 0.168 (0.104) &  2.859 (1.076)& &0.168 (0.104)  &0.698 (0.249) & &0.183 (0.100) &3.124  (1.094) \\	
				&$n=200$& 0.107 (0.059) & 3.137 (1.513) & & 0.107 (0.059) & 0.759 (0.413) & & 0.106 (0.061)& 3.344 (1.371) \\
				\specialrule{0.05em}{3pt}{3pt}
				\multirow{3}{*}{Case II}	 &$n=50$& 0.304 (0.266) &  4.231 (2.075)& &0.294 (0.254)  &1.180 (0.228)  & &0.270 (0.224)  &2.874 (1.093)\\
				&$n=100$& 0.191 (0.131) &  3.499 (1.968)& &0.182 (0.118)  &1.101 (0.444)& &0.184 (0.129)  &3.592 (2.159)  \\
				&$n=200$& 0.143 (0.080) & 2.639 (1.292)& &  0.137 (0.054) & 1.055 (0.133)& & 0.171 (0.058) & 3.680 (1.384) \\
				\specialrule{0.05em}{3pt}{3pt}
				\multirow{3}{*}{Case III}	 &$n=50$&  0.215 (0.269) &2.940 (1.214)&&0.213 (0.235)&2.394 (0.891) &&0.230 (0.237)& 3.217 (1.441) \\
				&$n=100$& 0.223 (0.157)& 2.612 (0.712) &&0.232 (0.149) &1.373 (0.658)&& 0.228 (0.161) &3.102 (0.757)  \\
				&$n=200$& 0.108 (0.072)& 2.916 (1.091) &&0.130 (0.063) &1.345 (0.148) &&0.112 (0.053)& 2.993 (0.803) \\  
				\specialrule{0.05em}{3pt}{3pt}	 
				\multirow{3}{*}{Case IV}	 &$n=50$& 0.433 (0.198) &1.838 (0.977) &&0.434 (0.102) &0.673 (0.466) &&0.462 (0.090) &2.809 (1.778) \\
				&$n=100$& 0.140 (0.113) &2.229 (0.840) &&0.143 (0.122)& 0.692 (0.355) &&0.151 (0.125)& 3.178 (1.265)   \\
				&$n=200$& 0.117 (0.052) &2.020 (0.266)&& 0.118 (0.053)& 0.770 (0.357) &&0.112 (0.081) &2.517 (0.207)  \\  	
				\bottomrule[1.2pt]
				
		\end{tabular}}
		
	\end{center}
\end{table}

\textit{Study 3}. This study is for demonstrating the use of our approach for inference when testing the presence of random effects. 

The data are generated in a way similar to case I in Study 2 with sample size $n=50$ where $u_{i}\sim N(0, \sigma_u^2), v_{ij} \sim N(0, \sigma_v^2)$ and $\epsilon_{ijk}\sim N(0,1)$ are independent. That is, the data are generated from the linear mixed-effects model. For each simulation setup, we repeat the experiments 200 times. 
We are interested in testing various hypotheses regarding the magnitude of $\sigma^2_u$ and $\sigma_v^2$. A recommended test for the linear mixed-effects model is the LRT as implemented in ${\verb+lme4+}$ whose performance we now examine. As we have shown in Example 2 in Section 2.2 of the paper, a test regarding $\sigma^2_u$ or $\sigma_v^2$ corresponds to a test regarding $\alpha_0$ and $\alpha_1$ which are defined therein. A trivial application of Corollary 1 in the paper suggests that the LRT via our approach is asymptotically $\chi^2$ distributed with its degrees of freedom determined by the number of constraints under the null hypothesis. For the mixed-effects model approach however, statistical inference is not easy due to the absence of analytical results for the null distributions of parameter estimates \citep{Bates:etal:2014}. When testing the existence of random effects for example, the true parameter value is at the boundary of the support of the variance parameter, making the asymptotic distribution of the LRT not generally tractable.

With the above discussion in mind, we examine various scenarios to test the magnitude and the existence of the random effects. 
First, we test $H_0: \sigma_{u}^2=\sigma_{v}^2=1$, where $\sigma_u^2$ and $\sigma_v^2$ are in the interior of their respective parameter space. When data are generated under this null, we know that the LRT will follow $\chi^2_2$ asymptotically, regardless whether the mixed-effects model or our model is used. This is confirmed in Figure \ref{fig5} (a) and (e) when 
the quantile-quantile (Q-Q) plots of the LRT statistics versus the $\chi^2_2$ distribution is examined. 
In the next three settings, we examine the existence of the random effects by testing the following hypotheses: 

\begin{itemize}
	\item
	$H_0: \sigma_{v}^2=0$ for the linear mixed-effects model, or equivalent  $H_0:\alpha_{1}=0$; 
	
	\item 
	$H_0: \sigma_{u}^2=0$ for the linear mixed-effects model, or equivalent $H_0: \alpha_{0}=0$;   
	\item  $H_0: \sigma_{u}^2=\sigma_{v}^2=0$ for the linear mixed-effects model, or equivalent $H_0: \alpha_{0}=\alpha_{1}=0. $ 
\end{itemize}

The Q-Q plots of the LRT statistics versus the corresponding $\chi^2$ distributions under the three hypotheses above are plotted in the second to the last columns in 
Figure \ref{fig5}.   
It is clear that while the variance components are at the boundary of their parameter spaces, a substantial discrepancy exists between the empirical distribution of the test statistics and the reference distribution for the linear mixed-effects model. In contrast, our approach remains valid. Indeed, \citet{Baayen:2008} commented, in a different context, that though the LRT is often chosen as the test statistic to use for these tests in linear mixed-effects models,  the asymptotic reference distribution of a $\chi^2$ does not apply,  giving rise to mis-calibrated $p$-values for variance parameters if these $p$-values are  computed using the $\chi^2$ reference distribution.

We close with a remark that the  limiting distribution of the LRT for the mixed-effects model in this case is known non-standard; see  \citet{Self:Liang:1987} and \citet{ChenLiang:2010}.  
The analytical procedure is case-by-case for testing various random effects, and its finite sample approximation is known inaccurate
\citep{Crainiceanu:2004}, rendering substantial barriers for testing the random effects in the mixed-effects model in practice.  Our approach, in contrast, provides a simple and justified solution. 

\begin{figure}[htbp]
	\centering
	\subfigure[$\chi_2^2$]{
		\includegraphics[width=3cm]{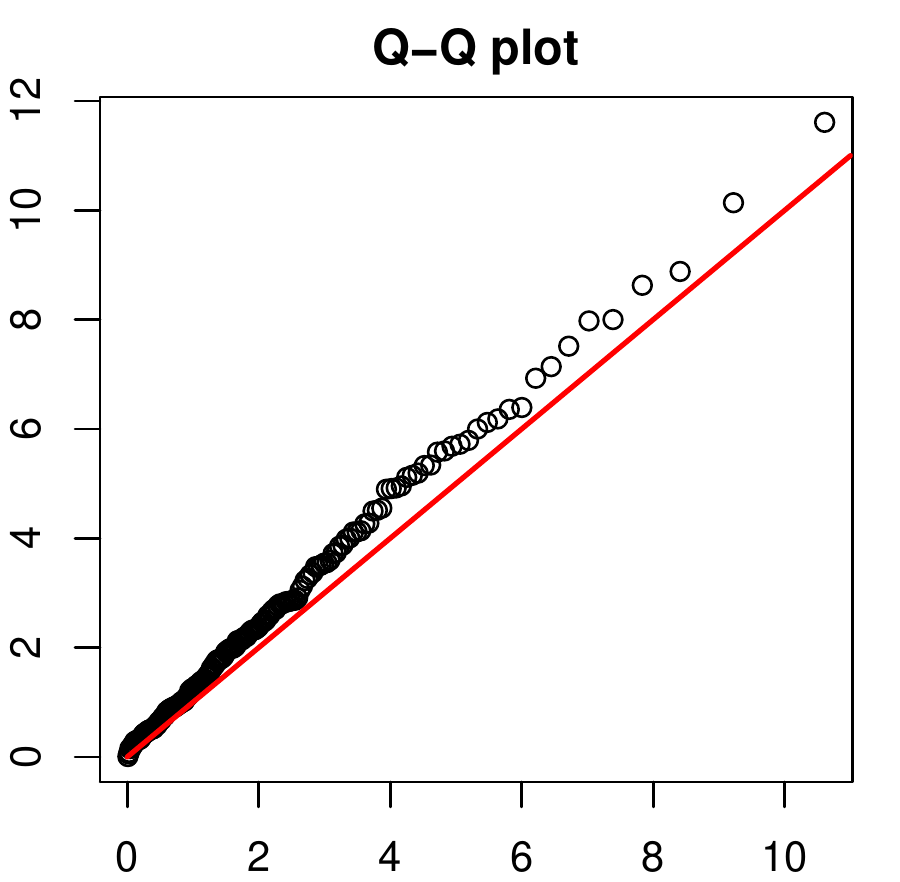}
	}	\quad
	\subfigure[$\chi_1^2$]{
		\includegraphics[width=3cm]{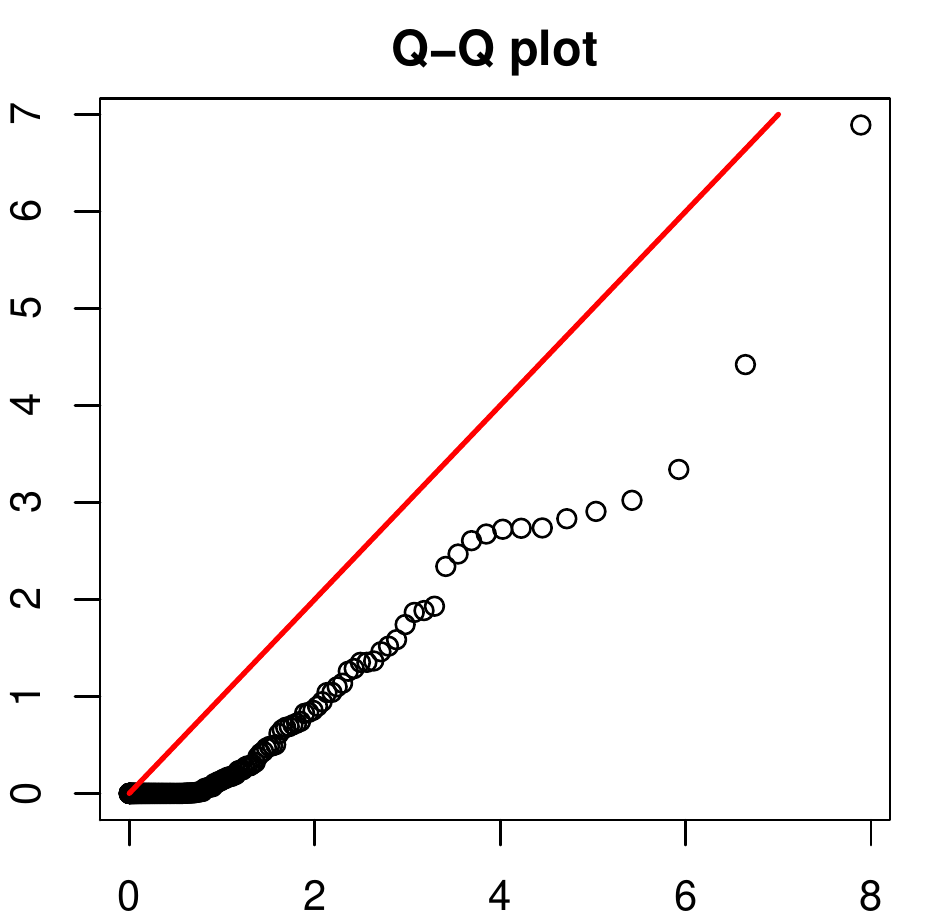}
	}\quad
	\subfigure[$\chi_1^2$]{
		\includegraphics[width=3cm]{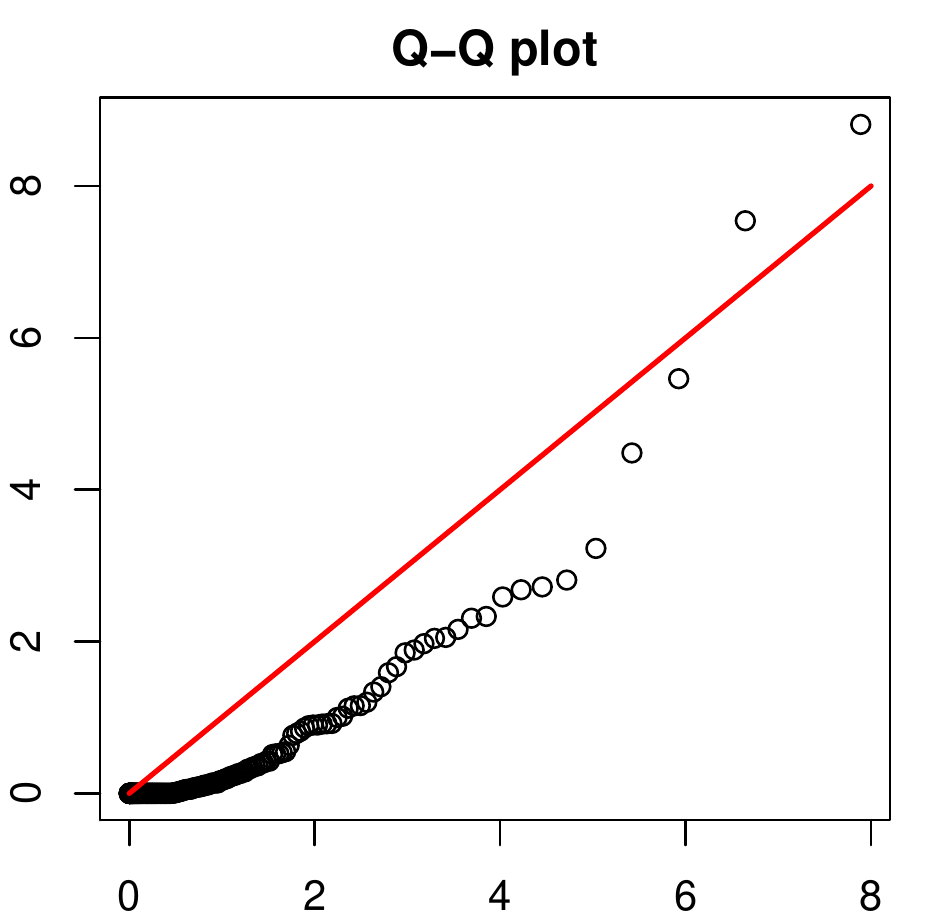}
	}\quad
	\subfigure[$\chi_2^2$]{
		\includegraphics[width=3cm]{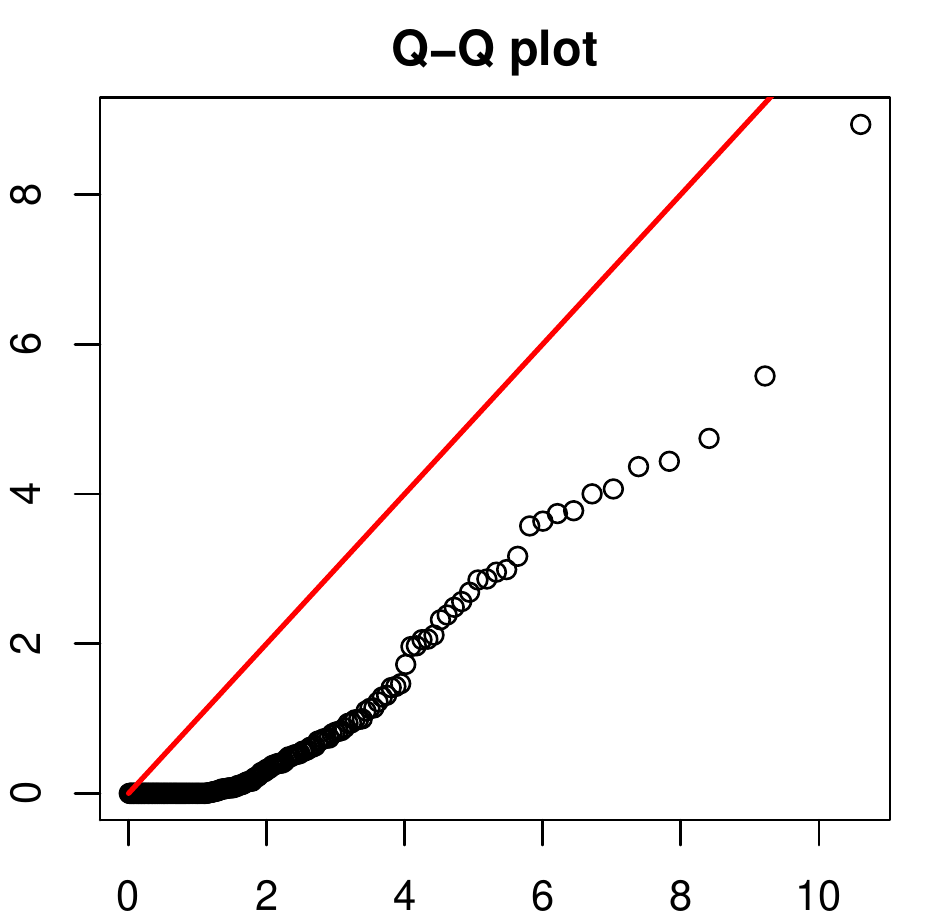}
	}
	\\
	\subfigure[$\chi_2^2$]{
		\includegraphics[width=3cm]{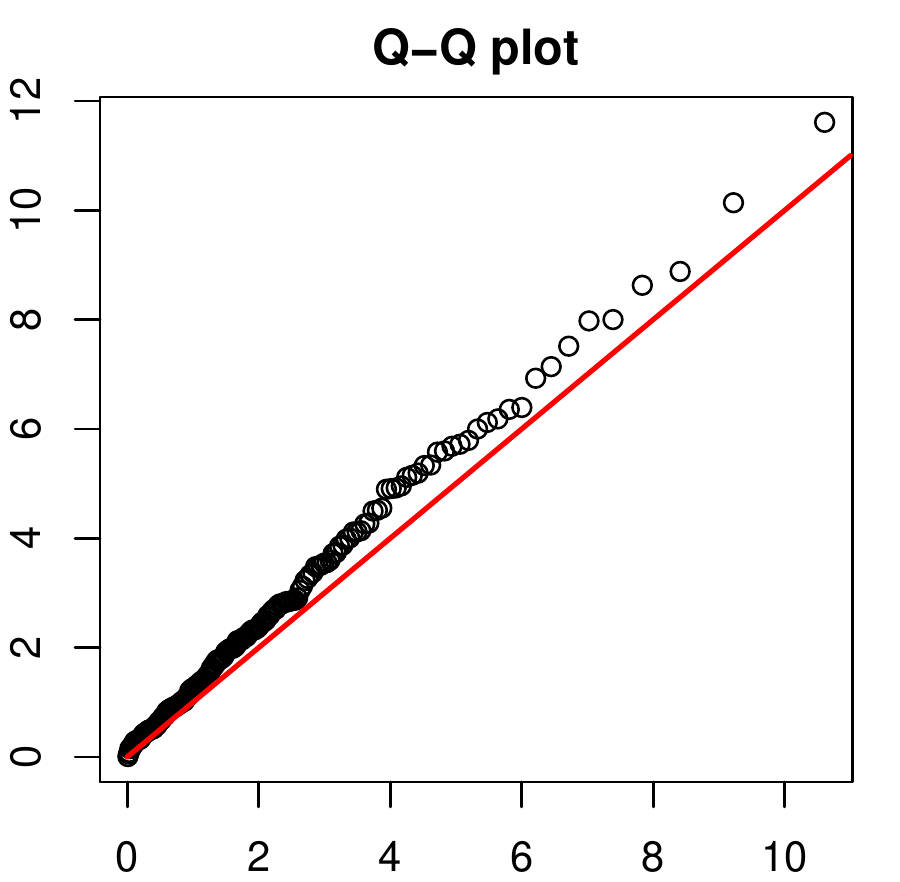}
	}
	\subfigure[$\chi_1^2$]{
		\includegraphics[width=3cm]{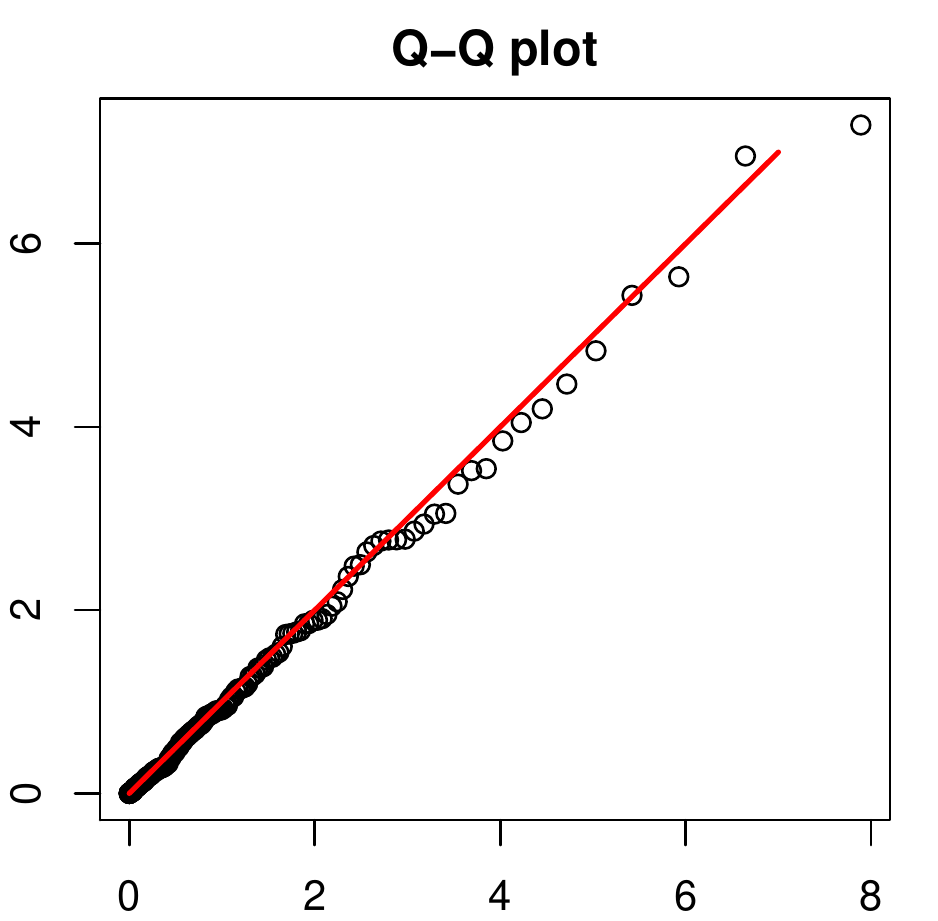}
	}
	\quad
	\subfigure[$\chi_1^2$]{
		\includegraphics[width=3cm]{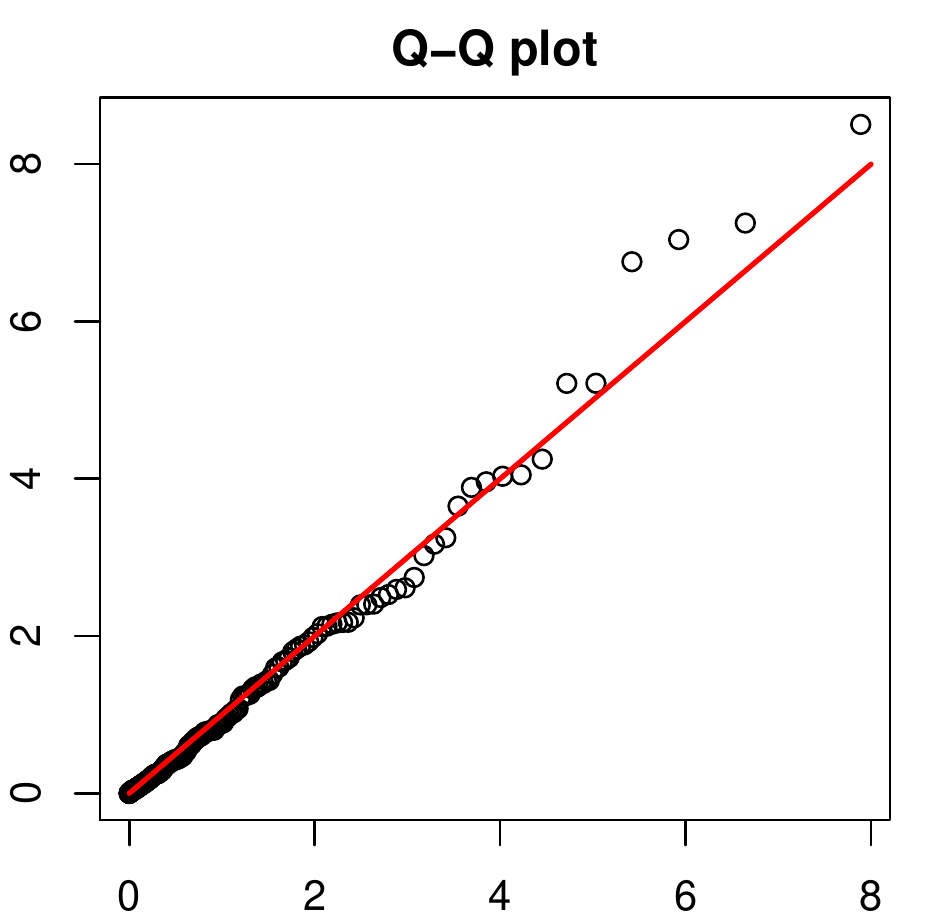}
	}
	\quad
	\subfigure[$\chi_2^2$]{
		\includegraphics[width=3cm]{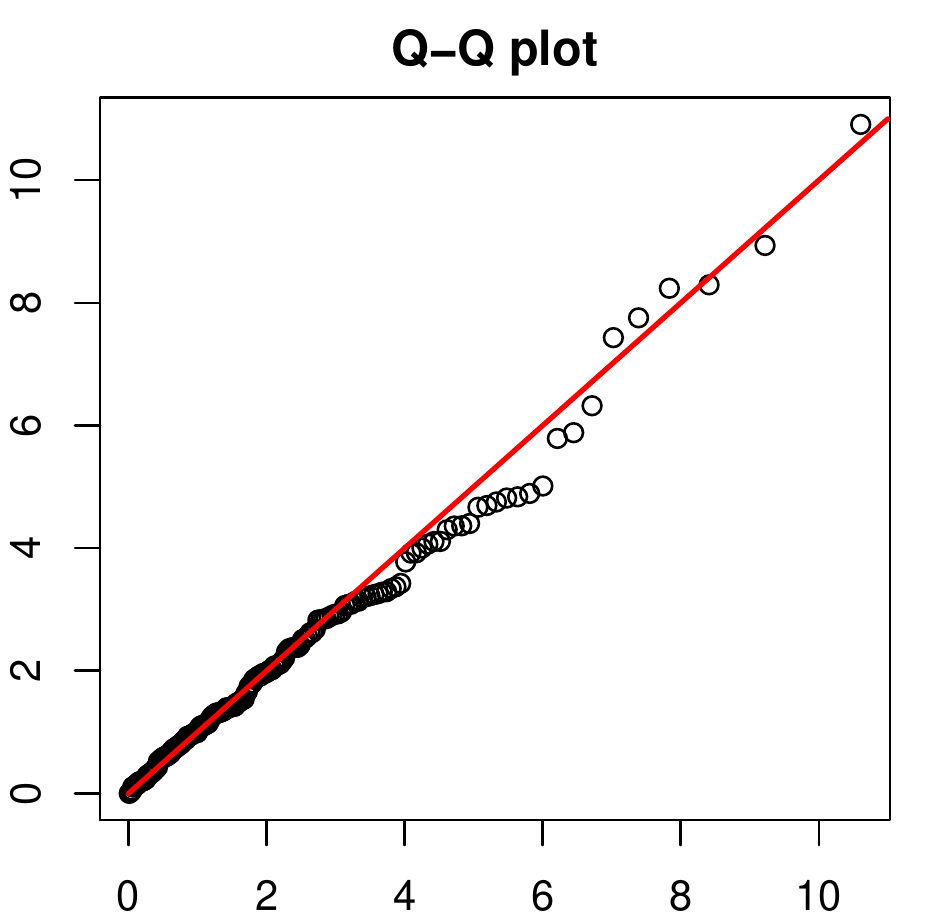}
	}
	\\
	\caption{ The Q-Q plots of the LRT statistics versus the corresponding $\chi^2$ distribution for four tests. First row: the linear mixed-effects model; Second row: the proposed approach.
		First column: $H_0: \sigma_u^2=\sigma_v^2=1$; Second column: $H_0: \sigma_v^2=0$; Third column: $H_0: \sigma_u^2=0$; Fourth column: $H_0: \sigma_u^2=\sigma_v^2=0$. The red lines are the 45 degree reference lines.}
	\label{fig5}
\end{figure}

\end{supplement}



\end{document}